\newlength{\dinwidth}
\newlength{\dinmargin}
\newcommand{\R}{\mathbb{R}}
\newcommand{\C}{\mathbb{C}}
\newcommand{\Z}{\mathbb{Z}}
\newcommand{\B}{\mathbb{B}}
\newcommand{\z}{\mathbf{z}}
\newcommand{\ud}{\,\mathrm{d}}
\newcommand{\Rs}{\mathcal{R}}
\newtheorem{theorem}{Theorem}[section]
\newtheorem{proposition}{Proposition}[section]
\newtheorem{corollary}{Corollary}[section]
\newtheorem{remark}{Remark}[section]
\newtheorem{lemma}{Lemma}[section]
\newtheorem{example}{Example}[section]
\begin{document}

\def\theequation {\thesection.\arabic{equation}}
\makeatletter\@addtoreset {equation}{section}\makeatother

\title{New degeneration of Fay's identity and\\  its application to integrable systems}

\author{
C.~Kalla\footnote{e-mail: Caroline.Kalla@u-bourgogne.fr;
address: Institut de Math\'ematiques de Bourgogne,
		Universit\'e de Bourgogne, 9 avenue Alain Savary, 21078 Dijon, France}}

\maketitle

\begin{abstract}

In this paper we prove a new degenerated version of Fay's trisecant identity. The new identity is applied to construct new algebro-geometric solutions of the multi-component nonlinear Schrödinger equation.
This approach also
provides an independent derivation of known algebro-geometric solutions to the 
Davey-Stewartson equations. 
\end{abstract}

\tableofcontents

\section{Introduction}

The well known trisecant identity discovered by Fay is a far-reaching generalization of the addition theorem
for elliptic theta functions (see \cite{Fay}). This identity states that, for any points $a,\,b,\,c,\,d$ on a compact Riemann surface of genus $g>0$, and for any $\z\in\C^{g}$, there exist constants $c_{1},\,c_{2}$ and $c_{3}$ such that
\begin{multline}
c_{1}\,\Theta\left(\z+\int^{a}_{c}\right)\Theta\left(\z+\int^{d}_{b}\right)
+c_{2}\,\Theta\left(\z+\int^{a}_{b}\right)\Theta\left(\z+\int^{d}_{c}\right)
=c_{3}\,\Theta(\z)\,\Theta\left(\z+\int^{a}_{c}+\int^{d}_{b}\right), \label{Fay intro}
\end{multline}
where $\Theta$ is the multi-dimensional theta function (\ref{2.2}); here and below we use the notation $\int^{b}_{a}$ for the Abel map (\ref{abel}) between a and b. This identity plays an important role in various domains of mathematics, as for example in the theory of Jacobian varieties \cite{Arb}, in conformal field theory \cite{Raina}, and in operator theory \cite{McC}. Moreover, as it was realized by Mumford, theta-functional 
solutions of certain integrable equations as Korteweg-de Vries (KdV), 
Kadomtsev-Petviashvili (KP), or Sine-Gordon (SG), may be derived from 
Fay's trisecant identity and its degenerations 
(see  \cite{Mum}).  

In the present paper we apply  Mumford's approach to the Davey-Stewartson equations and the multi-component nonlinear Schrödinger equation.

The first main result of this paper is a new degeneration of Fay's identity (\ref{Fay intro}).
This new identity holds for two distinct points $a,b$ on a compact Riemann surface of genus $g>0$, and any $\z\in\C^{g}$:
\begin{equation}
D'_{a}\ln\frac{\Theta(\z+\int^{b}_{a})}{\Theta(\z)}+\,D_{a}^{2}\ln\frac{\Theta(\z+\int^{b}_{a})}{\Theta(\z)}+\Big(D_{a}\ln\frac{\Theta(\z+\int^{b}_{a})}{\Theta(\z)}-K_{1}\Big)^{2}+2\,D^{2}_{a}\ln\Theta(\z)+\,K_{2}=0,  \label{deg Fay intro}
\end{equation}
where $K_{1}$ and $K_{2}$ are scalars independent of $\z$ but dependent on the points $a$ and $b$; here $D_{a}$ and $D'_{a}$ denote operators of directional derivatives along the vectors $\mathbf{V}_{a}$ and $\mathbf{W}_{a}$ (\ref{exp hol diff}). In particular, this identity implies that the following function of the variables $x$ and $t$
\begin{equation}
\psi(x,t)=A\,\frac{\Theta(\mathbf{Z}-\mathbf{d}+\int_{a}^{b})}{\Theta(\mathbf{Z}-\mathbf{d})}\,\exp\left\{ \mathrm{i}\,(-K_{1}\,x+\,K_{2}\,t)\right\}, \label{psi lin}
\end{equation}
where $\mathbf{Z}=\mathrm{i}\mathbf{V}_{a}\,x+\mathrm{i}\mathbf{W}_{a}\,t$ and $A\in\C, \mathbf{d}\in\C^{g}$ are arbitrary constants, is a solution of the linear Schrödinger equation
\begin{equation}
\mathrm{i}\,\frac{\partial \psi}{\partial t}+\frac{\partial^{2} \psi}{\partial x^{2}}+2\,u\,\psi =0,  \label{lin NLS}
\end{equation}
with the potential $u(x,t)=D^{2}_{a}\ln\Theta(\mathbf{Z})$. When this potential is related to the function $\psi$ by $u(x,t)=\rho\,|\psi|^{2}$, with  $\rho=\pm 1$, the function $\psi$ (\ref{psi lin}) becomes a solution of the nonlinear Schrödinger equation (NLS)
\begin{equation}	
\mathrm{i}\,\frac{\partial \psi}{\partial t}+\frac{\partial^{2} \psi}{\partial x^{2}}+2\rho\,|\psi|^{2}\,\psi =0.  \label{NLS}
\end{equation}
This is the starting point of our construction of algebro-geometric solutions of the Davey-Stewartson equations and the multi-component nonlinear Schrödinger equation. The nonlinear Schrödinger equation (\ref{NLS}) is a famous nonlinear dispersive partial differential equation with 
many applications, e.g.\ in hydrodynamics (deep water waves), plasma physics
and nonlinear fiber optics. Integrability of this equation was 
established by Zakharov and Shabat in \cite{ZSh}. Algebro-geometric solutions of (\ref{NLS}) were found by Its in \cite{Its}; the geometric theory of these solutions was developed by Previato \cite{P}.

There exist various ways to generalize the NLS equation.
The first is to increase the number of spatial dimensions to two. This leads to the Davey-Stewartson equations (DS), 
\begin{align}		
\mathrm{i}\,\psi_{t}+\psi_{xx}-\alpha^{2}\, \psi_{yy}+2\,(\Phi+\rho\,|\psi|^{2})\,\psi &=0,     \nonumber  \\	
\Phi_{xx}+\alpha^{2}\, \Phi_{yy}+2\rho\,|\psi|^{2}_{xx} & =0,  \label{1.2} 
\end{align}
where $\alpha=\mathrm{i},1$ and $\rho=\pm1$; $\psi(x,y,t)$ and $\Phi(x,y,t)$ are functions of the real variables $x,\,y$ and $t$, the latter being real valued and the former being complex valued. In what follows, DS1$^{\rho}$ denotes the Davey-Stewartson 
equation when $\alpha=\mathrm{i}$, and DS2$^{\rho}$ the Davey-Stewartson 
equation when $\alpha=1$. 
The Davey-Stewartson equation (\ref{1.2}) was introduced in \cite{DS} to describe the evolution of a three-dimensional wave package on water of finite depth. Complete 
integrability of the equation was shown in \cite{AF}. If solutions $\psi$ and $\Phi$ of (\ref{1.2}) do not depend on the variable $y$ the first equation in (\ref{1.2}) reduces to the NLS equation (\ref{NLS})  under appropriate boundary conditions for the function $\Phi+\rho\,|\psi|^{2}$ in the limit when $x$ tends to infinity.

Algebro-geometric solutions of the Davey-Stewartson equations (\ref{1.2}) were previously obtained in \cite{Mal} using the formalism of Baker-Akhiezer functions. In both \cite{Mal} and the present paper, solutions of (\ref{1.2}) are constructed from solutions of the complexified 
system which, after the change of coordinates $\xi=\frac{1}{2}(x-\mathrm{i}\alpha y)$ and $\eta=\frac{1}{2}(x+\mathrm{i}\alpha 
y),$ with $\alpha=\mathrm{i},1$, reads
\begin{align}
	\mathrm{i}\,\psi_{t}+\frac{1}{2}(\psi_{\xi\xi}+\psi_{\eta\eta})+2\,\varphi\,\psi=0, \nonumber\\ \label{DS intro}
	-\mathrm{i}\,\psi^{*}_{t}+\frac{1}{2}(\psi^{*}_{\xi\xi}+\psi^{*}_{\eta\eta})+2\,\varphi\,\psi^{*}=0,\\ \nonumber
\varphi_{\xi\eta}+\frac{1}{2}((\psi\psi^{*})_{\xi\xi}+(\psi\psi^{*})_{\eta\eta})=0,
\end{align}
where $\varphi:=\Phi+\psi\psi^{*}$. This system reduces to (\ref{1.2}) under the reality condition: 
\begin{equation}
\psi^{*}=\rho\,\overline{\psi}. \label{real cond DS intro}
\end{equation}
The second main result of our paper is an independent derivation of the solutions  \cite{Mal} using the degenerated Fay identity (\ref{deg Fay intro}). Algebro-geometric data associated to these solutions are $\{\Rs_{g},a,b,k_{a},k_{b}\}$, where $\Rs_{g}$ is a compact Riemann surface of genus $g>0$, $a$ and $b$ are two distinct points on $\Rs_{g}$, and $k_{a},k_{b}$ are arbitrary local parameters near $a$ and $b$. These solutions read
\begin{align}
\psi(\xi,\eta,t)&=A\,\frac{\Theta(\mathbf{Z}-\mathbf{d}+\int^{b}_{a})}{\Theta(\mathbf{Z}-\mathbf{d})}\,\exp\left\{ \mathrm{i}\left(-G_{1}\,\xi-G_{2}\,\eta+G_{3}\,\tfrac{t}{2}\right)\right\},\nonumber\\
\psi^{*}(\xi,\eta,t)&=-\frac{\kappa_{1}\kappa_{2}\,q_{2}(a,b)}{A}\,\frac{\Theta(\mathbf{Z}-\mathbf{d}-\int^{b}_{a})}{\Theta(\mathbf{Z}-\mathbf{d})}\,\exp\left\{ \mathrm{i}\left(G_{1}\,\xi+G_{2}\,\eta-G_{3}\,\tfrac{t}{2}\right)\right\},\nonumber\\
\varphi(\xi,\eta,t)&=\frac{1}{2}(\ln\,\Theta(\mathbf{Z-\mathbf{d}}))_{\xi\xi}+\frac{1}{2}(\ln\,\Theta(\mathbf{Z-\mathbf{d}}))_{\eta\eta}+\frac{1}{4}h, \nonumber
\end{align}
where the scalars $G_{i},\,q_{2}(a,b)$ depend on the points $a,\,b\in\Rs_{g}$, and $\kappa_{1},\kappa_{2},A,h\in\C,$ $\mathbf{d}\in\C^{g}$ are arbitrary constants; the $g$-dimensional vector $\mathbf{Z}$ is a linear function of the variables $\xi,\,\eta$ and $t$.
The reality condition (\ref{real cond DS intro}) imposes constraints on the associated algebro-geometric data. In particular, the Riemann surface $\Rs_{g}$ has to be real. The approach used in \cite{Mal} to study reality conditions (\ref{real cond DS intro}) is based on properties of Baker-Akhiezer functions. Our present approach based on identity (\ref{deg Fay intro}) allows to construct solutions of DS1$^{\rho}$ and DS2$^{\rho}$ corresponding to Riemann surfaces of more general topological type than in \cite{Mal}.

Another way to generalize the NLS equation is to increase the number 
of dependent variables in (\ref{NLS}). This leads to the multi-component nonlinear Schrödinger equation
\begin{equation}		
\mathrm{i}\,\frac{\partial \psi_{j}}{\partial t}+\frac{\partial^{2} \psi_{j}}{\partial x^{2}}+2\,\left(\sum_{k=1}^{n}s_{k}|\psi_{k}|^{2}\right)\,\psi_{j} =0,  \quad \quad j=1,\ldots,n,   \label{n-NLS^{s}}
\end{equation}
denoted by n-NLS$^{s}$,  where 
$s=(s_{1},\ldots,s_{n})$,  $s_{k}=\pm 1$. Here $\psi_{j}(x,t)$ are complex valued functions of the real variables $x$ and $t$. The case $n=1$ corresponds to the NLS 
equation. 
The integrability of the two-component nonlinear 
Schrödinger equation (\ref{n-NLS^{s}}) in the case $s=(1,1)$ was first established by Manakov \cite{Man}; integrability for the multi-component case with any $n\geq 2$ and $s_{k}=\pm 1$ was established in \cite{RSL}.
Algebro-geometric solutions of the two-component NLS equation with signature $(1,1)$ were investigated in \cite{EEI} using the Lax formalism and Baker-Akhiezer functions; these solutions are expressed in terms of theta functions of special trigonal spectral curves.

The third main result of this paper is the construction of smooth algebro-geometric solutions of the multi-component nonlinear Schrödinger equation (\ref{n-NLS^{s}}) for arbitrary $n\geq 2$, obtained by using (\ref{deg Fay intro}). We first find solutions to the complexified system
\begin{align}		
\mathrm{i}\,\frac{\partial \psi_{j}}{\partial t}+\frac{\partial^{2} \psi_{j}}{\partial x^{2}}+2\,\left(\sum_{k=1}^{n}\,\psi_{k}\,\psi_{k}^{*}\right)\,\psi_{j} &=0  ,  \nonumber  \\ 
-\mathrm{i}\,\frac{\partial \psi_{j}^{*}}{\partial t}+\frac{\partial^{2} \psi_{j}^{*}}{\partial x^{2}}+2\,\left(\sum_{k=1}^{n}\,\psi_{k}\,\psi_{k}^{*}\right)\,\psi_{j}^{*} &=0 , \quad \quad j=1,\ldots,n  ,\label{ass n-NLS intro} 
\end{align}
where $\psi_{j}(x,t)$ and $\psi_{j}^{*}(x,t)$ are complex valued functions of the real variables $x$ and $t$. This system reduces to the n-NLS$^{s}$ equation (\ref{n-NLS^{s}}) under the reality conditions 
\begin{equation}
\psi_{j}^{*}=s_{j}\,\overline{\psi_{j}}, \quad\quad j=1,\ldots, n. \label{real cond intro}
\end{equation}
Algebro-geometric data associated to the solutions of (\ref{ass n-NLS intro}) are given by $\{\Rs_{g},f,z_{a}\}$, where $\Rs_{g}$ is a compact Riemann surface of genus $g>0$, $f$ is a meromorphic function of degree $n+1$ on $\Rs_{g}$ and $z_{a}\in\C\mathbb{P}^{1}$ is a non critical value of the meromorphic function $f$ such that $f^{-1}(z_{a})=\{a_{1},\ldots,a_{n+1}\}$. Then the solutions $\left\{\psi_{j}\right\}_{j=1}^{n}$ and  $\left\{\psi_{j}^{*}\right\}_{j=1}^{n}$ of system (\ref{ass n-NLS intro}) read
\begin{align}    
\psi_{j}(x,t)&=A_{j}\,\frac{\Theta(\mathbf{Z}-\mathbf{d}+\int_{a_{n+1}}^{a_{j}})}{\Theta(\mathbf{Z}-\mathbf{d})}\,\exp\left\{ \mathrm{i}\,(-E_{j}\,x+\,F_{j}\,t)\right\},\nonumber\\ 
\psi^{*}_{j}(x,t)&=\frac{q_{2}(a_{n+1},a_{j})}{A_{j}}\,\frac{\Theta(\mathbf{Z}-\mathbf{d}-\int_{a_{n+1}}^{a_{j}})}{\Theta(\mathbf{Z}-\mathbf{d})}\,\exp\left\{ \mathrm{i}\,(E_{j}\,x-\,F_{j}\,t)\right\}, \nonumber
\end{align}
where the scalars $E_{j},\,F_{j},\,q_{2}(a_{n+1},a_{j})$ depend on the points $a_{n+1},\,a_{j}\in\Rs_{g}$, and $A_{j}\in\C,$ $\mathbf{d}\in\C^{g}$ are arbitrary constants; here the $g$-dimensional vector $\mathbf{Z}$ is a linear function of the variables $x$ and $t$. Imposing the reality conditions (\ref{real cond intro}), we describe explicitly solutions for the focusing case $s=(1,\ldots,1)$ and the defocusing case $s=(-1,\ldots,-1)$ associated to a real branched covering of the Riemann sphere. In particular, our solutions of the focusing case are associated to a covering without real branch points. Our general construction, being applied to
the two-component case, gives solutions with more parameters than in \cite{EEI} for fixed
genus of the spectral curve. Moreover, we provide smoothness conditions for our solutions.

The paper is  organized as follows: in section 2 we recall some facts 
about the theory of Riemann surfaces, and derive a new degeneration of Fay's identity. With this degeneration, 
 we give  in Section 3 an independent derivation of smooth theta-functional solutions of the Davey Stewartson equations; this approach also provides an explicit description of the constants appearing in the solutions in terms of theta functions. In Section 4, we construct new smooth theta-functional solutions of the multi-component NLS equation, and describe explicitely solutions of the focusing and defocusing cases. We also discuss the reduction from n-NLS to (n-1)-NLS, stationary solutions of n-NLS, and the link between solutions of n-NLS and solutions of the KP1 equation.
Appendix A contains various facts from the theory of real Riemann surfaces. Appendix B contains an auxiliary computation required in the construction of algebro-geometric solutions of DS and n-NLS equations.

\section{New degeneration of Fay's identity}

In this section we recall some facts from the classical theory of 
Riemann surfaces \cite{Fay} and derive a new corollary of Fay's 
trisecant identity. 

\subsection{Theta functions} Let $\Rs _{g}$ be a 
compact Riemann surface of genus $g>0$. Denote by $(\mathcal{A}_{1},\ldots,\mathcal{A}_{g},\mathcal{B}_{1},\ldots,\mathcal{B}_{g})$ a canonical homology basis, and by $(\omega_{1},\ldots,\omega_{g})$ the dual basis of holomorphic differentials normalized via
\begin{equation}
\int_{\mathcal{A}_{k}}\omega_{j}=2\mathrm{i}\pi\delta_{jk}, \quad j,k=1,\ldots,g. \label{norm hol diff}
\end{equation}
The matrix $\B=\left(\int_{\mathcal{B}_{k}}\omega_{j}\right)$ of $\mathcal{B}$-periods of the normalized holomorphic differentials $\left\{\omega_{j}\right\}_{j=1}^{g}$
is symmetric and has a negative definite real part. The theta function with (half integer) characteristics $\delta=[\delta' \delta'']$ is defined by
\begin{equation}
\Theta[\delta](\z|\B)=\sum_{\mathbf{m}\in\Z^{g}}\exp\left\{\tfrac{1}{2}\langle \B(\mathbf{m}+\delta'),\mathbf{m}+\delta'\rangle+\langle \mathbf{m}+\delta',\z+2\mathrm{i}\pi\delta''\rangle\right\};\label{2.2}
\end{equation}
here $\z\in\C^{g}$ is the argument and $\delta',\delta''\in \left\{0,\frac{1}{2}\right\}^{g}$ 
are the vectors of characteristics; $\langle,\rangle$ denotes the 
scalar product $\left\langle \mathbf{u},\mathbf{v} \right\rangle=\sum_{i}u_{i}\,v_{i}$ for any $\mathbf{u},\,\mathbf{v}\in\C^{g}$. The theta function $\Theta[\delta](\z)$ is even if 
the characteristic $\delta$ is even i.e, $4\left\langle 
\delta',\delta'' \right\rangle$ is even, and odd if the characteristic 
$\delta$ is odd i.e., $4\left\langle \delta',\delta'' 
\right\rangle$ is odd. An even characteristic is called nonsingular if 
$\Theta[\delta](\textbf{0})\neq 0$, and an odd characteristic is called nonsingular if the gradient $\nabla\Theta[\delta](\textbf{0})$ is non-zero. The theta function with characteristics is related to 
the theta function with zero characteristics (denoted by $\Theta$) as 
follows
\begin{equation}
\Theta[\delta](\z)=\Theta(\z+2\mathrm{i}\pi\delta''+\B\delta')\,\exp\left\{\tfrac{1}{2}\langle \B\delta',\delta'\rangle+\langle\z+2\mathrm{i}\pi\delta'',\delta'\rangle\right\}.\label{2.3}
\end{equation}
Let $\Lambda$ be the lattice $\Lambda=\{2\mathrm{i}\pi \mathbf{N}+\B 
\mathbf{M}, \,\,\mathbf{N},\mathbf{M}\in\Z^{g}\}$ generated by the 
$\mathcal{A}$ and $\mathcal{B}$-periods of the normalized holomorphic differentials $\left\{\omega_{j}\right\}_{j=1}^{g}$. The complex torus $J=J(\Rs_{g})=\C^{g} / \Lambda$ is called the Jacobian of the Riemann surface $\Rs_{g}$. The theta function with characteristics (\ref{2.2}) has the following quasi-periodicity property
\begin{equation}
\Theta[\delta](\mathbf{z}+2\mathrm{i}\pi \mathbf{N}+\B \mathbf{M}) \nonumber
\end{equation}
\begin{equation}
=\Theta[\delta](\z)  \exp\left\{-\tfrac{1}{2}\langle \B\mathbf{M},\mathbf{M}\rangle-\langle \z,\mathbf{M}\rangle+2\mathrm{i}\pi(\langle\delta',\mathbf{N}\rangle-\langle\delta'',\mathbf{M}\rangle)\right\}.\label{2.4}
\end{equation}
Denote by $\mu$ the Abel map $\mu:\Rs_{g}\longmapsto J$ defined by 
\begin{equation}
\mu(p)=\int_{p_{0}}^{p}\omega,  \label{abel}
\end{equation} 
for any $p\in\Rs_{g}$, where $p_{0}\in\Rs_{g}$ is the base point of the application, and $\omega=(\omega_{1},\ldots,\omega_{g})$ is the vector of the normalized holomorphic differentials. In the whole paper we use the notation $\int_{a}^{b}=\mu(b)-\mu(a)$.

\vspace{0.5cm}
\subsection{Fay's identity and  previously known degenerations} Let us introduce the prime-form which is given by
\begin{equation}
E(a,b)=\frac{\Theta[\delta](\int_{b}^{a})}{h_{\delta}(a)h_{\delta}(b)}, \label{prime}
\end{equation}
$a,\,b\in\Rs_{g}$; $h_{\delta}(a)$ is a spinor defined by $h^{2}_{\delta}(a)=\sum_{j=1}^{g}\frac{\partial\Theta[\delta]}{\partial z_{j}}(0)\omega_{j}(a)$, where $\delta=[\delta' \delta'']$ is a non-singular odd 
 characteristic (the prime form is independent of the choice of the characteristic $\delta$). 
Fay's trisecant identity has the form 
\begin{multline}
E(a,b)E(c,d)\Theta\left(\z+\int^{a}_{c}\right)\Theta\left(\z+\int^{d}_{b}\right)\\
+E(a,c)E(d,b)\,\Theta\left(\z+\int^{a}_{b}\right)\Theta\left(\z+\int^{d}_{c}\right)\\
=E(a,d)E(c,b)\,\Theta(\z)\,\Theta\left(\z+\int^{a}_{c}+\int^{d}_{b}\right),\label{2.5}
\end{multline}
where $a,b,c,d\in\Rs_{g}$ and all integration contours do not intersect cycles of the canonical 
homology basis. Let us now discuss degenerations of identity (\ref{2.5}).

Let $k_{a}(p)$ denote a local parameter near $a\in\Rs_{g}$, where $p$ lies in a neighbourhood of $a$. Consider 
the following expansion of the normalized holomorphic differentials $\omega_{j}$  near $a$,
\begin{equation} 
\omega_{j}(p)= \left(V_{a,j}+W_{a,j}\,k_{a}(p)+U_{a,j}\,\frac{k_{a}(p)^{2}}{2!}+\ldots\right)\,\ud k_{a}(p), \label{exp hol diff}
\end{equation}
where $V_{a,j},\,W_{a,j},\,U_{a,j}\in\C$.
Let us denote by $D_{a}$ the operator of directional derivative along the vector $\mathbf{V}_{a}=(V_{a,1},\ldots,V_{a,g})$:
\begin{equation} 
D_{a}F(\z)=\sum_{j=1}^{g}\partial_{z_{j}}F(\z) V_{a,j}=\left\langle \nabla F(\z),\mathbf{V}_{a}\right\rangle, \label{2.7}
\end{equation}
where $F:\C^{g}\longrightarrow \C$ is an arbitrary function, 
and denote by $D'_{a}$ the operator of directional derivative along the vector $\mathbf{W}_{a}=(W_{a,1},\ldots,W_{a,g})$:
\[D'_{a}F(\z)=\sum_{j=1}^{g}\partial_{z_{j}}F(\z) W_{a,j}=\left\langle \nabla F(\z),\mathbf{W}_{a}\right\rangle.\]
Then for any $\z\in\C^{g}$ and any distinct points $a,b\in\Rs_{g}$, the following well-known degenerated version of Fay's identity holds (see \cite{Mum})
\begin{equation} 
D_{a}D_{b}\,\ln\,\Theta(\z)\,=\,q_{1}(a,b)\,+\,q_{2}(a,b)\,\frac{\Theta(\z+\int^{b}_{a})\,\Theta(\z+\int^{a}_{b})}{\Theta(\z)^{2}},
\label{cor Fay}
\end{equation}
where the scalars $q_{1}(a,b)$ and $q_{2}(a,b)$ are given by
\begin{align}
q_{1}(a,b)&=D_{a}D_{b}\ln\Theta[\delta](\int^{b}_{a}), \label{q1}\\ 
q_{2}(a,b)&=\frac{D_{a}\,\Theta[\delta](0)\,D_{b}\,\Theta[\delta](0)}{\Theta[\delta](\int^{b}_{a})^{2}},\label{q2}
\end{align}
where $\delta$ is a non-singular odd  characteristic. 
Notice that $q_{1}(a,b)$ and $q_{2}(a,b)$ depend on the choice of local parameters $k_{a}$ and $k_{b}$ near $a$ and $b$ respectively.

\subsection{New degeneration of Fay's identity}

Algebro-geometric solutions of the Davey-Stewartson equations and the multi-component NLS equation constructed in this paper are obtained by using the following new degenerated version of Fay's identity.

\begin{theorem}
Let $a,b$ be distinct points on a compact Riemann surface $\Rs_{g}$ of genus $g$. Fix local parameters $k_{a}$ and $k_{b}$ in a neighbourhood of $a$ and $b$ respectively. Denote by $\delta$ a non-singular odd 
 characteristic. Then for any $\z\in \C^{g}$,
\begin{equation}
D'_{a}\ln\frac{\Theta(\z+\int^{b}_{a})}{\Theta(\z)}+\,D_{a}^{2}\ln\frac{\Theta(\z+\int^{b}_{a})}{\Theta(\z)}+\Big(D_{a}\ln\frac{\Theta(\z+\int^{b}_{a})}{\Theta(\z)}-K_{1}(a,b)\Big)^{2}+2\,D^{2}_{a}\ln\Theta(\z)+\,K_{2}(a,b)=0,\label{my corol}
\end{equation}
where the scalars $K_{1}(a,b)$ and $K_{2}(a,b)$ are given by
\begin{equation}
K_{1}(a,b)=\frac{1}{2}\frac{\,D_{a}'\,\Theta[\delta](0)}{D_{a}\,\Theta[\delta](0)}+\,D_{a}\ln\Theta[\delta](\int^{b}_{a})\,, \label{K1}
\end{equation}
and
\begin{equation}
K_{2}(a,b)=-D'_{a}\ln\Theta(\int^{b}_{a})-\,D_{a}^{2}\ln\left(\Theta(\int^{b}_{a})\Theta(0)\right)-\Big(D_{a}\ln\Theta(\int^{b}_{a})-K_{1}(a,b)\Big)^{2}. \label{K2}
\end{equation}
\end{theorem}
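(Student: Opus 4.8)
The plan is to show that the left-hand side of (\ref{my corol}), regarded as a function of $\z$, is in fact constant on the Jacobian, and then to identify that constant by evaluating at $\z=\mathbf{0}$. Write $R(\z)=\Theta(\z+\int^{b}_{a})/\Theta(\z)$ and set
\[
\Phi(\z)=D_a'\ln R+D_a^2\ln R+\bigl(D_a\ln R-K_1(a,b)\bigr)^2+2\,D_a^2\ln\Theta(\z),
\]
so that (\ref{my corol}) is exactly the assertion $\Phi(\z)\equiv -K_2(a,b)$. I would first split $\Phi$ into contributions of $P(\z)=\ln\Theta(\z+\int^{b}_{a})$ and $Q(\z)=\ln\Theta(\z)$; expanding the square gives $\Phi=D_a'P-D_a'Q+D_a^2P+D_a^2Q+(D_aP-D_aQ-K_1)^2$, where the coefficient $2$ in front of $D_a^2\ln\Theta$ has conveniently converted the $-D_a^2Q$ coming from $D_a^2\ln R$ into $+D_a^2Q$.

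First step (periodicity). Using the quasi-periodicity law (\ref{2.4}) with zero characteristic, each of $P$ and $Q$ changes by an affine term $-\tfrac12\langle\B\mathbf{M},\mathbf{M}\rangle-\langle\z,\mathbf{M}\rangle$ under $\z\mapsto\z+2\mathrm{i}\pi\mathbf{N}+\B\mathbf{M}$ (the two affine terms differing only by the $\z$-independent constant $-\langle\int^{b}_{a},\mathbf{M}\rangle$). Consequently $D_aP-D_aQ$ and $D_a'P-D_a'Q$ are $\Lambda$-periodic, the linear anomalies cancelling in the difference, while $D_a^2P$ and $D_a^2Q$ are periodic individually; hence $\Phi$ descends to a function on the compact torus $J$.

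Second step (holomorphy), which is the heart of the matter. A priori $\Phi$ has poles along $\{\Theta(\z)=0\}$ and along $\{\Theta(\z+\int^{b}_{a})=0\}$. Near a simple zero of $\Theta$ the double poles cancel automatically: the $(D_a\Theta)^2/\Theta^2$ produced by $D_a^2\ln R$ and by the square are exactly killed by the $+D_a^2Q$ term, which is precisely why the factor $2$ is needed. What survives is a simple pole whose coefficient is $D_a^2\Theta-D_a'\Theta-2\bigl(D_a\ln\Theta(\z+\int^{b}_{a})-K_1\bigr)D_a\Theta$ restricted to $\{\Theta=0\}$, together with the mirror expression on $\{\Theta(\z+\int^{b}_{a})=0\}$. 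The hard part will be proving that these simple residues vanish: this is a relation on the theta divisor that holds exactly when $K_1$ is chosen as in (\ref{K1}). I would establish it from the local behaviour of $\Theta$ and of the prime form near $a$ on the divisor, equivalently by degenerating the known identity (\ref{cor Fay}) in its bilinear form $\Theta\,D_aD_b\Theta-D_a\Theta\,D_b\Theta=q_1\Theta^2+q_2\,\Theta(\z+\int^{b}_{a})\Theta(\z+\int^{a}_{b})$ on $\{\Theta=0\}$ as $b\to a$; here the term $\tfrac12\,D_a'\Theta[\delta](0)/D_a\Theta[\delta](0)$ in (\ref{K1}) is the second-order coefficient of the prime-form expansion at $a$, and $D_a\ln\Theta[\delta](\int^{b}_{a})$ is its global $a,b$ counterpart.

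Conclusion. Once $\Phi$ is holomorphic and $\Lambda$-periodic it is a holomorphic function on the compact complex torus $J$, hence constant by Liouville's theorem (the lower-dimensional loci of higher-order zeros of $\Theta$ or common zeros of the two thetas are removable). It then remains to evaluate the constant at $\z=\mathbf{0}$. Since $\Theta$ is even its gradient vanishes at the origin, so $D_a\ln\Theta(\mathbf{0})=D_a'\ln\Theta(\mathbf{0})=0$; substituting $\z=\mathbf{0}$ collapses $\Phi(\mathbf{0})$ to $D_a'\ln\Theta(\int^{b}_{a})+D_a^2\ln\bigl(\Theta(\int^{b}_{a})\Theta(0)\bigr)+\bigl(D_a\ln\Theta(\int^{b}_{a})-K_1\bigr)^2$, which is precisely $-K_2(a,b)$ as defined in (\ref{K2}). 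Therefore $\Phi\equiv-K_2(a,b)$, which is (\ref{my corol}).
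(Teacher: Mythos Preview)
Your strategy---show $\Phi$ is $\Lambda$-periodic, holomorphic, hence constant on $J$, then evaluate at $\z=\mathbf 0$---is a genuinely different route from the paper's. The paper never analyses poles of $\Phi$; instead it degenerates Fay's trisecant identity (\ref{2.5}) twice (differentiate in $k_d$ and let $d\to b$, then differentiate in $k_a$ and let $a\to c$) to obtain an auxiliary identity (Lemma~2.1), and from it deduces that $D_c\Phi(\z)=0$ for \emph{every} point $c\in\Rs_g$. Since one can find $g$ points whose direction vectors $\mathbf V_{d_i}$ are linearly independent, all partial derivatives of $\Phi$ vanish and $\Phi$ is constant. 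The two arguments agree only at the very last step, the evaluation at $\z=\mathbf 0$ yielding (\ref{K2}).

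Your periodicity step and the cancellation of the double poles are correct, and your identification of the simple-pole residue along $\{\Theta(\z)=0\}$ as $D_a^2\Theta-D_a'\Theta-2\bigl(D_a\ln\Theta(\z+\int_a^b)-K_1\bigr)D_a\Theta$ is right. The gap is the next sentence: you propose to kill this residue ``by degenerating (\ref{cor Fay}) in its bilinear form on $\{\Theta=0\}$ as $b\to a$''. But the residue you must annihilate involves the \emph{fixed} point $b$ through $D_a\ln\Theta(\z+\int_a^b)$; sending $b\to a$ in (\ref{cor Fay}) produces a relation among $D_a\Theta$, $D_a^2\Theta$, $D_a'\Theta$ on the divisor that contains no reference to $b$, so it cannot by itself match the $b$-dependent term. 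What one actually needs here is a first-order degeneration of Fay's identity with three distinct points (one of them tending to $a$, the other two remaining $a$ and $b$), i.e.\ essentially the intermediate identity (\ref{2.15}) in the paper restricted to $\{\Theta=0\}$---and at that point you are reproducing the paper's computation rather than bypassing it. In short, the Liouville framework is sound, but the residue-vanishing on the theta divisor is the whole content of the theorem and is not established by the sketch you give; the paper's derivation from (\ref{2.5}) is precisely what supplies it.
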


\begin{proof}
We start from the following lemma
\\\\
\begin{lemma}
Let $b,c\in\Rs_{g}$ be distinct points. Fix local parameters $k_{b}$ and $k_{c}$ in a neighbourhood of $b$ and $c$ respectively. Then for any $\z\in \C^{g}$, 
\begin{multline}
D_{c}\left[-D'_{b}\ln\frac{\Theta(\z+\int^{b}_{c})}{\Theta(\z)}+\,D_{b}^{2}\ln\frac{\Theta(\z+\int^{b}_{c})}{\Theta(\z)}
+\left(D_{b}\ln\frac{\Theta(\z+\int^{b}_{c})}{\Theta(\z)}+K_{1}(b,c)\right)^{2}+2\,D^{2}_{b}\ln\Theta(\z)\right]=0,  \label{2.14} 
\end{multline} 
where the scalar $K_{1}(b,c)$ is defined in (\ref{K1}). 
\end{lemma}
\noindent
\textit{Proof of Lemma 2.1.} Let us introduce the notations
    $\Theta_{ab}=\Theta(\z+\int_{a}^{b}\omega)$ and 
    $\Theta=\Theta(\mathbf{z})$. 
Differentiating (\ref{2.5}) twice with respect to the local parameter $k_{d}(p)$, where $p$ lies in a neighbourhood of $d$, and taking the limit $d\rightarrow b$, we obtain
\begin{equation}
D'_{b}\ln\Theta+D_{b}^{2}\ln\Theta+(D_{b}\ln\Theta)^{2}+\frac{p_{3}}{p_{2}}\,D_{b}\ln\Theta_{ca}-\frac{p_{3}}{p_{2}}D_{b}\ln\Theta  \label{2.15}
\end{equation}
\[=\frac{p_{1}p_{3}}{p_{2}}-2\,D_{b}\ln\Theta_{ca}\,D_{b}\ln\Theta_{cb}+2\,D_{b}\ln\Theta\,D_{b}\ln\Theta_{cb}+2\,p_{1}\,D_{b}\ln\Theta_{cb}\]
\[-p_{4}-2\,p_{1}\,D_{b}\ln\Theta_{ca}+D'_{b}\ln\Theta_{ca}+D_{b}^{2}\ln\Theta_{ca}+(D_{b}\ln\Theta_{ca})^{2},\]
where we took into account the relation
\[\partial_{k_{d}}^{2}\,\Theta(\z+\int^{d}_{b})\big|_{d=b} \,=\,D'_{b}\Theta(\z)+D_{b}^{2}\Theta(\z).\]
The quantities $p_{j}=p_{j}(a,b,c),$ for $j=1,2,3,4,$ are given by
\begin{equation}
p_{1}(a,b,c)=-\frac{E(c,b)}{E(a,b)}\partial_{k_{x}}\frac{E(a,x)}{E(c,x)}\Big |_{x=b}, \quad p_{2}(a,b,c)=\frac{E(a,c)}{E(a,b)}\partial_{k_{x}}\frac{E(x,b)}{E(c,x)}\Big |_{x=b}, \label{p2}
\end{equation}
\begin{equation}
p_{3}(a,b,c)=\frac{E(a,c)}{E(a,b)}\partial^{2}_{k_{x}}\frac{E(x,b)}{E(c,x)}\Big |_{x=b} \,, \quad p_{4}(a,b,c)=-\frac{E(c,b)}{E(a,b)}\partial^{2}_{k_{x}}\frac{E(a,x)}{E(c,x)}\Big |_{x=b}. \label{p3}
\end{equation}
\\
Differentiating (\ref{2.15}) with respect to the local parameter $k_{a}(p)$, where $p$ lies in a neighbourhood of $a$, and taking the limit $a\rightarrow c$, we get
\begin{multline}
D_{c}D'_{b}\ln\Theta+D_{c}D_{b}^{2}\ln\Theta-2\,D_{c}D_{b}\ln\Theta\,D_{b}\ln\frac{\Theta_{cb}}{\Theta}
+2\,q_{1}\,D_{b}\ln\frac{\Theta_{cb}}{\Theta}-\,\frac{p_{3}}{p_{2}}\,D_{c}D_{b}\ln\Theta+K=0, \label{2.16}
\end{multline}
where the scalar $K$ depends on the points $b,c$, but not on the vector $\z\in\C^{g}$. Here the scalars $q_{1},p_{2}$ and $p_{3}$ are defined in (\ref{q1}), (\ref{p2}), and (\ref{p3}) respectively. The change of variable $\z\leftrightarrow -\z+\int^{c}_{b}$ in (\ref{2.16}) leads to
\begin{multline}
D_{c}D'_{b}\ln\Theta_{cb}-D_{c}D_{b}^{2}\ln\Theta_{cb}-2\,D_{c}D_{b}\ln\Theta_{cb}\,D_{b}\ln\frac{\Theta_{cb}}{\Theta}
+2\,q_{1}\,D_{b}\ln\frac{\Theta_{cb}}{\Theta}-\,\frac{p_{3}}{p_{2}}\,D_{c}D_{b}\ln\Theta_{cb}+K=0. \label{2.17}
\end{multline}
Now (\ref{2.14}) is obtained by substracting (\ref{2.16}) and (\ref{2.17}). $\hspace{7cm} \square$
\\\\

To proof Theorem 2.1, make the change of variable $\z\mapsto 
-\z+\int^{c}_{b}$ in (\ref{2.15}) and add $2\,D_{b}^{2}\ln\Theta$ to 
each side of the equality to get
\[-D'_{b}\ln\frac{\Theta_{cb}}{\Theta}+D_{b}^{2}\ln\frac{\Theta_{cb}}{\Theta}+\left(D_{b}\ln\frac{\Theta_{cb}}{\Theta}+\frac{1}{2}\frac{p_{3}}{p_{2}}\right)^{2}-\frac{1}{4}\left(\frac{p_{3}}{p_{2}}\right)^{2}-\frac{p_{1}p_{3}}{p_{2}}+2\,D_{b}^{2}\ln\Theta\]
\[=-D'_{b}\ln\frac{\Theta_{ab}}{\Theta}+D_{b}^{2}\ln\frac{\Theta_{ab}}{\Theta}+\left(D_{b}\ln\frac{\Theta_{ab}}{\Theta}+\frac{1}{2}\left(\frac{p_{3}}{p_{2}}+2p_{1}\right)\right)^{2}-\frac{1}{4}\Big(\frac{p_{3}}{p_{2}}+2\,p_{1}\Big)^{2}-p_{4}+2\,D_{b}^{2}\ln\Theta.\]
By Lemma 2.1, the directional derivative of the left hand side of the previous equality along the vector $\mathbf{V}_{c}$ equals zero. Hence for any distinct points $a,b,c\in\Rs_{g}$, we get
\begin{equation}
D_{c}\Big[-D'_{b}\ln\frac{\Theta_{ab}}{\Theta}+D_{b}^{2}\ln\frac{\Theta_{ab}}{\Theta}+\left(D_{b}\ln\frac{\Theta_{ab}}{\Theta}+\frac{1}{2}\left(\frac{p_{3}}{p_{2}}+2p_{1}\right)\right)^{2}+2\,D_{b}^{2}\ln\Theta\Big]=0. \label{P21 1}
\end{equation}
Moreover, from (\ref{p2}), (\ref{p3}) and (\ref{prime}), it can be seen that 
the expression
$\frac{1}{2}(\frac{p_{3}}{p_{2}}+2p_{1})$ does not depend on the point 
$c$ and equals $K_{1}(b,a)$ given by (\ref{K1}). Now let us introduce the following function of the variable $\z\in \C^{g}$  \[f_{(b,a)}(\z)=-D'_{b}\ln\frac{\Theta_{ab}}{\Theta}+D_{b}^{2}\ln\frac{\Theta_{ab}}{\Theta}+\Big(D_{b}\ln\frac{\Theta_{ab}}{\Theta}+K_{1}(b,a)\big)\Big)^{2}+2\,D_{b}^{2}\ln\Theta.\] 
Then (\ref{P21 1}) can be rewritten as $D_{c}\,f_{(b,a)}(\z)=0$ for any
$\z\in \C^{g}$ and for all $c\in\Rs_{g}$, $c\neq b$ (because also $D_{a}\,f_{(b,a)}(\z)=0$ by Lemma 2.1). 
Due to the fact that on each Riemann surface $\Rs_{g}$, there exists a positive divisor $d_{1}+...+d_{g}$ of degree $g$ such that vectors $\frac{\omega(d_{1})}{\ud k_{d_{1}}},...,\frac{\omega(d_{g})}{\ud k_{d_{g}}}$ are linearly independent (see \cite{KKV}, Lemma 5), the function $f_{(b,a)}(\z)$ is constant with 
respect to $\z$; we denote this constant by $-K_{2}(b,a)$:
\begin{equation}
f_{(b,a)}(\z)=-K_{2}(b,a) \label{P21 2}
\end{equation}
for any $\z\in \C^{g}$.
Interchanging $a$ and $b$, and changing the variable $\z\leftrightarrow -\z$ in (\ref{P21 2}) we get (\ref{my corol}). The expression (\ref{K2}) for the scalar $K_{2}(a,b)$ follows from 
(\ref{P21 2}) putting $\z=0$.  
\end{proof}

\section{Algebro-geometric solutions of the Davey-Stewartson equations}

Here we derive algebro-geometric solutions of the Davey-Stewartson equations 
(\ref{1.2}) using the degeneration (\ref{my corol}) of Fay's identity. 
Let us introduce the function $\phi:=\Phi+\rho|\psi|^{2}$, where $\rho=\pm1$,  and the differential operators 
\[D_{1}=\partial_{xx}-\alpha^{2}\partial_{yy}, \quad 
D_{2}=\partial_{xx}+\alpha^{2}\partial_{yy}.\]
Introduce also the characteristic coordinates 
\[\xi=\frac{1}{2}(x-\mathrm{i}\alpha y),\quad \eta=\frac{1}{2}(x+\mathrm{i}\alpha 
y),\quad \alpha=\mathrm{i},1.\]
In these coordinates the Davey Stewartson equations (\ref{1.2}) become
\begin{align}	
\mathrm{i}\,\psi_{t}+D_{1}\psi+2\,\phi\,\psi &=0,     \nonumber  \\
D_{2}\phi+\rho \,D_{1}|\psi|^{2} & =0,  \label{4.1} 
\end{align}
where the differential operators $D_{1}$ and $D_{2}$ are given by
\[D_{1}=\frac{1}{2}(\partial_{\xi}^{2}+\partial_{\eta}^{2}),\quad D_{2}=\partial_{\xi}\partial_{\eta}.\]
In what follows, DS1$^{\rho}$ denotes the Davey-Stewartson 
equation when $\alpha=\mathrm{i}$ (in this case $\xi$ and $\eta$ are both 
real), and DS2$^{\rho}$ the Davey-Stewartson 
equation when $\alpha=1$ (in this case $\xi$ and $\eta$ are pairwise conjugate).

\subsection{Solutions of the complexified Davey-Stewartson equations} 

To construct algebro-geometric solutions of (\ref{4.1}), let us first introduce the complexified Davey-Stewartson equations
\begin{align}
	\mathrm{i}\,\psi_{t}+\frac{1}{2}(\psi_{\xi\xi}+\psi_{\eta\eta})+2\,\varphi\,\psi=0, \nonumber\\ \label{DS}
	-\mathrm{i}\,\psi^{*}_{t}+\frac{1}{2}(\psi^{*}_{\xi\xi}+\psi^{*}_{\eta\eta})+2\,\varphi\,\psi^{*}=0,\\ \nonumber
\varphi_{\xi\eta}+\frac{1}{2}((\psi\psi^{*})_{\xi\xi}+(\psi\psi^{*})_{\eta\eta})=0,
\end{align}
where $\varphi:=\Phi+\psi\psi^{*}$. This system reduces to (\ref{4.1}) under the \textit{reality condition}: 
\begin{equation}
\psi^{*}=\rho\,\overline{\psi}, \label{real cond DS}
\end{equation}
which leads to $\varphi=\phi$. Theta functional solutions of system (\ref{DS}) are given by
\begin{theorem}
Let $\mathcal{R}_{g}$ be a compact Riemann surface of genus $g>0$, and let $a,b\in\Rs_{g}$ be distinct points. Take arbitrary constants $\mathbf{d}\in\C^{g}$ and $A,\,\kappa_{1},\,\kappa_{2}\in\C\setminus\left\{0\right\},\,h\in\C$. Denote by $\ell$ a contour connecting $a$ and $b$ which does not intersect cycles of the canonical homology basis. Then for any $\xi,\,\eta,\,t\in\C$, the following functions $\psi$, $\psi^{*}$ and $\varphi$ are solutions of system (\ref{DS})
\begin{align}
\psi(\xi,\eta,t)&=A\,\frac{\Theta(\mathbf{Z}-\mathbf{d}+\mathbf{r})}{\Theta(\mathbf{Z}-\mathbf{d})}\,\exp\left\{ \mathrm{i}\left(-G_{1}\,\xi-G_{2}\,\eta+G_{3}\,\tfrac{t}{2}\right)\right\},\nonumber\\
\psi^{*}(\xi,\eta,t)&=-\frac{\kappa_{1}\kappa_{2}\,q_{2}(a,b)}{A}\,\frac{\Theta(\mathbf{Z}-\mathbf{d}-\mathbf{r})}{\Theta(\mathbf{Z}-\mathbf{d})}\,\exp\left\{ \mathrm{i}\left(G_{1}\,\xi+G_{2}\,\eta-G_{3}\,\tfrac{t}{2}\right)\right\},\label{sol DS}\\
\varphi(\xi,\eta,t)&=\frac{1}{2}(\ln\,\Theta(\mathbf{Z-\mathbf{d}}))_{\xi\xi}+\frac{1}{2}(\ln\,\Theta(\mathbf{Z-\mathbf{d}}))_{\eta\eta}+\frac{1}{4}h. \nonumber
\end{align}
Here $\mathbf{r}=\int_{\ell}\omega$, where $\omega$ is the vector of normalized holomorphic differentials, and
\begin{equation}
\mathbf{Z}=\mathrm{i}\left(\kappa_{1}\,\mathbf{V}_{a}\,\xi-\kappa_{2}\,\mathbf{V}_{b}\,\eta+(\kappa^{2}_{1}\,\mathbf{W}_{a}-\kappa^{2}_{2}\,\mathbf{W}_{b})\,\tfrac{t}{2}\right), \label{Z DS}
\end{equation}
where the vectors $\mathbf{V}_{a},\,\mathbf{V}_{b}$ and $\mathbf{W}_{a},\,\mathbf{W}_{b}$ were introduced in (\ref{exp hol diff}). The scalars $G_{1},G_{2},G_{3}$ are given by
\begin{equation}
G_{1}=\kappa_{1}\,K_{1}(a,b),\qquad G_{2}=\kappa_{2}\,K_{1}(b,a), \label{N12 DS}
\end{equation}
\begin{equation}
G_{3}=\kappa^{2}_{1}\,K_{2}(a,b)+\kappa^{2}_{2}\,K_{2}(b,a)+h, \label{N3 DS}
\end{equation}
and scalars $q_2(a,b), \,K_{1}(a,b),\,K_{2}(a,b)$ are defined in (\ref{q2}), (\ref{K1}), (\ref{K2}) respectively. 
\end{theorem}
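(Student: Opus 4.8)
The plan is to substitute the proposed functions $\psi$, $\psi^{*}$, $\varphi$ directly into the three equations of the complexified system (\ref{DS}) and reduce each one to an instance of the new degenerated Fay identity (\ref{my corol}). The crucial structural observation is that the vector $\mathbf{Z}$ in (\ref{Z DS}) is built precisely so that directional derivatives along $\mathbf{V}_{a}$, $\mathbf{V}_{b}$ and the combination $\mathbf{W}_{a},\mathbf{W}_{b}$ correspond to the $\xi$-, $\eta$- and $t$-derivatives appearing in the equations. Concretely, $\partial_{\xi}$ acting on $\Theta(\mathbf{Z}-\mathbf{d})$ produces $\mathrm{i}\kappa_{1}D_{a}$, $\partial_{\eta}$ produces $-\mathrm{i}\kappa_{2}D_{b}$, and the $t$-derivative produces $\tfrac{\mathrm{i}}{2}(\kappa_{1}^{2}D'_{a}-\kappa_{2}^{2}D'_{b})$, since $\mathbf{W}_{a},\mathbf{W}_{b}$ are the $D'$ directions. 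I would begin by recording these dictionary relations precisely, as they are the engine of the entire computation.

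Next I would tackle the first PDE, $\mathrm{i}\psi_{t}+\tfrac{1}{2}(\psi_{\xi\xi}+\psi_{\eta\eta})+2\varphi\psi=0$. Writing $\psi = A\,\dfrac{\Theta(\mathbf{Z}-\mathbf{d}+\mathbf{r})}{\Theta(\mathbf{Z}-\mathbf{d})}e^{\mathrm{i}\Omega}$ with $\Omega = -G_{1}\xi-G_{2}\eta+G_{3}t/2$, I would compute the logarithmic derivative $\psi_{\xi}/\psi$, then $\psi_{\xi\xi}/\psi$, and similarly in $\eta$ and $t$, so that dividing the whole equation by $\psi$ turns it into a scalar identity among $\Theta$-logarithmic derivatives. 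After collecting the $\xi\xi$ and $tt$ contributions and using the dictionary, the $\xi$-part of the equation should reduce, up to the explicit constants $G_{1}=\kappa_{1}K_{1}(a,b)$ and the $\kappa_{1}^{2}K_{2}(a,b)$ piece of $G_{3}$, to exactly the left-hand side of (\ref{my corol}) for the pair $(a,b)$ evaluated at $\mathbf{z}=\mathbf{Z}-\mathbf{d}$; the $\eta$-part should reduce to (\ref{my corol}) for the interchanged pair $(b,a)$, which is where the symmetric appearance of $K_{1}(b,a)$ and $K_{2}(b,a)$ in $G_{2}$ and $G_{3}$ comes from. The constant $\tfrac{1}{4}h$ in $\varphi$, entering through $2\varphi\psi$, is exactly what is needed to absorb the leftover free constant $h$ in $G_{3}$. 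The equation for $\psi^{*}$ is handled identically after noting that it is obtained from the $\psi$-equation by $\mathbf{r}\mapsto-\mathbf{r}$, $\Omega\mapsto-\Omega$ and $\mathrm{i}\mapsto-\mathrm{i}$, which is precisely the symmetry built into (\ref{my corol}) under $\mathbf{z}\mapsto-\mathbf{z}$ together with the $a\leftrightarrow b$ interchange used at the end of the proof of Theorem 2.1.

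For the third equation, $\varphi_{\xi\eta}+\tfrac{1}{2}((\psi\psi^{*})_{\xi\xi}+(\psi\psi^{*})_{\eta\eta})=0$, the key is that the product $\psi\psi^{*}$ has all exponential factors cancel, leaving $\psi\psi^{*}=-\kappa_{1}\kappa_{2}q_{2}(a,b)\dfrac{\Theta(\mathbf{Z}-\mathbf{d}+\mathbf{r})\Theta(\mathbf{Z}-\mathbf{d}-\mathbf{r})}{\Theta(\mathbf{Z}-\mathbf{d})^{2}}$. This is, up to the constant $q_{1}$, exactly the right-hand side of the \emph{previously known} degeneration (\ref{cor Fay}) with $\mathbf{r}=\int_{a}^{b}$; applying (\ref{cor Fay}) rewrites $\psi\psi^{*}$ as $-\kappa_{1}\kappa_{2}(D_{a}D_{b}\ln\Theta(\mathbf{Z}-\mathbf{d}) - q_{1}(a,b))$. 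Translating $D_{a}D_{b}$ into $\partial_{\xi}\partial_{\eta}$ via the dictionary, the term $\tfrac{1}{2}((\psi\psi^{*})_{\xi\xi}+(\psi\psi^{*})_{\eta\eta})$ becomes a multiple of fourth-order $\Theta$-derivatives of $\ln\Theta(\mathbf{Z}-\mathbf{d})$, and since $\varphi$ itself is a sum of second $\xi$- and $\eta$-derivatives of $\ln\Theta(\mathbf{Z}-\mathbf{d})$, its mixed derivative $\varphi_{\xi\eta}$ matches and cancels these, the constant $h$ and $q_{1}$ dropping out under differentiation. The main obstacle I anticipate is bookkeeping: carefully tracking the factors of $\mathrm{i}$, $\kappa_{1}$, $\kappa_{2}$, and the signs through the repeated applications of the dictionary, and confirming that the free constants $G_{1},G_{2},G_{3}$ defined in (\ref{N12 DS})--(\ref{N3 DS}) are forced to take exactly those values rather than merely being consistent; verifying this rigidity, rather than the identities themselves, is the delicate part.
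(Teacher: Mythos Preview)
Your proposal is correct and follows essentially the same route as the paper's proof: substitute, translate $\partial_\xi,\partial_\eta,\partial_t$ into the directional operators $D_a,D_b,D'_a,D'_b$ via the dictionary coming from (\ref{Z DS}), and reduce the first two equations to two instances of (\ref{my corol}) (the $\xi$-block with the pair $(a,b)$, the $\eta$-block with the $(b,a)$-variant obtained from (\ref{P21 2}) by the $\mathbf{z}\mapsto -\mathbf{z}$, $a\leftrightarrow b$ symmetry you mention), while the third equation follows from (\ref{cor Fay}) exactly as you outline. One small remark: the theorem only asserts that the stated choices of $G_1,G_2,G_3$ produce a solution, not that they are the unique such choices, so the ``rigidity'' you flag at the end is not part of what needs to be verified.
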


\begin{proof} Substitute functions (\ref{sol DS}) in the first equation of system (\ref{DS}) to get
\[\kappa^{2}_{1}\, D'_{a}\ln\frac{\Theta(\mathbf{Z}-\mathbf{d}+\mathbf{r})}{\Theta(\mathbf{Z}-\mathbf{d})}+\kappa^{2}_{1}\,D_{a}^{2}\ln\frac{\Theta(\mathbf{Z}-\mathbf{d}+\mathbf{r})}{\Theta(\mathbf{Z}-\mathbf{d})}+2\,\kappa^{2}_{1}\,D_{a}^{2}\ln\Theta(\mathbf{Z}-\mathbf{d})+\,G_{3}-h\]
\[+\left(\kappa_{1}\,D_{a}\ln\frac{\Theta(\mathbf{Z}-\mathbf{d}+\mathbf{r})}{\Theta(\mathbf{Z}-\mathbf{d})}-G_{1}\right)^{2}+\left(\kappa_{2}\,D_{b}\ln\frac{\Theta(\mathbf{Z}-\mathbf{d}+\mathbf{r})}{\Theta(\mathbf{Z}-\mathbf{d})}+G_{2}\right)^{2}\]
\[-\kappa^{2}_{2}\,D'_{b}\ln\frac{\Theta(\mathbf{Z}-\mathbf{d}+\mathbf{r})}{\Theta(\mathbf{Z}-\mathbf{d})}+\kappa^{2}_{2}\,D_{b}^{2}\ln\frac{\Theta(\mathbf{Z}-\mathbf{d}+\mathbf{r})}{\Theta(\mathbf{Z}-\mathbf{d})}+2\,\kappa^{2}_{2}\,D_{b}^{2}\ln\Theta(\mathbf{Z}-\mathbf{d})=0. \]
By (\ref{my corol}), the last equality holds for any $\z\in\C^{g}$, and in particular for $\z=\mathbf{Z}-\mathbf{d}$.
In the same way, it can be checked that functions (\ref{sol DS}) satisfy the second equation of system (\ref{DS}).
Moreover, from (\ref{cor Fay}) we get
\[(\psi\psi^{*})_{\xi\xi}=\kappa^{3}_{1}\,\kappa_{2}\,D_{a}^{3}D_{b}\ln\,\Theta(\mathbf{Z-\mathbf{d}}), \quad (\psi\psi^{*})_{\eta\eta}=\kappa_{1}\,\kappa^{3}_{2}\,D_{a}D_{b}^{3}\ln\,\Theta(\mathbf{Z-\mathbf{d}}).\]
Therefore, taking into account that
\[\varphi_{\xi\eta}=-\frac{1}{2}\left(\kappa^{3}_{1}\,\kappa_{2}\,D_{a}^{3}D_{b}\ln\,\Theta(\mathbf{Z-\mathbf{d}})+\kappa_{1}\,\kappa^{3}_{2}\,D_{a}D_{b}^{3}\ln\,\Theta(\mathbf{Z-\mathbf{d}})\right),\]
the functions (\ref{sol DS}) satisfy the last equation of system (\ref{DS}).
\end{proof}

\vspace{0.5cm}
The solutions (\ref{sol DS}) depend on the Riemann surface $\Rs_{g}$, the points $a,b\in\Rs_{g}$, the vector $\mathbf{d}\in\C^{g}$, the constants $\kappa_{1},\kappa_{2}\in\C\setminus\{0\}$, $h\in\C$, and the local parameters $k_{a}$ and $k_{b}$ near $a$ and $b$. The transformation of the local parameters given by
\begin{align}
k_{a}&\longrightarrow\beta\,k_{a}+\mu_{1}\,k_{a}^{2}+O\left(k_{a}^{3}\right), \nonumber\\
k_{b}&\longrightarrow\beta\,k_{b}+\mu_{2}\,k_{b}^{2}+O\left(k_{b}^{3}\right), \label{trans param DS}
\end{align}
where $\beta,\mu_{1}, \mu_{2}$ are arbitrary complex numbers ($\beta\neq 0$), leads to a different family of solutions of the complexified system (\ref{DS}). These new solutions are obtained via the following transformations:
\begin{align}
\psi(\xi,\eta,t)&\longrightarrow\psi\left(\beta\,\xi+\beta\lambda_{1}\,t,\beta\,\eta+\beta\lambda_{2}\,t,\beta^{2}\,t\right)\,\exp\left\{ -\mathrm{i}\left(\lambda_{1}\,\xi+\lambda_{2}\,\eta+\left(\lambda_{1}^{2}+\lambda_{2}^{2}-\alpha\right)\,\tfrac{t}{2}\right)\right\},\\
\psi^{*}(\xi,\eta,t)&\longrightarrow\beta^{2}\,\psi^{*}\left(\beta\,\xi+\beta\lambda_{1}\,t,\beta\,\eta+\beta\lambda_{2}\,t,\beta^{2}\,t\right)\,\exp\left\{ \mathrm{i}\left(\lambda_{1}\,\xi+\lambda_{2}\,\eta+\left(\lambda_{1}^{2}+\lambda_{2}^{2}-\alpha\right)\,\tfrac{t}{2}\right)\right\}, \nonumber \\
\phi(\xi,\eta,t)&\longrightarrow\beta^{2}\,\phi\left(\beta\,\xi+\beta\lambda_{1}\,t,\beta\,\eta+\beta\lambda_{2}\,t,\beta^{2}\,t\right)+\frac{\alpha}{4}, \label{trans sol DS}
\end{align}
where $\lambda_{i}=\kappa_{i}\,\mu_{i}\,\beta^{-1}$ and $\alpha=h(1-\beta^{2})$.

\subsection{Reality condition and solutions of the DS1$^{\rho}$ equation}

Let us consider the DS1$^{\rho}$ equation
\begin{align}	
\mathrm{i}\,\psi_{t}+\frac{1}{2}(\partial_{\xi}^{2}+\partial_{\eta}^{2})\psi+2\,\phi\,\psi &=0,     \nonumber  \\
\partial_{\xi}\partial_{\eta}\,\phi+\rho\, \frac{1}{2}(\partial_{\xi}^{2}+\partial_{\eta}^{2})|\psi|^{2} & =0,  \label{DS1} 
\end{align}
where $\rho=\pm 1$. Here $\xi,\,\eta,\,t$ are real variables. Algebro-geometric solutions of (\ref{DS1}) are constructed from solutions $\psi,\,\psi^{*}$ (\ref{sol DS}) of the complexified system, under the reality condition $\psi^{*}=\rho\,\overline{\psi}$.
\\

Let $\Rs_{g}$ be a real compact Riemann surface with an anti-holomorphic involution $\tau$. Denote by $\Rs_{g}(\R)$ the set of fixed points of the involution $\tau$ (see Appendix A.1). Let us choose the homology basis satisfying (\ref{hom basis}). Then the solutions of (\ref{DS1}) are given by

\begin{theorem}
Let $a,b\in\Rs_{g}(\R)$ be distinct points with local parameters satisfying $\overline{k_{a}(\tau p)}=k_{a}(p)$ for any $p$ lying in a neighbourhood of $a$, and $\overline{k_{b}(\tau p)}=k_{b}(p)$ for any $p$ lying in a neighbourhood of $b$. Denote by $\{\mathcal{A},\mathcal{B},\ell\}$ the standard generators of the relative homology group $H_{1}(\Rs_g,\{a,b\})$ (see Appendix A.2). 
Let $\mathbf{d}_{R}\in\R^{g}$, $\mathbf{T}\in\Z^{g}$, and define $\mathbf{d}=\mathbf{d}_{R}+\frac{\mathrm{i}\pi}{2}(\text{diag}(\mathbb{H})-2\,\mathbf{T})$. Morover,
take $\theta,\,h,\in\R$, $\tilde{\kappa}_{1},\,\kappa_{2}\in\R\setminus\left\{0\right\}$ and put 
\begin{equation}
\kappa_{1}=-\rho\,\tilde{\kappa}_{1}^{2}\,\kappa_{2}\,q_{2}(a,b)\,\exp\left\{\tfrac{1}{2}\left\langle \B \mathbf{M},\mathbf{M}\right\rangle+\left\langle \mathbf{r}+\mathbf{d},\mathbf{M}\right\rangle\right\}, \label{kappa DS1}
\end{equation}
where $\mathbf{M}\in\Z^{g}$ is defined in (\ref{hom basis 3}).
Then the following functions $\psi$ and $\phi$ are solutions of the DS1$^{\rho}$ equation 
\begin{equation}
\psi(\xi,\eta,t)=|A|\,e^{\mathrm{i}\theta}\,\frac{\Theta(\mathbf{Z}-\mathbf{d}+\mathbf{r})}{\Theta(\mathbf{Z}-\mathbf{d})}\,\exp\left\{ \mathrm{i}\left(-G_{1}\,\xi-G_{2}\,\eta+G_{3}\,\tfrac{t}{2}\right)\right\}, \label{psi DS1}
\end{equation}
\begin{equation}
\phi(\xi,\eta,t)=\frac{1}{2}\,(\ln\,\Theta(\mathbf{Z}-\mathbf{d}))_{\xi\xi}+\frac{1}{2}\,(\ln\,\Theta(\mathbf{Z}-\mathbf{d}))_{\eta\eta}+\frac{1}{4}h,\label{sol DS1}
\end{equation}
 where $|A|=\left|\tilde{\kappa}_{1}\,\kappa_{2}\,q_{2}(a,b)\right|\,\exp\left\{\left\langle \mathbf{d}_{R},\mathbf{M}\right\rangle\right\}.$
Here $\mathbf{r}=\int_{\ell}\omega$, and the vector $\mathbf{Z}$ is defined in (\ref{Z DS}). Scalars $q_{2}(a,b),\,G_{1},\,G_{2}$ and $G_{3}$ are defined in (\ref{q2}), (\ref{N12 DS}) and (\ref{N3 DS}) respectively.
\end{theorem}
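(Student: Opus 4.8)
The plan is to obtain Theorem 3.2 as a reality reduction of Theorem 3.1 rather than by any fresh computation with Fay's identity. Since the functions $\psi,\psi^{*},\varphi$ in (\ref{sol DS}) already solve the complexified system (\ref{DS}) for arbitrary constants, and since (\ref{DS}) collapses to the DS1$^{\rho}$ system (\ref{DS1}) exactly when the reality condition $\psi^{*}=\rho\,\overline{\psi}$ holds (forcing $\varphi=\phi$ with $\phi$ real), the whole proof amounts to checking that the specific data in the statement --- the real points $a,b$, the decomposition $\mathbf{d}=\mathbf{d}_{R}+\tfrac{\mathrm{i}\pi}{2}(\mathrm{diag}(\mathbb{H})-2\mathbf{T})$, the constraint (\ref{kappa DS1}) on $\kappa_{1}$, and $|A|=|\tilde{\kappa}_{1}\kappa_{2}q_{2}(a,b)|\exp\{\langle\mathbf{d}_{R},\mathbf{M}\rangle\}$ with phase $\theta$ --- make $\rho\,\overline{\psi}$ coincide with $\psi^{*}$ and make $\phi$ real. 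So I would specialise the constants in (\ref{sol DS}) to those of the statement and verify these two reality statements.

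First I would assemble the conjugation dictionary supplied by the real structure and the homology basis (\ref{hom basis}) of Appendix A. Because $a,b\in\Rs_{g}(\R)$ and the local parameters satisfy $\overline{k_{a}(\tau p)}=k_{a}(p)$ and $\overline{k_{b}(\tau p)}=k_{b}(p)$, the reality relation $\overline{\omega_{j}}=-\tau^{*}\omega_{j}$ forces the expansion coefficients in (\ref{exp hol diff}) to be purely imaginary, so $\mathbf{V}_{a},\mathbf{V}_{b},\mathbf{W}_{a},\mathbf{W}_{b}\in\mathrm{i}\R^{g}$ and hence $\mathbf{Z}$ in (\ref{Z DS}) lies in $\R^{g}$ once $\kappa_{1},\kappa_{2}$ are real --- the reality of $\kappa_{1}$ prescribed by (\ref{kappa DS1}) being itself something I must read off from the conjugation behaviour of $q_{2}(a,b)$. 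The same relation yields $\overline{\B}=\B-2\mathrm{i}\pi\mathbb{H}$, so that $\overline{\Theta(\z|\B)}=\Theta(\overline{\z}+\mathrm{i}\pi\,\mathrm{diag}(\mathbb{H})\,|\,\B)$, and it yields $\overline{\mathbf{r}}=-\mathbf{r}+\B\mathbf{M}+2\mathrm{i}\pi\mathbf{N}$ for the Abel image of the contour $\ell$ joining the two fixed points, with $\mathbf{M}$ the vector of (\ref{hom basis 3}).

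Feeding this dictionary into $\Theta(\mathbf{Z}-\mathbf{d})$ first, I expect the $\tfrac{\mathrm{i}\pi}{2}\,\mathrm{diag}(\mathbb{H})$ piece of $\mathbf{d}$ to cancel the characteristic shift produced by $\overline{\B}=\B-2\mathrm{i}\pi\mathbb{H}$ and the remaining shift to be a full period, so that $\overline{\Theta(\mathbf{Z}-\mathbf{d})}=\Theta(\mathbf{Z}-\mathbf{d})$; combined with $h\in\R$ this already gives $\overline{\phi}=\phi$ in (\ref{sol DS1}), i.e.\ $\varphi=\phi$. Applying the same dictionary to $\Theta(\mathbf{Z}-\mathbf{d}+\mathbf{r})$ turns its conjugate, after a use of the quasi-periodicity law (\ref{2.4}) to reabsorb $\B\mathbf{M}$, into $\Theta(\mathbf{Z}-\mathbf{d}-\mathbf{r})$ times the exponential factor $\exp\{-\tfrac12\langle\B\mathbf{M},\mathbf{M}\rangle-\langle\mathbf{Z}-\mathbf{d}-\mathbf{r},\mathbf{M}\rangle\}$. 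The term linear in $\mathbf{Z}$ here recombines with the plane-wave phase, and matching the $\xi,\eta,t$-dependent exponents against those of $\psi^{*}$ pins down that the conjugates of $G_{1},G_{2},G_{3}$ are consistent with (\ref{N12 DS})--(\ref{N3 DS}); matching the constant amplitude then produces exactly (\ref{kappa DS1}) together with the stated $|A|$ and phase $\theta$. This identifies $\rho\,\overline{\psi}$ with $\psi^{*}$ and completes the reduction to (\ref{DS1}).

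The main obstacle will be the bookkeeping in this last step: one must fix the homology class of $\ell$ relative to $\{a,b\}$ so that $\overline{\mathbf{r}}=-\mathbf{r}+\B\mathbf{M}+2\mathrm{i}\pi\mathbf{N}$ holds with precisely the $\mathbf{M}$ of (\ref{hom basis 3}), track the half-integer characteristic generated by $\mathrm{diag}(\mathbb{H})$ through the even theta function, and --- most delicately --- establish the conjugation property of $q_{2}(a,b)$ (via (\ref{q2}) and the imaginary vectors $\mathbf{V}_{a},\mathbf{V}_{b}$) that both guarantees $\kappa_{1}$ is real and makes the exponential factor from (\ref{2.4}) match the prefactor in (\ref{kappa DS1}) with the correct sign and the correct $\rho=\pm1$. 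Once that single conjugation identity is in hand, the remaining phase and amplitude comparisons are routine.
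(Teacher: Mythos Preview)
Your approach is correct and essentially identical to the paper's: reduce Theorem~3.1 to DS1$^{\rho}$ by verifying $\psi^{*}=\rho\,\overline{\psi}$ via the conjugation dictionary for $\mathbf{Z},\mathbf{r},\Theta,K_{1},K_{2}$ and then reading off the amplitude constraint, which produces exactly (\ref{kappa DS1}) and the stated $|A|$. Two small bookkeeping slips to watch when you carry it out: with the conventions of (\ref{hom basis 3}) one gets $\overline{\mathbf{r}}=-\mathbf{r}-2\mathrm{i}\pi\mathbf{N}-\B\mathbf{M}$ (opposite signs to what you wrote), and $\overline{\Theta(\z)}$ carries the root of unity $\kappa$ of (\ref{conj theta}), which cancels in the ratio $\Theta(\mathbf{Z}-\mathbf{d}+\mathbf{r})/\Theta(\mathbf{Z}-\mathbf{d})$ and in the second logarithmic derivatives defining $\phi$, but is not literally $1$.
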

\vspace{0.5cm}
\noindent
The case where $\mathbf{V}_{a}+\mathbf{V}_{b}=0$ and $\kappa_{1}=\kappa_{2}$ is treated at the end of this section. It corresponds to solutions of the nonlinear Schrödinger equation.

\begin{proof}
Let us check that under the conditions of the theorem, the functions $\psi$ and $\psi^{*}$ (\ref{sol DS}) satisfy the reality conditions (\ref{real cond DS}). First of all, invariance with respect to the anti-involution $\tau$ of the points $a$ and $b$ implies the reality of vector (\ref{Z DS}):
\begin{equation}
\overline{\mathbf{Z}}=\mathbf{Z}. \label{vectZ DS1}
\end{equation}
In fact, using the expansion (\ref{exp hol diff}) of the normalized holomorphic differentials $\omega_{j}$ near $a$ we get
\[\overline{\tau^{*} \omega_{j}}(a)(p)=(\overline{V_{a,j}}+\overline{W_{a,j}}\,k_{a}(p)+\ldots)\,\ud k_{a}(p),\]
for any point $p$ lying in a neighbourhood of $a$.
Then by (\ref{diff hol}), the vectors $\mathbf{V}_{a}$ and $\mathbf{W}_{a}$ appearing in expression (\ref{Z DS}) satisfy
\begin{equation}
\overline{\mathbf{V}_{a}}=-\mathbf{V}_{a}, \quad \overline{\mathbf{W}_{a}}=-\mathbf{W}_{a}. \label{vectV DS1}
\end{equation}
The same holds for the vectors $\mathbf{V}_{b}$ and $\mathbf{W}_{b}$, which leads to (\ref{vectZ DS1}). 
Moreover, from (\ref{diff hol}) and (\ref{hom basis 3}) we get
\begin{equation}
\overline{\mathbf{r}}=-\mathbf{r}-2\mathrm{i}\pi\mathbf{N}-\mathbb{B}\mathbf{M}, \label{r DS1}
\end{equation}
where $\mathbf{N},\,\mathbf{M}\in\Z^{g}$ are defined in (\ref{hom basis 3}) and satisfy 
\begin{equation}
2\,\mathbf{N}+\mathbb{H}\mathbf{M}=0. \label{NHM}
\end{equation}
From (\ref{my corol}), it is straightforward to see that the scalars $K_{1}(a,b)$ and $K_{2}(a,b)$ defined by (\ref{K1}) and (\ref{K2}) satisfy
\begin{equation}
\overline{K_{1}(a,b)}=K_{1}(a,b)-\left\langle \mathbf{V}_{a},\mathbf{M}\right\rangle, \qquad
\overline{K_{2}(a,b)}=K_{2}(a,b)+\left\langle \mathbf{W}_{a},\mathbf{M}\right\rangle, \label{K1 K2 conj}
\end{equation}
which implies
\begin{equation}
\overline{G_{1}}=G_{1}-\kappa_{1}\left\langle \mathbf{V}_{a},\mathbf{M}\right\rangle, \qquad
\overline{G_{2}}=G_{2}-\kappa_{2}\left\langle \mathbf{V}_{b},\mathbf{M}\right\rangle, \qquad \overline{G_{3}}=G_{3}+\kappa_{1}^{2}\left\langle \mathbf{W}_{a},\mathbf{M}\right\rangle+\kappa_{2}^{2}\left\langle \mathbf{W}_{b},\mathbf{M}\right\rangle. \nonumber
\end{equation}
Therefore, the reality condition (\ref{real cond DS}) together with (\ref{sol DS}) leads to 
\begin{multline}
|A|^{2}=-\rho\,\kappa_{1}\kappa_{2}\,q_{2}(a,b)\,\frac{\Theta(\mathbf{Z}-\mathbf{d}-\mathbf{r})\,\Theta(\mathbf{\mathbf{Z}-\overline{\mathbf{d}}}+\mathrm{i}\pi \,\text{diag}(\mathbb{H}))}{\Theta(\mathbf{Z}-\overline{\mathbf{d}}-\mathbf{r}+\mathrm{i}\pi \,\text{diag}(\mathbb{H}))\,\Theta(\mathbf{\mathbf{Z}-\mathbf{d}})}\\
\times\exp\left\{\tfrac{1}{2}\left\langle \B \mathbf{M},\mathbf{M}\right\rangle+\left\langle \mathbf{r}+\overline{\mathbf{d}}-\mathrm{i}\pi\text{diag}(\mathbb{H}),\mathbf{M}\right\rangle\right\},\label{4.7}
\end{multline}
taking into account the action (\ref{conj theta}) of the complex conjugation on the theta function, and the quasi-periodicity (\ref{2.4}) of the theta function. Let us choose a vector $\mathbf{d}\in \C^{g}$ such that
\[\overline{\mathbf{d}}\equiv\mathbf{d}- \mathrm{i}\pi \,\text{diag}(\mathbb{H}) \,\,\, \text{mod} \,\, (2\mathrm{i}\pi \Z^{g}+ \mathbb{B}\,\Z^{g}), \]
which is, since $\overline{\mathbf{d}}-\mathbf{d}$ is purely imaginary, equivalent to $\overline{\mathbf{d}}= \mathbf{d}- \mathrm{i}\pi \,\text{diag}(\mathbb{H}) + 2i\pi\mathbf{T},$ for some $\mathbf{T}\in\Z^{g}$. 
Here we used the action (\ref{matrix B}) of the complex conjugation on the matrix of $\mathcal{B}$-periods $\B$, and the fact that $\B$ has a negative definite real part.
 Hence, the vector $\mathbf{d}$ can be written as 
\begin{equation}
 \mathbf{d}=\mathbf{d}_{R}+\frac{\mathrm{i}\pi}{2}(\text{diag}(\mathbb{H})-2\,\mathbf{T}), \label{vectD DS1}
\end{equation}
 for some $\mathbf{d}_{R}\in\R^{g}$ and $\mathbf{T}\in\Z^{g}$.
Therefore, all theta functions in (\ref{4.7}) cancel out and (\ref{4.7}) becomes
\begin{equation}
|A|^{2}=-\rho\,\kappa_{1}\kappa_{2}\,q_{2}(a,b)\,\exp\left\{\tfrac{1}{2}\left\langle \B \mathbf{M},\mathbf{M}\right\rangle+\left\langle \mathbf{r}+\mathbf{d},\mathbf{M}\right\rangle\right\}. \label{4.9}
\end{equation}
The reality of the right hand side of equality (\ref{4.9}) can be deduced from formula (\ref{arg q2}) for the argument of $q_{2}(a,b)$.
Moreover, it is straightforward to see from (\ref{vectD DS1}) and (\ref{NHM}) that $\exp\{\left\langle \mathbf{d},\mathbf{M}\right\rangle\}$ is also real. Since $\kappa_{1},\,\kappa_{2}$ are arbitrary real constants, we can choose $\kappa_{1}$ as in (\ref{kappa DS1}), which leads to
\[|A|^{2}=\left(\tilde{\kappa}_{1}\,\kappa_{2}\,q_{2}(a,b)\,\exp\left\{\tfrac{1}{2}\left\langle \B \mathbf{M},\mathbf{M}\right\rangle+\left\langle \mathbf{r}+\mathbf{d},\mathbf{M}\right\rangle\right\}\right)^{2}
=\left|\tilde{\kappa}_{1}\,\kappa_{2}\,q_{2}(a,b)\right|^{2}\,\exp\left\{2\left\langle \mathbf{d}_{R},\mathbf{M}\right\rangle\right\}.\]
\end{proof}
\vspace{0.5cm}

Functions $\psi$ and $\phi$ given in (\ref{psi DS1}) and (\ref{sol DS1}) describe a family of algebro-geometric solutions of (\ref{DS1}) depending on: a real Riemann surface $(\Rs_{g},\tau)$, two distinct points $a,b\in\Rs_{g}(\R)$, local parameters $k_{a},k_{b}$ which satisfy $\overline{k_{a}(\tau p)}=k_{a}(p)$ and $\overline{k_{b}(\tau p)}=k_{b}(p)$, and arbitrary constants $\mathbf{d}_{R}\in\R^{g}$, $\mathbf{T}\in\Z^{g}$, $\theta, h,\in\R$, $\tilde{\kappa}_{1},\,\kappa_{2}\in\R\setminus\left\{0\right\}$. Note that by periodicity properties of the theta function, without loss of generality, the vector $\mathbf{T}$ can be chosen in the set $\{0,1\}^{g}$. The case where the Riemann surface is dividing and $\mathbf{T}=0$ is of special importance, because the related solutions are smooth, as explained in the next proposition.
\\

Since the theta function is entire, singularities of the functions $\psi$ and $\phi$ can appear only at the zeros of their denominator. Following Vinnikov's result \cite{Vin} we obtain

\begin{proposition}
Solutions (\ref{psi DS1}) and (\ref{sol DS1}) are 
smooth if the curve $\Rs_{g}$ is dividing and $\mathbf{d}\in \R^{g}$. Assume that solutions (\ref{psi DS1}) and (\ref{sol DS1}) are smooth for any vector  $\mathbf{d}$ lying in a component $T_{v}$ (\ref{3.10}) of the Jacobian, then the curve is dividing and $\mathbf{d}\in \R^{g}$.
\end{proposition}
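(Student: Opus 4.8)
The plan is to reduce the smoothness question entirely to the non-vanishing of the single denominator $\Theta(\mathbf{Z}-\mathbf{d})$, and then to invoke Vinnikov's characterization \cite{Vin} of the real zeros of the theta function. First I would observe that both $\psi$ (\ref{psi DS1}) and $\phi$ (\ref{sol DS1}) are built from the entire function $\Theta$ together with its logarithmic derivatives, all divided by powers of $\Theta(\mathbf{Z}-\mathbf{d})$; since the numerators are entire, the only possible singularities for real $(\xi,\eta,t)$ occur where $\Theta(\mathbf{Z}-\mathbf{d})=0$. By (\ref{vectZ DS1}) the vector $\mathbf{Z}$ of (\ref{Z DS}) satisfies $\overline{\mathbf{Z}}=\mathbf{Z}$, so as $(\xi,\eta,t)$ runs over $\R^{3}$ the argument $\mathbf{Z}-\mathbf{d}$ stays in the real affine subspace $\R^{g}-\mathbf{d}$ of $\C^{g}$. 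Hence smoothness of the solutions is equivalent to $\Theta$ having no zero on the image of this real subspace in the Jacobian.

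For the sufficiency direction I would assume that the curve is dividing and $\mathbf{d}\in\R^{g}$. Then $\mathbf{Z}-\mathbf{d}$ ranges inside $\R^{g}$, whose image lies in the distinguished connected component of the real part $J(\R)$ (the one corresponding to $\mathbf{T}=0$ and $\text{diag}(\mathbb{H})=0$ in the parametrization (\ref{vectD DS1})). Vinnikov's theorem asserts precisely that on a dividing real curve the theta function with zero characteristic does not vanish on this component of the real torus. Consequently $\Theta(\mathbf{Z}-\mathbf{d})\neq 0$ for every real $(\xi,\eta,t)$, the denominators in (\ref{psi DS1}) and (\ref{sol DS1}) never vanish, and $\psi,\phi$ are smooth.

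For the necessity direction I would argue via the converse half of Vinnikov's result. Suppose the solutions stay smooth as $\mathbf{d}$ ranges over an entire component $T_{v}$ of the Jacobian (\ref{3.10}). Evaluating at $(\xi,\eta,t)=0$, where $\mathbf{Z}=0$, smoothness forces $\Theta(-\mathbf{d})\neq 0$ for every $\mathbf{d}\in T_{v}$; equivalently $\Theta$ has no zero on the real component $-T_{v}$ of $J(\R)$. Vinnikov's characterization then shows that a full real component can be free of zeros of $\Theta$ only when the curve is dividing, and it simultaneously identifies the zero-free component as the one swept out by $\mathbf{d}\in\R^{g}$. This yields both required conclusions, namely that the curve is dividing and $\mathbf{d}\in\R^{g}$.

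The main obstacle is the careful bookkeeping needed to match the real affine subspace traced by $\mathbf{Z}-\mathbf{d}$ with the correct connected component of $J(\R)$, and to verify that the relevant hypothesis is non-vanishing over a \emph{full} component rather than merely along the lower-dimensional locus actually swept by $\mathbf{Z}$ at fixed $\mathbf{d}$; this is exactly why the converse is phrased as smoothness for all $\mathbf{d}\in T_{v}$. One must also track the half-integer shifts $\tfrac{\mathrm{i}\pi}{2}\,\text{diag}(\mathbb{H})$ and the integer vector $\mathbf{T}$ of (\ref{vectD DS1}), so as to confirm that the condition ``$\mathbf{d}\in\R^{g}$'' corresponds precisely to the distinguished zero-free component singled out by the dividing property.
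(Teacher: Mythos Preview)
Your proposal is correct and follows essentially the same route as the paper: reduce smoothness to non-vanishing of $\Theta(\mathbf{Z}-\mathbf{d})$, use (\ref{vectZ DS1}) to see that $\mathbf{Z}$ is real so that $\mathbf{Z}-\mathbf{d}$ lies in the set $S_{1}$ of (\ref{S1}), and then invoke Vinnikov's result (Proposition~A.3) in both directions. The only cosmetic difference is that you phrase the converse via $\Theta(-\mathbf{d})\neq 0$ on $-T_{v}$, whereas the paper goes directly to $T_{v}\cap(\Theta)=\emptyset$; evenness of $\Theta$ (or the observation that $-T_{v}=T_{v}$ since $v\in(\Z/2\Z)^{g-r}$) makes these equivalent.
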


\begin{proof}
By (\ref{vectZ DS1}) and (\ref{vectD DS1}), the vector $\mathbf{Z}-\mathbf{d}$ belongs 
to the set $S_{1}$ introduced in (\ref{S1}). Hence by Proposition A.3, the solutions are smooth if the curve is 
dividing (in this case $\text{diag}(\mathbb{H})$=0), and if the argument $\mathbf{Z}-\mathbf{d}$ of the theta function in the denominator is real, which by (\ref{vectZ DS1}) leads to the choice $\mathbf{d}\in\R^{g}$ (and then $\mathbf{T}=0$ in Theorem 3.2). 

The following assertions were proved in \cite{Vin}: let $\Rs_{g}(\R)\neq\emptyset$; if $\Rs_{g}$ is non dividing, then $T_{v}\cap (\Theta)\neq \emptyset$ for all $v$, where $(\Theta)$ denotes the set of zeros of the theta function; if $\Rs_{g}$ is dividing, then $T_{v}\cap \Theta\neq\emptyset$ if and only if $v\neq 0$. It follows that if solutions are smooth for any vector  $\mathbf{d}$ lying in a component $T_{v}$ (\ref{3.10}) of the Jacobian, then the curve is dividing and $v=0$. Hence $\mathbf{d}\in T_{0}$ where $T_{0}=\R^{g}$.
\end{proof}

\subsection{Reality condition and solutions of the DS2$^{\rho}$ equation}
Let us consider the DS2$^{\rho}$ equation
\begin{align}	
\mathrm{i}\,\psi_{t}+\frac{1}{2}(\partial_{\xi}^{2}+\partial_{\eta}^{2})\psi+2\,\phi\,\psi &=0,     \nonumber  \\
\partial_{\xi}\partial_{\eta}\,\phi+\rho\, \frac{1}{2}(\partial_{\xi}^{2}+\partial_{\eta}^{2})|\psi|^{2} & =0,  \label{DS2} 
\end{align}
where $\rho=\pm 1$. Here $t$ is a real variable and variables $\xi,\,\eta$ satisfy $\overline{\xi}=\eta$. Analogously to the case where $\xi$ and $\eta$ are real variables (see Section 3.2), algebro-geometric solutions of (\ref{DS2}) are constructed from solutions $\psi,\,\psi^{*}$ (\ref{sol DS}) of the complexified system by imposing the reality condition $\psi^{*}=\rho\,\overline{\psi}$.
\\

Let $\Rs_{g}$ be a real compact Riemann surface with an anti-holomorphic involution $\tau$. Let us choose the homology basis satisfying (\ref{hom basis}). Then the solutions of (\ref{DS2}) are given by

\begin{theorem}
Let $a,b\in\Rs_{g}$ be distinct points such that $\tau a=b$, with local parameters satisfying $\overline{k_{b}(\tau p)}=k_{a}(p)$ for any point $p$ lying in a neighbourhood of $a$. Denote by $\{\mathcal{A},\mathcal{B},\ell\}$ the standard generators of the relative homology group $H_{1}(\Rs_{g},\{a,b\})$ (see Appendix A.2). Let $\mathbf{T},\,\mathbf{L}\in\Z^{g}$ satisfy
\begin{equation}
2\,\mathbf{T}+ \mathbb{H}\,\mathbf{L}=\text{diag}(\mathbb{H}), \label{TL DS2}
\end{equation}
and define $\mathbf{d}=\frac{1}{2}\mathrm{Re}(\mathbb{B})\,\mathbf{L}+\mathrm{i}\mathbf{d}_{I},$ for some $\mathbf{d}_{I}\in\R^{g}$.
Moreover, take $\theta,\,h\in\R$ and $\kappa_{1},\kappa_{2}\in\C\setminus\left\{0\right\}$ such that $\overline{\kappa_{1}}=\kappa_{2}$. Let us consider the following functions $\psi$ and $\phi$:
\begin{equation}
\psi(\xi,\eta,t)=|A|\,e^{\mathrm{i}\theta}\,\frac{\Theta(\mathbf{Z}-\mathbf{d}+\mathbf{r})}{\Theta(\mathbf{Z}-\mathbf{d})}\,\exp\left\{ \mathrm{i}\left(-G_{1}\,\xi-G_{2}\,\eta+G_{3}\,\tfrac{t}{2}\right)\right\},  \label{psi DS2}
\end{equation}
\begin{equation}
\phi(\xi,\eta,t)=\frac{1}{2}\,(\ln\,\Theta(\mathbf{Z}-\mathbf{d}))_{\xi\xi}+\frac{1}{2}\,(\ln\,\Theta(\mathbf{Z}-\mathbf{d}))_{\eta\eta}+\frac{1}{4}h,\label{sol DS2}
\end{equation}
\vspace{0.3cm}
where $|A|=|\kappa_{1}|\,|q_{2}(a,b)|^{1/2}\,\exp\left\{-\frac{1}{2}\left\langle \mathrm{Re}(\mathbf{r}),\mathbf{L}\right\rangle\right\}.$
Then,
\begin{enumerate}
\item if $\ell$ intersects the set of real ovals of $\Rs_{g}$ only once, and if this intersection is transversal, functions $\psi$ and $\phi$ are solutions of DS2$^{\rho}$ whith $\rho=e^{\mathrm{i}\pi\left\langle \mathbf{N},\mathbf{L}\right\rangle}$,
\item if $\ell$ does not cross any real oval, functions $\psi$ and $\phi$ are solutions of DS2$^{\rho}$ whith $\rho=-e^{\mathrm{i}\pi\left\langle \mathbf{N},\mathbf{L}\right\rangle}$.
\end{enumerate}
Here $\mathbf{r}=\int_{\ell}\omega$, the vector $\mathbf{Z}$ is defined in (\ref{Z DS}) and vector $\mathbf{N}\in\Z^{g}$ is defined in (\ref{hom basis 1}). Scalars $q_{2}(a,b),\,G_{1},\,G_{2}$ and $G_{3}$ are defined in (\ref{q2}), (\ref{N12 DS}) and (\ref{N3 DS}) respectively.
\end{theorem}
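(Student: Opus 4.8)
The plan is to argue exactly as in Theorem 3.2. By Theorem 3.1 the functions $\psi,\psi^{*},\varphi$ of (\ref{sol DS}), with $A=|A|\,e^{\mathrm{i}\theta}$, already solve the complexified system (\ref{DS}) for every admissible choice of data; hence it suffices to check that the hypotheses force the reality condition $\psi^{*}=\rho\,\overline{\psi}$ of (\ref{real cond DS}), which collapses (\ref{DS}) to (\ref{DS2}) with $\varphi=\phi$ real. So the whole proof reduces to computing $\overline{\psi}$ and comparing it with the expression for $\psi^{*}$ in (\ref{sol DS}); the new feature relative to Section 3.2 is that $\tau$ now exchanges the marked points, $\tau a=b$, rather than fixing them, and that $\overline{\xi}=\eta$, $\overline{\kappa_{1}}=\kappa_{2}$.

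First I would record the conjugation relations forced by $\tau a=b$. Expanding the normalized differentials near $a$ and $b$ as in (\ref{exp hol diff}), the transformation law (\ref{diff hol}) together with $\overline{k_{b}(\tau p)}=k_{a}(p)$ yields $\overline{\mathbf{V}_{a}}=-\mathbf{V}_{b}$ and $\overline{\mathbf{W}_{a}}=-\mathbf{W}_{b}$ (and the symmetric pair). Feeding these, $\overline{\kappa_{1}}=\kappa_{2}$ and $\overline{\xi}=\eta$ into (\ref{Z DS}) gives $\overline{\mathbf{Z}}=-\mathbf{Z}$, so that the argument of every theta function in $\overline{\psi}$ can be reflected through the evenness of $\Theta$. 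In the same spirit the involution interchanges $a$ and $b$ in (\ref{K1})--(\ref{K2}), so $\overline{K_{1}(a,b)}$ and $\overline{K_{2}(a,b)}$ are expressed through $K_{1}(b,a)$ and $K_{2}(b,a)$ up to lattice terms coming from the $\tau$-adapted basis (\ref{hom basis}); through (\ref{N12 DS})--(\ref{N3 DS}) this produces $\overline{G_{1}},\overline{G_{2}},\overline{G_{3}}$. Finally, since $\tau$ sends the path $\ell$ from $a$ to $b$ to a path from $b$ to $a$, I would obtain $\overline{\mathbf{r}}=\mathbf{r}+2\mathrm{i}\pi\mathbf{N}+\mathbb{B}\mathbf{L}$, with $\mathbf{N}$ as in (\ref{hom basis 1}) and $\mathbf{L}$ the $\mathbb{B}$-period count attached to $\ell$ (note the sign opposite to (\ref{r DS1}), a direct consequence of $\tau$ swapping the endpoints).

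Next I would substitute these data into $\psi^{*}=\rho\,\overline{\psi}$. Using $\overline{\mathbf{Z}}=-\mathbf{Z}$ and the evenness of $\Theta$, then the conjugation law (\ref{conj theta}) and the quasi-periodicity (\ref{2.4}), the theta quotient of $\overline{\psi}$ is brought to the argument of the quotient of $\psi^{*}$ provided the denominators $\Theta(\mathbf{Z}-\mathbf{d})$ and $\overline{\Theta(\mathbf{Z}-\mathbf{d})}$ agree modulo the lattice, i.e.\ provided $\mathbf{d}+\overline{\mathbf{d}}\equiv\mathrm{i}\pi\,\text{diag}(\mathbb{H})\pmod{2\mathrm{i}\pi\Z^{g}+\mathbb{B}\Z^{g}}$. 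Since $\mathbf{d}+\overline{\mathbf{d}}=\text{Re}(\mathbb{B})\mathbf{L}$ for $\mathbf{d}=\tfrac12\text{Re}(\mathbb{B})\mathbf{L}+\mathrm{i}\mathbf{d}_{I}$ (the same $\mathbf{L}$ as in $\overline{\mathbf{r}}$), and since (\ref{matrix B}) gives $\text{Re}(\mathbb{B})\mathbf{L}\equiv-\mathrm{i}\pi\mathbb{H}\mathbf{L}\pmod{\Lambda}$, this congruence reduces to the parity condition $\mathbb{H}\mathbf{L}\equiv\text{diag}(\mathbb{H})\pmod 2$, which is exactly (\ref{TL DS2}); note that $\mathbf{d}_{I}$ stays free because only $\text{Re}(\mathbf{d})$ enters. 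With this choice all theta quotients cancel, the residual $\mathbf{Z}$-linear quasi-periodicity exponentials balance against the phases $G_{1},G_{2},G_{3}$, and $\psi^{*}=\rho\,\overline{\psi}$ degenerates to a single scalar identity fixing $|A|^{2}$ (yielding $|A|=|\kappa_{1}|\,|q_{2}(a,b)|^{1/2}\exp\{-\tfrac12\langle\text{Re}(\mathbf{r}),\mathbf{L}\rangle\}$) together with a constraint on the phase $\rho$.

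The heart of the argument, and the step I expect to be the main obstacle, is pinning down $\rho=\pm e^{\mathrm{i}\pi\langle\mathbf{N},\mathbf{L}\rangle}$. Because $\kappa_{1}\kappa_{2}=|\kappa_{1}|^{2}$, this amounts to tracking the phase of $q_{2}(a,b)$ under conjugation: from (\ref{q2}) and the formula (\ref{arg q2}) for $\arg q_{2}(a,b)$ one reads off how $\tau a=b$ relates $\overline{q_{2}(a,b)}$ to $q_{2}(a,b)$, while the leftover lattice exponentials from (\ref{2.4}) and from $\overline{\mathbf{r}}=\mathbf{r}+2\mathrm{i}\pi\mathbf{N}+\mathbb{B}\mathbf{L}$ contribute the factor $e^{\mathrm{i}\pi\langle\mathbf{N},\mathbf{L}\rangle}$. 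The overall sign is then dictated by the topology of $\ell$: a single transversal intersection of $\ell$ with a real oval changes the relative determination of the objects along $\ell$ and its $\tau$-image, giving $\rho=e^{\mathrm{i}\pi\langle\mathbf{N},\mathbf{L}\rangle}$ (case~1), whereas the absence of such a crossing yields the opposite sign $\rho=-e^{\mathrm{i}\pi\langle\mathbf{N},\mathbf{L}\rangle}$ (case~2). Carrying this parity bookkeeping through consistently---and matching it against the square-root ambiguity in $|q_{2}(a,b)|^{1/2}$ appearing in $|A|$---is the delicate part; everything else is the routine substitution already rehearsed for DS1.
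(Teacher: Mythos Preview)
Your overall strategy is exactly the paper's: invoke Theorem~3.1 for the complexified system and reduce everything to the reality condition $\psi^{*}=\rho\,\overline{\psi}$, using $\overline{\mathbf{Z}}=-\mathbf{Z}$, $\overline{G_{1}}=G_{2}$, $G_{3}\in\R$. Two concrete points, however, are misidentified and would derail the computation.

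First, the vector $\mathbf{L}$ does \emph{not} come from the conjugation of $\mathbf{r}$. By (\ref{hom basis 1}) the action of $\tau$ on $\ell$ when $\tau a=b$ is $\tau\ell=-\ell+\mathbf{N}\mathcal{A}$, with \emph{no} $\mathcal{B}$-component; consequently $\overline{\mathbf{r}}=\mathbf{r}-2\mathrm{i}\pi\mathbf{N}$ (this is (\ref{r DS2})), and there is no $\mathbb{B}\mathbf{L}$ term. The integer vector $\mathbf{L}$ enters instead as a \emph{free lattice parameter} in the choice of $\mathbf{d}$: one needs $\overline{\mathbf{d}}\equiv-\mathbf{d}-\mathrm{i}\pi\,\text{diag}(\mathbb{H})\pmod{2\mathrm{i}\pi\Z^{g}+\mathbb{B}\Z^{g}}$, and writing the $\mathbb{B}$-part of this congruence as $\mathbb{B}\mathbf{L}$ forces $\mathbf{d}=\tfrac12\text{Re}(\mathbb{B})\mathbf{L}+\mathrm{i}\mathbf{d}_{I}$ together with (\ref{TL DS2}). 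After this choice the theta quotients cancel and one is left with $|A|^{2}=-\rho\,|\kappa_{1}|^{2}\,q_{2}(a,b)\,e^{-\langle\mathbf{r},\mathbf{L}\rangle}$; since $\text{Im}(\mathbf{r})=\pi\mathbf{N}$, the factor $e^{-\mathrm{i}\pi\langle\mathbf{N},\mathbf{L}\rangle}$ emerges here from the pairing of $\mathbf{N}$ (from $\overline{\mathbf{r}}$) with $\mathbf{L}$ (from $\mathbf{d}$), not from a $\mathbb{B}$-period of $\ell$.

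Second, you invoke the wrong result for $q_{2}(a,b)$. Formula (\ref{arg q2}) is Proposition~B.3, which treats the case $\tau a=a$, $\tau b=b$. For $\tau a=b$ the relevant statement is Proposition~B.2: $q_{2}(a,b)$ is \emph{real}, and its sign is determined topologically by whether $\ell$ crosses a real oval (negative if there is a single transversal crossing, positive if none). It is precisely this sign, combined with $e^{-\mathrm{i}\pi\langle\mathbf{N},\mathbf{L}\rangle}$ from the previous paragraph, that yields $\rho=-\text{sign}(q_{2}(a,b))\,e^{-\mathrm{i}\pi\langle\mathbf{N},\mathbf{L}\rangle}$ and hence the two cases in the statement. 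There is no ``square-root ambiguity'' to manage: $q_{2}(a,b)\in\R$ makes $|A|$ unambiguous.
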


\begin{proof}
Analogously to the proof of Theorem 3.2, let us check that under the conditions of the theorem, the functions $\psi^{*}$ and $\psi$ (\ref{DS}) satisfy the reality condition (\ref{real cond DS}).
First of all, due to the fact that points $a$ and $b$ are interchanged by $\tau$, the vector $\mathbf{Z}$ (\ref{Z DS}) satisfies
\begin{equation}
\overline{\mathbf{Z}}= -\mathbf{Z}. \label{vectZ DS2}
\end{equation}
In fact, using the expansion (\ref{exp hol diff}) of the normalized holomorphic differentials $\omega_{j}$ near $a$ we get
\[\overline{\tau^{*} \omega_{j}}(a)(p)=(\overline{V_{b,j}}+\overline{W_{b,j}}\,k_{a}(p)+\ldots)\,\ud k_{a}(p),\]
for any point $p$ lying in a neighbourhood of $a$.
Then by (\ref{diff hol}) the vectors $\mathbf{V}_{a},\,\mathbf{V}_{b}$ and $\mathbf{W}_{a},\,\mathbf{W}_{b}$ appearing in the vector $\mathbf{Z}$ satisfy
\begin{equation}
\overline{\mathbf{V}_{a}}=-\mathbf{V}_{b}, \quad \overline{\mathbf{W}_{a}}=-\mathbf{W}_{b}, \label{vectV DS2}
\end{equation}
which leads to (\ref{vectZ DS2}). 
From (\ref{diff hol}) and (\ref{hom basis 1}) we get 
\begin{equation}
\overline{\mathbf{r}}=\mathbf{r}- 2\mathrm{i}\pi\, \mathbf{N}, \label{r DS2}
\end{equation}
where $\mathbf{N}\in\Z^{g}$ is defined in (\ref{hom basis 1}).
By Proposition B.3, the scalar $q_{2}(a,b)$ is real. From (\ref{my corol}), it is straightforward to see that the scalars $K_{1}(a,b)$ and $K_{2}(a,b)$, defined in (\ref{K1}) and (\ref{K2}), satisfy
\[\overline{K_{1}(a,b)}=K_{1}(b,a), \qquad \overline{K_{2}(a,b)}=K_{2}(b,a),\]
which leads to $\overline{G_{1}}=G_{2}$ and $G_{3}\in\R$. 
Therefore, the reality condition (\ref{real cond DS}) together with (\ref{sol DS}) leads to 
\begin{equation}
|A|^{2}=-\rho\,|\kappa_{1}|^{2}\,q_{2}(a,b)\frac{\Theta(\mathbf{Z}-\mathbf{d}-\mathbf{r})\,\Theta(\mathbf{\mathbf{Z}+\overline{\mathbf{d}}}+\mathrm{i}\pi \,\text{diag}(\mathbb{H}))}{\Theta(\mathbf{Z}+\overline{\mathbf{d}}-\mathbf{r}+\mathrm{i}\pi \,\text{diag}(\mathbb{H}))\,\Theta(\mathbf{\mathbf{Z}-\mathbf{d}})},\label{4.12}
\end{equation}
taking into account (\ref{conj theta}).
Let us choose a vector $\mathbf{d}\in \C^{g}$ such that
\begin{equation}
 \overline{\mathbf{d}}= -\mathbf{d}- \mathrm{i}\pi \,\text{diag}(\mathbb{H}) + 2\mathrm{i}\pi\mathbf{T}+\B\,\mathbf{L},  \nonumber
\end{equation}
for some vector $\mathbf{T},\,\mathbf{L}\in\Z^{g}$. The reality of the vector  $\overline{\mathbf{d}}+\mathbf{d}$ together with (\ref{matrix B}) imply 
\begin{equation}
\mathbf{d}=\frac{1}{2}\mathrm{Re}(\mathbb{B})\,\mathbf{L}+\mathrm{i}\mathbf{d}_{I}\label{4.13}
\end{equation}
for some $\mathbf{d}_{I}\in\R^{g}$, where $2\,\mathbf{T}+ \mathbb{H}\,\mathbf{L}=\text{diag}(\mathbb{H})$.
With this choice of vector $\mathbf{d}$, (\ref{4.12}) becomes
\begin{equation}
|A|^{2}=-\rho\,|\kappa_{1}|^{2}\,q_{2}(a,b)\,e^{-\left\langle \mathbf{r},\mathbf{L}\right\rangle}.  \label{4.14}
\end{equation}
Moreover, from (\ref{r DS2}) we deduce that equality (\ref{4.14}) holds only if 
\[\rho=-\text{sign}(q_{2}(a,b))\,e^{-\mathrm{i}\pi\left\langle \mathbf{N},\mathbf{L}\right\rangle}.\]
The sign of $q_{2}(a,b)$ in the case where $\tau a =b$ is given in Proposition B2, which completes the proof.
\end{proof}

\vspace{0.3cm}
\noindent
\begin{corollary} From Theorem 3.3 we deduce that
\begin{enumerate}
\item if $\Rs_{g}$ is dividing and each component of $\mathbf{L}$ is even, functions (\ref{psi DS2}) and (\ref{sol DS2}) are solutions of DS2$^{+}$,
\item if $\Rs_{g}$ does not have real ovals and each component of $\mathbf{L}$ is even, functions (\ref{psi DS2}) and (\ref{sol DS2}) are solutions of DS2$^{-}$.
\end{enumerate}
\end{corollary}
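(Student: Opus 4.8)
The proposal is to obtain both assertions as immediate specializations of Theorem 3.3, once the global hypotheses on $\Rs_{g}$ are translated into statements about how the contour $\ell$ meets the real ovals, and once the factor $e^{\pm\mathrm{i}\pi\langle\mathbf{N},\mathbf{L}\rangle}$ appearing in the two alternatives of that theorem is evaluated. I would begin with the latter: since $\mathbf{N}\in\Z^{g}$ and every component of $\mathbf{L}$ is even, the integer $\langle\mathbf{N},\mathbf{L}\rangle=\sum_{j}N_{j}L_{j}$ is even, so $e^{\mathrm{i}\pi\langle\mathbf{N},\mathbf{L}\rangle}=e^{-\mathrm{i}\pi\langle\mathbf{N},\mathbf{L}\rangle}=1$. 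Hence the value of $\rho$ furnished by Theorem 3.3 collapses to $+1$ in its first alternative and to $-1$ in its second, and it only remains to decide which alternative is in force under each hypothesis on $\Rs_{g}$.

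For the first item, suppose $\Rs_{g}$ is dividing. Using the description of dividing real curves recalled in Appendix A, the complement $\Rs_{g}\setminus\Rs_{g}(\R)$ has exactly two connected components exchanged by the anti-involution $\tau$, so the conjugate points $a$ and $b=\tau a$ lie in opposite halves. Any arc joining them then crosses the union of real ovals an odd number of times, and choosing $\ell$ to realize the minimal geometric intersection it meets $\Rs_{g}(\R)$ exactly once and transversally. This is precisely case (1) of Theorem 3.3, so $\rho=e^{\mathrm{i}\pi\langle\mathbf{N},\mathbf{L}\rangle}=1$ and the functions (\ref{psi DS2})--(\ref{sol DS2}) solve DS2$^{+}$. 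For the second item, suppose $\Rs_{g}$ has no real ovals, i.e.\ $\Rs_{g}(\R)=\emptyset$; then $\ell$ cannot cross any real oval, case (2) of Theorem 3.3 applies, and $\rho=-e^{\mathrm{i}\pi\langle\mathbf{N},\mathbf{L}\rangle}=-1$, so the same functions solve DS2$^{-}$.

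The parity computation and the sign bookkeeping are routine; the step I expect to require the most care is the topological claim underlying the first item, namely that for a dividing curve the conjugate points $a$ and $\tau a$ occupy the two distinct components of $\Rs_{g}\setminus\Rs_{g}(\R)$ and that the connecting generator $\ell$ can consequently be taken to meet the real locus in a single transversal point admissible as the relative-homology generator used in Theorem 3.3. I would justify this from the adapted homology basis and oval description of Appendix A. As a consistency check I would also note that evenness of $\mathbf{L}$ together with (\ref{TL DS2}) forces $\mathrm{diag}(\mathbb{H})$ to be even, hence zero, which is exactly the algebraic signature of a dividing curve; this guarantees that the two hypotheses of the first item are mutually compatible and that $\mathbf{T}=-\tfrac{1}{2}\mathbb{H}\mathbf{L}$ furnishes an admissible integer solution of (\ref{TL DS2}).
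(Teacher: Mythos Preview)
Your proposal is correct and follows exactly the route the paper intends: the corollary is stated without proof in the paper, and your argument supplies the two implicit steps---evaluating $e^{\mathrm{i}\pi\langle \mathbf{N},\mathbf{L}\rangle}=1$ from the evenness of $\mathbf{L}$, and determining which alternative of Theorem~3.3 applies from the topology of $\Rs_{g}$. Your consistency check via (\ref{TL DS2}) is a nice addition; note only that $\mathrm{diag}(\mathbb{H})=0$ characterizes the union of the dividing case and the case $\Rs_{g}(\R)=\emptyset$ (see Appendix~A.1), not the dividing case alone, which is harmless since those are precisely the two hypotheses of the corollary.
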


\noindent
\begin{remark}
\rm{To construct solutions associated to non-dividing Riemann surfaces, we first observe from (\ref{TL DS2}) that all components of the vector $\mathbf{L}$ cannot be even, since for non dividing Riemann surfaces the vector $\text{diag}(\mathbb{H})$ contains odd coefficients (see Appendix A.1). In this case, the vector $\mathbf{N}$ has to be computed to determine the sign $\rho=-e^{\mathrm{i}\pi\left\langle \mathbf{N},\mathbf{L}\right\rangle}$ in the reality condition. This vector $\mathbf{N}$ is defined by the action of $\tau$ on the relative homology group $H_{1}(\Rs_{g},\{a,b\})$ (see (\ref{hom basis 1})). It follows that we do not have a general expression for this vector.}
\end{remark}

To ensure the smoothness of solutions (\ref{psi DS2}) and (\ref{sol DS2}) for all complex 
conjugate $\xi, \,\eta,$ and $t\in\R$, the function 
$\Theta(\mathbf{Z}-\mathbf{d})$ of the variables $\xi,\,\eta,\,t$ must not vanish. Following the work by Dubrovin and Natanzon \cite{DN} on smoothness of algebro-geometric solutions of the Kadomtsev Petviashvili (KP1) equation in the case where $\Rs_{g}$ admits real ovals we get

\begin{proposition} Functions (\ref{psi DS2}) and (\ref{sol DS2}) are smooth solutions of DS2$^{+}$ if the curve is an M-curve and $\mathbf{d}\in \mathrm{i}\,\R^{g}$. Assume that the curve admits real ovals and functions (\ref{psi DS2}), (\ref{sol DS2}) are smooth solutions of DS2$^{\rho}$ for any vector $\mathbf{d}$ lying in a component $\tilde{T}_{v}$ (\ref{3.12}) of the Jacobian, then the curve is an M-curve, $\mathbf{d}\in \mathrm{i}\,\R^{g}$ and $\rho=+1$.
\end{proposition}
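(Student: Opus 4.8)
The plan is to reduce the smoothness question to a statement about the intersection of the theta divisor $(\Theta)$ with a suitable real torus in the Jacobian, and then to invoke the Dubrovin--Natanzon classification \cite{DN}, in exactly the way Proposition 3.3 reduced the DS1 case to Vinnikov's theorem \cite{Vin}. Since the theta function is entire, the only possible singularities of the functions (\ref{psi DS2}) and (\ref{sol DS2}) come from the zeros of the denominator $\Theta(\mathbf{Z}-\mathbf{d})$; hence smoothness for all complex-conjugate $\xi,\eta$ and all real $t$ is equivalent to requiring $\Theta(\mathbf{Z}-\mathbf{d})\neq 0$ along the orbit traced out by these variables.

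First I would use the reality relation $\overline{\mathbf{Z}}=-\mathbf{Z}$ from (\ref{vectZ DS2}), together with the form (\ref{4.13}) of $\mathbf{d}$, to show that $\mathbf{Z}-\mathbf{d}$ runs over a subset of a fixed real torus $\tilde{T}_{v}$ (\ref{3.12}) of the Jacobian, the component being selected by the class of $\mathbf{d}$ modulo the lattice. This is the $\tau a=b$ analogue of the computation in the proof of Proposition 3.3, where the relation $\overline{\mathbf{Z}}=\mathbf{Z}$ placed the argument in the real set $S_{1}$ (\ref{S1}); here the purely imaginary nature of $\mathbf{Z}$ places the argument in the imaginary component $\tilde{T}_{0}$ precisely when $\mathbf{d}\in\mathrm{i}\,\R^{g}$. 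Indeed, since $\B$ has negative definite (hence nonsingular) real part, requiring $\mathbf{d}\in\mathrm{i}\,\R^{g}$ forces $\mathrm{Re}(\mathbb{B})\,\mathbf{L}=0$ in (\ref{4.13}), i.e. $\mathbf{L}=0$.

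Next I would invoke the relevant assertion of \cite{DN}: for a real curve admitting real ovals, the theta divisor avoids the component $\tilde{T}_{v}$ if and only if the curve is an M-curve and $v=0$. For the forward direction this yields smoothness at once: if $\Rs_{g}$ is an M-curve and $\mathbf{d}\in\mathrm{i}\,\R^{g}$, then $\mathbf{Z}-\mathbf{d}\in\tilde{T}_{0}$ never meets $(\Theta)$, so the denominator never vanishes. It then remains to pin down $\rho=+1$: an M-curve is dividing, so $\text{diag}(\mathbb{H})=0$, and the choice $\mathbf{L}=0$ forced above has all components even, whence the first item of the corollary to Theorem 3.3 gives $\rho=+1$. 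For the converse, smoothness of the solutions for \emph{every} $\mathbf{d}$ in a component $\tilde{T}_{v}$ means $(\Theta)\cap\tilde{T}_{v}=\emptyset$, and the same theorem of \cite{DN} forces $\Rs_{g}$ to be an M-curve and $v=0$, hence $\mathbf{d}\in\mathrm{i}\,\R^{g}$ and, as before, $\rho=+1$.

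The main obstacle will be the precise matching of the Dubrovin--Natanzon result---originally formulated for the real tori relevant to the KP1 Baker--Akhiezer function---with the imaginary torus $\tilde{T}_{v}$ appearing here, and in particular checking that the pointwise non-vanishing over the whole torus is genuinely the correct criterion (this is why the converse is phrased for all $\mathbf{d}\in\tilde{T}_{v}$ rather than only along the flow $\mathbf{Z}$). The accompanying lattice bookkeeping encoded in $\mathbf{T},\mathbf{L}$ and $\text{diag}(\mathbb{H})$ through (\ref{TL DS2}) and (\ref{4.13}), and its compatibility with $\mathbf{d}\in\mathrm{i}\,\R^{g}$, is routine but must be carried out carefully to isolate $v=0$ and to conclude $\rho=+1$.
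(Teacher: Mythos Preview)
Your proposal is correct and follows essentially the same approach as the paper: reduce to showing $\mathbf{Z}-\mathbf{d}\in S_{2}$ via (\ref{vectZ DS2}) and (\ref{4.13}), then invoke Proposition~A.4 (the Dubrovin--Natanzon result) to characterize when $\tilde{T}_{v}\cap(\Theta)=\emptyset$. Your treatment is in fact more explicit than the paper's terse proof, which does not spell out the converse or the $\rho=+1$ conclusion; your observation that $\mathbf{d}\in\mathrm{i}\,\R^{g}$ forces $\mathbf{L}=0$ (hence even components) and then the corollary to Theorem~3.3 gives $\rho=+1$ is exactly the intended completion. One minor slip: the DS1 smoothness result you cite is Proposition~3.1, not~3.3.
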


\begin{proof}
By 
(\ref{vectZ DS2}) and (\ref{4.13}) the vector $\mathbf{Z}-\mathbf{d}$ 
belongs to the set $S_{2}$ introduced in (\ref{S2}). Hence by Proposition A.4, the solutions are smooth if the curve is an M-curve and 
$\mathbf{Z}-\mathbf{d}\in \mathrm{i}\,\R^{g}$ which implies $\mathbf{d}\in \mathrm{i}\,\R^{g}$ by (\ref{vectZ DS2}) (and therefore $\mathbf{L}=\mathbf{T}=0$). 
\end{proof}

\noindent
\begin{remark} 
\rm{Smoothness of solutions of the DS2$^{-}$ equation was investigated in \cite{Mal}. It is proved that solutions are smooth if and only if the associated Riemann surface does not have real ovals, and if there are no pseudo-real functions of degree $g-1$ on it (i.e. functions which satisfy $\overline{f(\tau p)}=-f(p)^{-1}$).}
\end{remark}

\subsection{Reduction of the DS1$^{\rho}$ equation to the NLS equation}

Solutions of the nonlinear Schrödinger equation (\ref{NLS}) can be derived from solutions of the Davey Stewartson equations, when the associated Riemann surface is hyperelliptic. 

\begin{proposition}
Let $\Rs_{g}$ be a hyperelliptic curve of genus $g$ which admits an anti-holomorphic involution $\tau$. Denote by $\sigma$ the hyperelliptic involution defined on $\Rs_{g}$. Let $a,b\in\Rs_{g}(\R)$ with local parameters satisfying $\overline{k_{a}(\tau p)}=k_{a}(p)$ for $p$ near $a$, and $\overline{k_{b}(\tau p)}=k_{b}(p)$ for $p$ near $b$. Moreover, assume that $\sigma a =b$ and $k_{a}(p)=k_{b}(\sigma p)$. Then, taking $\kappa_{1}=\kappa_{2}=1$, the function $\psi$ in (\ref{psi DS1}) is a solution of the equation  
\[\mathrm{i}\,\psi_{t}+\psi_{\xi\xi}+2\,\left(\rho\,|\psi|^{2}+q_{1}(a,b)+\tfrac{1}{4}h\right)\,\psi=0,\]
which can be transformed to the NLS$^{\rho}$ equation 
\[\mathrm{i}\,\tilde{\psi}_{t}+\tilde{\psi}_{\xi\xi}+2\,\rho\,|\tilde{\psi}|^{2}\,\tilde{\psi}=0,\]
by the substitution
\[\tilde{\psi}(\xi,t)= \psi(\xi,t)\,\exp\left\{-2\mathrm{i}\left(q_{1}(a,b)+\tfrac{1}{4}h\right)\,t\right\}.\]
If all branch points of $\Rs_{g}$ are real, $\tilde{\psi}$ is a smooth solution of NLS$^{-}$. If they are all pairwise conjugate, $\tilde{\psi}$ is a smooth solution of NLS$^{+}$.
\end{proposition}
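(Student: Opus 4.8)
The strategy is to use the extra symmetry supplied by the hyperelliptic involution $\sigma$ to force the two characteristic directions to coincide, thereby collapsing the planar DS1$^{\rho}$ flow to a one--dimensional NLS flow, and then to strip off a constant potential by a purely time--dependent phase. The first step is to translate the hypotheses $\sigma a=b$ and $k_{a}(p)=k_{b}(\sigma p)$ into relations between the expansion data at $a$ and $b$. On a hyperelliptic curve the holomorphic differentials are anti--invariant under $\sigma$, that is $\sigma^{*}\omega_{j}=-\omega_{j}$; comparing this with the expansions (\ref{exp hol diff}) near $a$ and near $b$ and using the matching of local parameters yields
\[\mathbf{V}_{b}=-\mathbf{V}_{a},\qquad \mathbf{W}_{b}=-\mathbf{W}_{a},\]
so that $D_{b}=-D_{a}$ and $D'_{b}=-D'_{a}$ as directional derivatives. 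This is exactly the degenerate configuration $\mathbf{V}_{a}+\mathbf{V}_{b}=0$ announced right after Theorem 3.2.

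With $\kappa_{1}=\kappa_{2}=1$ these relations make the vector $\mathbf{Z}$ of (\ref{Z DS}) depend on $(\xi,\eta)$ only through the combination $x:=\xi+\eta$. To conclude that $\psi$ itself depends only on $x$ and $t$ I must also check that the linear phase collapses, i.e.\ that $G_{1}=G_{2}$; by (\ref{N12 DS}) this amounts to $K_{1}(a,b)=K_{1}(b,a)$, which I would obtain from (\ref{K1}) using $D_{b}=-D_{a}$, $D'_{b}=-D'_{a}$, the relation $\int_{b}^{a}=-\int_{a}^{b}$, and the oddness of $\Theta[\delta]$. Granting this, $\psi$ and $\phi$ in (\ref{psi DS1}), (\ref{sol DS1}) are functions of $x$ and $t$ alone, whence $D_{1}\psi=\tfrac12(\psi_{\xi\xi}+\psi_{\eta\eta})=\psi_{\xi\xi}$ and the second DS1$^{\rho}$ equation degenerates to $(\phi+\rho|\psi|^{2})_{xx}=0$.

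It then remains to evaluate the potential. Writing $\z=\mathbf{Z}-\mathbf{d}$, the identities above give $\phi=-D_{a}^{2}\ln\Theta(\z)+\tfrac14 h=D_{a}D_{b}\ln\Theta(\z)+\tfrac14 h$; inserting this into the degenerate Fay identity (\ref{cor Fay}) and comparing with the explicit product $\psi\psi^{*}$ read off from (\ref{sol DS}) expresses $\phi$ as $q_{1}(a,b)+\tfrac14 h$ plus a scalar multiple of $|\psi|^{2}$, the scalar being fixed by the reality reduction $\psi\psi^{*}=\rho|\psi|^{2}$. Substituting into the first DS1$^{\rho}$ equation produces an NLS equation carrying the constant potential $2(q_{1}(a,b)+\tfrac14 h)$, and the gauge factor $\exp\{-2\mathrm{i}(q_{1}(a,b)+\tfrac14 h)t\}$ removes it; verifying that $\tilde\psi$ then solves the homogeneous NLS$^{\rho}$ equation is a one--line computation.

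The genuine obstacle is the sign of the cubic term, i.e.\ distinguishing NLS$^{+}$ from NLS$^{-}$. There are two sign inputs to reconcile: the reality constraint of Theorem 3.2, which with $\kappa_{1}=\kappa_{2}=1$ ties $\rho$ to $\text{sign}(q_{2}(a,b))$, and the reduction of $\phi$ above; combining them shows the effective nonlinearity is controlled by $\text{sign}(q_{2}(a,b))$, where $q_{2}(a,b)$ is given by (\ref{q2}). Since here $a$ and $b=\sigma a$ project to a common real point of the hyperelliptic map, I would pin this sign down through the reality analysis of Appendix B, checking that only--real branch points force the defocusing sign and pairwise conjugate branch points the focusing sign. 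Finally, smoothness is inherited from Proposition 3.1 applied to the ambient DS1$^{\rho}$ solution: a hyperelliptic curve with only real branch points is an M--curve, hence dividing with $\text{diag}(\mathbb{H})=0$, so $\mathbf{d}$ may be taken in $\R^{g}$ and $\Theta(\mathbf{Z}-\mathbf{d})$ never vanishes on the real slice, yielding a smooth solution of NLS$^{-}$; the conjugate case is handled the same way once the curve is seen to be dividing, giving the smooth solution of NLS$^{+}$.
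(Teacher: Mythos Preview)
Your approach is essentially the same as the paper's: both derive $\mathbf{V}_{b}=-\mathbf{V}_{a}$, $\mathbf{W}_{b}=-\mathbf{W}_{a}$ from $\sigma^{*}\omega_{j}=-\omega_{j}$, rewrite $\phi$ as $D_{a}D_{b}\ln\Theta(\z)+\tfrac14 h$, apply the degenerate Fay identity (\ref{cor Fay}) to identify the $|\psi|^{2}$ term, and remove the leftover constant by a time phase. You are in fact more explicit than the paper on two points the paper glosses over---the collapse of the linear phase via $K_{1}(a,b)=K_{1}(b,a)$, and the reduction of the second DS1 equation to $(\phi+\rho|\psi|^{2})_{xx}=0$---and your justification of $K_{1}(a,b)=K_{1}(b,a)$ from the oddness of $\Theta[\delta]$ is correct. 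For the sign of $\rho$ and smoothness you defer to Appendix~B and Proposition~3.1, whereas the paper simply points to ``Section~4''; either route works, since for the hyperelliptic case the intersection-index computation of Section~4.3 specialises to exactly the dichotomy real branch points~$\Rightarrow$~defocusing, conjugate pairs~$\Rightarrow$~focusing that you describe.
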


\begin{proof}
If $a,b\in\Rs_{g}$ are such that $\sigma a =b$ and the local parameters satisfy $k_{a}(p)=k_{b}(\sigma p)$, one has 
\begin{equation}
\mathbf{V}_{a}+\mathbf{V}_{b}=0 ,\qquad \mathbf{W}_{a}+\mathbf{W}_{b}=0. \label{V W NLS}
\end{equation} 
To verify (\ref{V W NLS}), we use the action $\sigma \mathcal{A}_{k} = -\mathcal{A}_{k}$ of the involution $\sigma$ on the $\mathcal{A}$-cycles of the homology basis. Hence by (\ref{norm hol diff}) we have
\[2\mathrm{i}\pi\delta_{jk} = \int_{\sigma \mathcal{A}_{k}}\sigma^{*}\omega_{j}=-\int_{\mathcal{A}_{k}}\sigma^{*}\omega_{j}.\]
It follows that the holomorphic differential $-\sigma^{*}\omega_{j}$ satisfies the normalization condition (\ref{norm hol diff}), which implies, by virtue of uniqueness of the normalized holomorphic differentials, 
\[\sigma^{*}\omega_{j}=-\omega_{j}.\]
Using (\ref{exp hol diff}) we obtain
\begin{align}
\sigma^{*} \omega_{j}(a)(p)&=(V_{b,j}+W_{b,j}\,k_{b}(\sigma p)+\ldots)\,\ud k_{b}(\sigma p)\nonumber\\
&=(V_{b,j}+W_{b,j}\,k_{a}(p)+\ldots)\,\ud k_{a}(p),\nonumber
\end{align}
which implies $\mathbf{V}_{a}+\mathbf{V}_{b}=0$ and 
$\mathbf{W}_{a}+\mathbf{W}_{b}=0$.

Therefore, when the Riemann surface associated to solutions of DS1$^{\rho}$ is hyperelliptic, assuming that $a$ and $b$ satisfy $\sigma a=b$, and $\kappa_{1}=\kappa_{2}=1$, by (\ref{V W NLS}) and (\ref{cor Fay}), under the reality condition $\psi^{*}=\rho\, \overline{\psi}$, the function $\phi$ in (\ref{sol DS1}) satisfies \[\phi(\xi,\eta,t)=\rho\,|\psi|^{2}+q_{1}(a,b)+\frac{1}{4}h.\]
Hence the function $\psi$ (\ref{psi DS1}) becomes a solution of the equation  
\[\mathrm{i}\,\psi_{t}+\psi_{\xi\xi}+2\,\left(\rho\,|\psi|^{2}+q_{1}(a,b)+\tfrac{1}{4}h\right)\,\psi=0,\]
with $\rho=\pm 1$, depending on the reality of the branch points as explained in Section 4.
\end{proof}

\vspace{0.3cm}
\noindent
Solutions of the NLS equation obtained in this way coincide with those in \cite{BBEIM}.

\section{Algebro-geometric solutions of the multi-component NLS equation}

In this section, we present another application of the degenerated Fay identity (\ref{my corol}), which 
leads to new theta-functional solutions of the multi-component nonlinear Schr\"odinger equation (n-NLS$^{s}$)
\begin{equation}		
\mathrm{i}\,\frac{\partial \psi_{j}}{\partial t}+\frac{\partial^{2} \psi_{j}}{\partial x^{2}}+2\,\left(\sum_{k=1}^{n}s_{k}|\psi_{k}|^{2}\right)\,\psi_{j} =0,  \quad \quad j=1,\ldots,n,   \label{n-NLS bis}
\end{equation}
where $s=(s_{1},\ldots,s_{n})$,  $s_{i}=\pm 1$. Here $\psi_{j}(x,t)$ are complex valued functions of the real variables $x$ and $t$.

\subsection{Solutions of the complexified n-NLS equation}

Consider first the complexified version of the n-NLS$^{s}$ equation, which is a system of $2n$ equations of $2n$ dependent variables $\left\{\psi_{j},\,\psi_{j}^{*}\right\}_{j=1}^{n}$
\begin{align}		
\mathrm{i}\,\frac{\partial \psi_{j}}{\partial t}+\frac{\partial^{2} \psi_{j}}{\partial x^{2}}+2\,\left(\sum_{k=1}^{n}\,\psi_{k}\,\psi_{k}^{*}\right)\,\psi_{j} &=0  ,  \nonumber  \\ 
-\mathrm{i}\,\frac{\partial \psi_{j}^{*}}{\partial t}+\frac{\partial^{2} \psi_{j}^{*}}{\partial x^{2}}+2\,\left(\sum_{k=1}^{n}\,\psi_{k}\,\psi_{k}^{*}\right)\,\psi_{j}^{*} &=0 , \quad \quad j=1,\ldots,n  ,\label{ass n-NLS} 
\end{align}
where $\psi_{j}(x,t)$ and $\psi_{j}^{*}(x,t)$ are complex valued functions of the real variables $x$ and $t$. This system reduces to the n-NLS$^{s}$ equation (\ref{n-NLS bis}) under the \textit{reality conditions }
\begin{equation}
\psi_{j}^{*}=s_{j}\,\overline{\psi_{j}}, \quad\quad j=1,\ldots, n. \label{real cond n-NLS}
\end{equation}
Theta functional solutions of the system (\ref{ass n-NLS}) are given by

\begin{theorem}
Let $\mathcal{R}_{g}$ be a compact Riemann surface of genus $g>0$ and let $f$ be a meromorphic function
of degree $n+1$ on $\Rs_{g}$. Let $z_{a}\in\C$ be a non critical value of $f$, and consider the fiber $f^{-1}(z_{a})=\left\{a_{1},\ldots,a_{n+1}\right\}$ over $z_{a}$. Choose the local parameters $k_{a_{j}}$ near $a_{j}$ as $k_{a_{j}}(p)=f(p)-z_{a}$, 
for any point $p\in\Rs_{g}$ lying in a neighbourhood of $a_{j}$.
Let $\mathbf{d}\in\C^{g}$ and $A_{j}\neq 0$ be arbitrary constants. Then the following functions  $\left\{\psi_{j}\right\}_{j=1}^{n}$ and  $\left\{\psi_{j}^{*}\right\}_{j=1}^{n}$ are solutions of the system (\ref{ass n-NLS})
\begin{align}    
\psi_{j}(x,t)&=A_{j}\,\frac{\Theta(\mathbf{Z}-\mathbf{d}+\mathbf{r}_{j})}{\Theta(\mathbf{Z}-\mathbf{d})}\,\exp\left\{ \mathrm{i}(-E_{j}\,x+\,F_{j}\,t)\right\},\nonumber\\ 
\psi^{*}_{j}(x,t)&=\frac{q_{2}(a_{n+1},a_{j})}{A_{j}}\,\frac{\Theta(\mathbf{Z}-\mathbf{d}-\mathbf{r}_{j})}{\Theta(\mathbf{Z}-\mathbf{d})}\,\exp\left\{ \mathrm{i}(E_{j}\,x-\,F_{j}\,t)\right\}. \label{sol n-NLS comp}
\end{align}
Here $\mathbf{r}_{j}=\int^{a_{j}}_{a_{n+1}}\omega$, where $\omega$ is the vector of normalized holomorphic differentials, and
\begin{equation}
\mathbf{Z}=\mathrm{i}\,\mathbf{V}_{a_{n+1}}\,x+\,\mathrm{i}\,\mathbf{W}_{a_{n+1}}\,t. \label{Z n-NLS}
\end{equation}
The vectors $\mathbf{V}_{a_{n+1}}$ and $\mathbf{W}_{a_{n+1}}$ are defined in (\ref{exp hol diff}), and the scalars $E_{j},\,F_{j}$ are given by
\begin{equation}
E_{j}=K_{1}(a_{n+1},a_{j}),\qquad F_{j}=K_{2}(a_{n+1},a_{j})-2\,\sum_{k=1}^{n}q_{1}(a_{n+1},a_{k}). \label{E,N n-NLS}
\end{equation}
The scalars $q_2(a_{n+1},a_{j}), \,K_{1}(a_{n+1},a_{j}),\,K_{2}(a_{n+1},a_{j})$ and $q_{1}(a_{n+1},a_{k})$ are defined in (\ref{q2}), (\ref{K1}), (\ref{K2}) and (\ref{q1}) respectively. 
\end{theorem}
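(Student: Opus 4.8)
The plan is to verify directly that the proposed functions satisfy each of the $2n$ equations in system (\ref{ass n-NLS}), relying essentially on the new degenerated Fay identity (\ref{my corol}) for the two NLS-type equations and on the known degeneration (\ref{cor Fay}) to handle the coupling term. First I would substitute $\psi_{j}$ from (\ref{sol n-NLS comp}) into the first equation of (\ref{ass n-NLS}). Writing $a:=a_{n+1}$ for brevity, the vector $\mathbf{Z}$ in (\ref{Z n-NLS}) is linear in $x$ and $t$ along the directions $\mathbf{V}_{a}$ and $\mathbf{W}_{a}$, so that $\partial_{x}$ applied to $\ln\Theta(\mathbf{Z}-\mathbf{d})$ produces the directional operator $\mathrm{i}\,D_{a}$, while $\partial_{t}$ produces $\mathrm{i}\,D'_{a}$ (since $\mathbf{W}_{a}$ is the coefficient vector in (\ref{exp hol diff})). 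Carrying out the differentiation of the logarithmic-derivative factor $\Theta(\mathbf{Z}-\mathbf{d}+\mathbf{r}_{j})/\Theta(\mathbf{Z}-\mathbf{d})$ and the exponential factor $\exp\{\mathrm{i}(-E_{j}x+F_{j}t)\}$, and dividing through by $\psi_{j}$, the first equation collapses to the expression
\[
D'_{a}\ln\frac{\Theta(\mathbf{Z}-\mathbf{d}+\mathbf{r}_{j})}{\Theta(\mathbf{Z}-\mathbf{d})}+D_{a}^{2}\ln\frac{\Theta(\mathbf{Z}-\mathbf{d}+\mathbf{r}_{j})}{\Theta(\mathbf{Z}-\mathbf{d})}+\Big(D_{a}\ln\frac{\Theta(\mathbf{Z}-\mathbf{d}+\mathbf{r}_{j})}{\Theta(\mathbf{Z}-\mathbf{d})}-E_{j}\Big)^{2}+2\,D_{a}^{2}\ln\Theta(\mathbf{Z}-\mathbf{d})+C_{j}=0,
\]
for some scalar $C_{j}$ arising from the linear term $F_{j}$ and the quadratic expansion. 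With the identification $E_{j}=K_{1}(a,a_{j})$ this is exactly (\ref{my corol}) with $\z=\mathbf{Z}-\mathbf{d}$, $\mathbf{r}_{j}=\int_{a}^{a_{j}}$, \emph{provided} the leftover constant matches $K_{2}(a,a_{j})$; this is where the coupling potential enters and fixes the value of $F_{j}$.

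The key step, and the main obstacle, is to show that the nonlinear potential $2\sum_{k=1}^{n}\psi_{k}\psi_{k}^{*}$ reduces to a $\z$-independent constant plus the term $2\,D_{a}^{2}\ln\Theta(\mathbf{Z}-\mathbf{d})$ already present in (\ref{my corol}). Here I would invoke the known degeneration (\ref{cor Fay}): for each $k$, the choice of local parameters $k_{a_{j}}(p)=f(p)-z_{a}$ forces all the vectors $\mathbf{V}_{a_{j}}$ to be built from the same meromorphic function $f$, and the product
\[
\psi_{k}\,\psi_{k}^{*}=q_{2}(a,a_{k})\,\frac{\Theta(\mathbf{Z}-\mathbf{d}+\mathbf{r}_{k})\,\Theta(\mathbf{Z}-\mathbf{d}-\mathbf{r}_{k})}{\Theta(\mathbf{Z}-\mathbf{d})^{2}}
\]
is precisely (up to the sign convention and the additive constant $q_{1}(a,a_{k})$) the combination $D_{a}D_{a_{k}}\ln\Theta$ appearing in (\ref{cor Fay}), evaluated at $\mathbf{r}_{k}=\int_{a}^{a_{k}}=-\int_{a_{k}}^{a}$. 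The crucial identity I expect to need is that
\[
\sum_{k=1}^{n}D_{a}D_{a_{k}}\ln\Theta(\mathbf{Z}-\mathbf{d})=D_{a}^{2}\ln\Theta(\mathbf{Z}-\mathbf{d}),
\]
which follows because $f$ has degree $n+1$ with fiber $\{a_{1},\dots,a_{n+1}\}$, so $\sum_{j=1}^{n+1}\mathbf{V}_{a_{j}}=0$ (the sum of residues / the principal part of $d\ln(f-z_{a})$ integrates to zero against holomorphic differentials), giving $\sum_{k=1}^{n}\mathbf{V}_{a_{k}}=-\mathbf{V}_{a}$ and hence $\sum_{k=1}^{n}D_{a_{k}}=-D_{a}$; one also uses $D_{a}D_{a}=D_{a}^{2}$ in the sense that the self-interaction term $k=n+1$ is absent but reconstructed through this linear relation. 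Establishing this vanishing-sum relation cleanly, and tracking the additive constants $q_{1}(a,a_{k})$, is the heart of the argument.

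Once the potential is rewritten as $2\sum_{k}\psi_{k}\psi_{k}^{*}=2\,D_{a}^{2}\ln\Theta(\mathbf{Z}-\mathbf{d})+2\sum_{k=1}^{n}q_{1}(a,a_{k})$, the first NLS-type equation becomes (\ref{my corol}) exactly when $F_{j}=K_{2}(a,a_{j})-2\sum_{k=1}^{n}q_{1}(a,a_{k})$, which is the stated value in (\ref{E,N n-NLS}); the constant contribution $-2\sum_{k}q_{1}(a,a_{k})$ is absorbed precisely to cancel the extra scalar, and $E_{j}=K_{1}(a,a_{j})$ matches the $K_{1}$ term. The second family of equations, for $\psi_{j}^{*}$, I would treat by the symmetry $\z\mapsto-\z$ together with $t\mapsto-t$, $x\mapsto-x$ in the substitution: under this reflection $\mathbf{Z}-\mathbf{d}\mapsto-(\mathbf{Z}-\mathbf{d})$ up to the constant shift by $\mathbf{r}_{j}$, and the same identity (\ref{my corol}) applies after the interchange $a_{n+1}\leftrightarrow a_{j}$ is accounted for through the sign of the directional derivatives, so that the structurally identical computation closes the verification. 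No reality condition is invoked at this stage, since (\ref{ass n-NLS}) is the complexified system; smoothness and the reduction to genuine n-NLS$^{s}$ solutions are deferred to the subsequent imposition of (\ref{real cond n-NLS}).
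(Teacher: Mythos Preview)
Your proposal is correct and follows essentially the same route as the paper: substitute into the equation, reduce to the degenerated Fay identity (\ref{my corol}), and use the linear relation $\sum_{j=1}^{n+1}\mathbf{V}_{a_{j}}=0$ (the paper isolates this as Lemma~4.1) together with (\ref{cor Fay}) to identify the coupling term with $2\,D_{a_{n+1}}^{2}\ln\Theta$ plus constants. One small slip: from $\sum_{k=1}^{n}D_{a_{k}}=-D_{a_{n+1}}$ you get $\sum_{k=1}^{n}D_{a_{n+1}}D_{a_{k}}\ln\Theta=-D_{a_{n+1}}^{2}\ln\Theta$, not $+D_{a_{n+1}}^{2}\ln\Theta$; once you also track the overall sign coming from $\partial_{x}^{2}\mapsto -D_{a_{n+1}}^{2}$, the constants match exactly as you claim.
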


\begin{proof}
We start with the following technical lemma.

\begin{lemma}
Let $\mathcal{R}_{g}$ be a compact Riemann surface of genus $g>0$ and let $a_{1},\ldots,a_{n+1}$ be distinct points on $\Rs_{g}$. Then the vectors $\mathbf{V}_{a_{j}}$ for $j=1,\ldots,n+1$ are linearly dependent if and only if there exists a meromorphic function $f$ of degree $n+1$ on $\Rs_{g}$, and $z_{a}\in\C\mathbb{P}^{1}$ such that $f^{-1}(z_{a})=\left\{a_{1},\ldots,a_{n+1}\right\}$.
\end{lemma}
\vspace{0.3cm}
\noindent
\textit{Proof of Lemma 4.1.}
Assume that there exist $\alpha_{1},\ldots,\alpha_{n+1}\in\C^{*}$ such that $\sum_{k=1}^{n+1}\alpha_{k}\,\mathbf{V}_{a_{k}}=0$. The left hand side of this equality equals the vector of $\mathcal{B}$-periods (see e.g. \cite{Bob}) of the normalized differential of the second kind $\Omega=\sum_{k=1}^{n+1}\alpha_{k}\, \Omega_{a_{k}}^{(2)}$. Hence all periods of the differential $\Omega$ vanish, which implies that the Abelian integral $p\longmapsto\int^{p}_{p_{0}}\Omega$ is a meromorphic function of degree $n+1$ on $\Rs_{g}$ having simple poles at $a_{1},\ldots,a_{n+1}$. 

Conversely, assume that there exists a meromorphic function $f$ of degree $n+1$ on $\Rs_{g}$, and $z_{a}\in\C$ such that $f^{-1}(z_{a})=\left\{a_{1},\ldots,a_{n+1}\right\}$ (the case $z_{a}=\infty$ can be treated in the same way).
The function $h(p)=(f(p)-z_{a})^{-1}$ is a meromorphic function of degree $n+1$ on $\Rs_{g}$ having simple poles at $a_{1},\ldots,a_{n+1}$ only. Therefore all periods of the differential $\mathrm{d}h$ vanish. Let $p_{0}\in\Rs_{g}$ satisfy $h(p_{0})=0$. Using Riemann's bilinear identity \cite{Bob} we get
\[\int_{\partial F_{g}}\omega_{j}\int^{p}_{p_{0}}\mathrm{d}h=\int_{\partial F_{g}}\omega_{j}h(p)=0,\]
where $F_{g}$ denotes the simply connected domain with the boundary $\partial F_{g}=\sum_{j=1}^{g}(\mathcal{A}_{j}+\mathcal{A}_{j}^{-1}+\mathcal{B}_{j}+\mathcal{B}_{j}^{-1})$. By Cauchy's theorem, taking local parameters $k_{a_{j}}$  near $a_{j}$ such that $k_{a_{j}}(p)=f(p)-z_{a}$ for any point $p\in\Rs_{g}$ lying in a neighbourhood of $a_{j}$, we deduce that $\sum_{k=1}^{n+1}\mathbf{V}_{a_{k}}=0$. 
\hspace{5.5cm} $\square$
\\
\\
To prove Theorem 4.1, substitute the functions (\ref{sol n-NLS comp}) into the first equation of (\ref{ass n-NLS}) to get
\[D'_{a_{n+1}}\ln\frac{\Theta(\z+\mathbf{r}_{1})}{\Theta(\z)}+D^{2}_{a_{n+1}}\ln\frac{\Theta(\z+\mathbf{r}_{1})}{\Theta(\z)}
+\left(D_{a_{n+1}}\ln\frac{\Theta(\z+\mathbf{r}_{1})}{\Theta(\z)}-E_{1}\right)^{2}\]
\begin{equation}
+F_{1}-2\sum_{k=1}^{n}q_{2}(a_{n+1},a_{k})\frac{\Theta(\z+\mathbf{r}_{k})\Theta(\z-\mathbf{r}_{k})}{\Theta(\z)^{2}}=0. \label{4.21}
\end{equation}
It can be shown that equation (\ref{4.21}) holds as follows: in (\ref{my corol}), let us choose $a=a_{n+1}$ and $b=a_{1}$ to obtain
\begin{equation}D'_{a_{n+1}}\ln\frac{\Theta(\z+\mathbf{r}_{1})}{\Theta(\z)}+D_{a_{n+1}}^{2}\ln\frac{\Theta(\z+\mathbf{r}_{1})}{\Theta(\z)} +\left(D_{a_{n+1}}\ln\frac{\Theta(\z+\mathbf{r}_{1})}{\Theta(\z)}-K_{1}\right)^{2}+K_{2}+2\,D^{2}_{a_{n+1}}\ln\Theta(\z)=0,  \label{4.22}
\end{equation}
for any $\z\in \C ^{g}$, and in particular for $\z=\mathbf{Z}-\mathbf{d}$;
here we used the notation $K_{i}=K_{i}(a_{n+1},a_{1})$ for $i=1,2$.
By Lemma 4.1 the sum $\sum_{k=1}^{n+1}\mathbf{V}_{a_{k}}$ equals zero, which implies
\[\sum_{k=1}^{n+1}D_{a_{k}}=0.\]
Substituting $D_{a_{n+1}}$ instead of $-\sum_{k=1}^{n}D_{a_{k}}$ in (\ref{4.22}) and using (\ref{cor Fay}) we obtain (\ref{4.21}), where
\[E_{1}=K_{1}, \qquad   F_{1}=K_{2}-2\,\sum_{k=1}^{n}q_{1}(a_{n+1},a_{k}).\]
In the same way, it can be proved that the functions in (\ref{sol n-NLS}) satisfy the $2n-1$ other equations of the system (\ref{ass n-NLS}).
\end{proof}

\vspace{0.3cm}
The solutions (\ref{sol n-NLS comp}) of the complexified sytem (\ref{ass n-NLS}) depend on the Riemann surface $\Rs_{g}$, the meromorphic function $f$ of degree $n+1$, a non critical value $z_{a}\in\C$ of $f$, and arbitrary constants $\mathbf{d}\in\C^{g}$, $A_{j}\neq 0$.
The transformation of the local parameters given by
\begin{equation}
k_{a_{j}}\longrightarrow \beta\,k_{a_{j}}+\mu\,k_{a_{j}}^{2}+O\left(k_{a_{j}}^{3}\right), \label{trans param n-NLS}
\end{equation}
where $\beta,\mu$ are arbitrary complex numbers ($\beta\neq 0$), leads to a different family of solutions of the complexified system (\ref{ass n-NLS}). These new solutions are obtained via the following transformations:
\begin{align}
\psi_{j}(x,t)&\longrightarrow\psi_{j}\left(\beta\,x+2\beta\lambda\,t,\beta^{2}\,t\right)\,\exp\left\{ -\mathrm{i}\left(\lambda\,x+\lambda^{2}\,t\right)\right\},\nonumber\\
\psi^{*}_{j}(x,t)&\longrightarrow\beta^{2}\,\psi^{*}_{j}\left(\beta\,x+2\beta\lambda\,t,\beta^{2}\,t\right)\,\exp\left\{ \mathrm{i}\left(\lambda\,x+\lambda^{2}\,t\right)\right\}, 
\label{trans sol n-NLS}
\end{align}
where $\lambda=\mu\,\beta^{-1}$.

\subsection{Reality conditions}
Algebro-geometric solutions of the n-NLS$^{s}$ equation (\ref{n-NLS bis}) are constructed from solutions (\ref{sol n-NLS comp}) of the complexified system by imposing the reality conditions $\psi_{j}^{*}=s_{j}\,\overline{\psi_{j}}$  (\ref{real cond n-NLS}).

Let $\Rs_{g}$ be a real compact Riemann surface with an anti-holomorphic involution $\tau$. Let us choose the homology basis satisfying (\ref{hom basis}). A meromorphic function $f$ on $\Rs_{g}$ is called real if $f(\tau p)=\overline{f(p)}$ for any $p\in\Rs_{g}$. 

In the next proposition we derive theta-functional solutions of (\ref{n-NLS bis}). The signs $s_{j}$ appearing in the reality conditions (\ref{real cond n-NLS}) are expressed in terms of certain intersection indices on $\Rs_{g}$. These intersection indices are defined as follows: let $f$ be a real meromorphic function
of degree $n+1$ on $\Rs_{g}$. Let $z_{a}\in\R$ be a non critical value of $f$, and assume that the fiber $f^{-1}(z_{a})=\left\{a_{1},\ldots,a_{n+1}\right\}$ over $z_{a}$ belongs to the set $\Rs_{g}(\R)$. Let $\tilde{a}_{n+1},\tilde{a}_{j}\in\Rs_{g}(\R)$ lie in a neighbourhood of $a_{n+1}$ and $a_{j}$ respectively such that $f(\tilde{a}_{n+1})=f(\tilde{a}_{j})$. Denote by $\tilde{\ell}_{j}$ an oriented contour connecting $\tilde{a}_{n+1}$ and $\tilde{a}_{j}$, and having the following decomposition in $H_{1}(\Rs_g\setminus\{a_{n+1},a_{j}\})$ (see Appendix A.2.2)
\begin{equation}
\tau \tilde{\ell}_{j}=\tilde{\ell}_{j}+\mathcal{A}\mathbf{N}_{j}+\mathcal{B}\mathbf{M}_{j}+\alpha_{j}\,\mathcal{S}_{a_{j}},\label{ellj}
\end{equation}
for some $\alpha_{j}\in\Z$, where vectors $\mathbf{N}_{j},\mathbf{M}_{j}\in\Z^{g}$ are the same as in (\ref{hom basis 3}).
Then 
\begin{equation}
\alpha_{j}=(\tau \tilde{\ell}_{j}-\tilde{\ell}_{j})\circ\ell_{j},  \label{alpha}
\end{equation}
between the closed contour $\tau \tilde{\ell}_{j}-\tilde{\ell}_{j}$ and the contour $\ell_{j}$; this intersection is computed in the relative homology group $H_{1}(\Rs_g,\{a_{n+1},a_{j}\})$.
\\\\
Theta functional solutions of (\ref{n-NLS bis}) are given by

\begin{proposition}
Let $f$ be a real meromorphic function
of degree $n+1$ on $\Rs_{g}$. Let $z_{a}\in\R$ be a non critical value of $f$, and assume that the fiber $f^{-1}(z_{a})=\left\{a_{1},\ldots,a_{n+1}\right\}$ over $z_{a}$ belongs to the set $\Rs_{g}(\R)$. Choose the local parameters $k_{a_{j}}$ near $a_{j}$ as $k_{a_{j}}(p)=f(p)-z_{a}$, 
for any point $p\in\Rs_{g}$ lying in a neighbourhood of $a_{j}$. Denote by $\{\mathcal{A},\mathcal{B},\ell_{j}\}$ the standard generators of the relative homology group $H_{1}(\Rs_g,\{a_{n+1},a_{j}\})$ (see Appendix A.2.2). 
Let $\mathbf{d}_{R}\in\R^{g}$, $\mathbf{T}\in\Z^{g}$, and define $\mathbf{d}=\mathbf{d}_{R}+\frac{\mathrm{i}\pi}{2}(\text{diag}(\mathbb{H})-2\,\mathbf{T})$. Morover, take $\theta\in\R$.
Then the following functions $\left\{\psi_{j}\right\}_{j=1}^{n}$ are solutions of n-NLS$^{s}$ (\ref{n-NLS bis}) 
\begin{equation}    
\psi_{j}(x,t)=|A_{j}|\,e^{\mathrm{i}\theta}\,\frac{\Theta(\mathbf{Z}-\mathbf{d}+\mathbf{r}_{j})}{\Theta(\mathbf{Z}-\mathbf{d})}\,\exp \left\{\mathrm{i}(-E_{j}\,x+\,F_{j}\,t)\right\},\label{sol n-NLS}
\end{equation}
where 
$\mathbf{Z}=\mathrm{i}\,\mathbf{V}_{a_{n+1}}\,x+\,\mathrm{i}\,\mathbf{W}_{a_{n+1}}\,t,$ 
and
\begin{equation}
|A_{j}|=|q_{2}(a_{n+1},a_{j})|^{1/2}\,\exp\left\{\tfrac{1}{2}\left\langle\mathbf{d}_{R},\mathbf{M}_{j}\right\rangle\right\}. \label{A n-NLS}
\end{equation} 
Here $\mathbf{r}_{j}=\int_{\ell_{j}}\omega$, the vectors $\mathbf{V}_{a_{n+1}}, \mathbf{W}_{a_{n+1}}$ are defined in (\ref{exp hol diff}), and the vector $\mathbf{M}_{j}\in\Z^{g}$ is defined by the action of $\tau$ on the relative homology group $H_{1}\left(\Rs_{g}^{(n+1)},\{a_{n+1},a_{j}\}\right)$ (see (\ref{hom basis 3})). The scalars $q_2(a_{n+1},a_{j})$ and $E_{j},\,F_{j}$ are introduced in (\ref{q2}) and (\ref{E,N n-NLS}) respectively. The signs $s_{1},\ldots,s_{n}$ are given by 
\begin{equation}
s_{j}= \exp\left\{\mathrm{i}\pi(1+\alpha_{j})+\mathrm{i}\pi\left\langle \mathbf{T} ,\mathbf{M}_{j}\right\rangle\right\}, \label{sj}
\end{equation} 
where the intersection indices $\alpha_{j}\in\Z$ are defined in (\ref{alpha}).
\end{proposition}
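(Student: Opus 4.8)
The plan is to impose the reality conditions (\ref{real cond n-NLS}) on the complexified solutions (\ref{sol n-NLS comp}) of Theorem 4.1 and read off the constants $|A_j|$ and signs $s_j$ for which they hold, exactly mirroring the proofs of Theorems 3.2 and 3.3. The first task is to collect the conjugation properties of every object entering (\ref{sol n-NLS comp}). Since the entire fiber $f^{-1}(z_a)$ lies in $\Rs_g(\R)$, the base point $a_{n+1}$ is a real point, so the argument leading to (\ref{vectV DS1}) gives $\overline{\mathbf{V}_{a_{n+1}}}=-\mathbf{V}_{a_{n+1}}$ and $\overline{\mathbf{W}_{a_{n+1}}}=-\mathbf{W}_{a_{n+1}}$; with $x,t\in\R$ this yields the reality $\overline{\mathbf{Z}}=\mathbf{Z}$ of the vector (\ref{Z n-NLS}).

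The conjugation relations for $E_j$ and $F_j$ in (\ref{E,N n-NLS}) then follow from those of $K_1(a_{n+1},a_j)$ and $K_2(a_{n+1},a_j)$, as in (\ref{K1 K2 conj}), together with the reality of each $q_1(a_{n+1},a_k)$, which holds because both points are real. Finally, the contour $\ell_j$ obeys a relation of the type (\ref{r DS1}), namely $\overline{\mathbf{r}_j}=-\mathbf{r}_j-2\mathrm{i}\pi\mathbf{N}_j-\B\mathbf{M}_j$, with $\mathbf{N}_j,\mathbf{M}_j$ the vectors from (\ref{hom basis 3}). Next I would substitute (\ref{sol n-NLS comp}) into $\psi_j^{*}=s_j\overline{\psi_j}$. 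Using the action (\ref{conj theta}) of complex conjugation on $\Theta$ together with the quasi-periodicity (\ref{2.4}), and choosing $\mathbf{d}=\mathbf{d}_R+\tfrac{\mathrm{i}\pi}{2}(\mathrm{diag}(\mathbb{H})-2\mathbf{T})$ exactly as in Theorem 3.2, the conjugated quotient $\overline{\Theta(\mathbf{Z}-\mathbf{d}+\mathbf{r}_j)}/\overline{\Theta(\mathbf{Z}-\mathbf{d})}$ is converted into $\Theta(\mathbf{Z}-\mathbf{d}-\mathbf{r}_j)/\Theta(\mathbf{Z}-\mathbf{d})$ up to an explicit exponential factor; the theta quotients then cancel against the left-hand side for all $x,t$. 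What remains is a single algebraic identity of the form $|A_j|^2=s_j\,q_2(a_{n+1},a_j)\exp\{\cdots\}$, in which the exponent is real by construction, and solving for $|A_j|$ gives (\ref{A n-NLS}).

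The main obstacle is pinning down the sign $s_j$ in (\ref{sj}). Since $|A_j|^2>0$, the value $s_j=\pm1$ is forced to equal the sign of the real quantity $q_2(a_{n+1},a_j)\exp\{\cdots\}$. The exponential contributes a factor $e^{\mathrm{i}\pi\langle\mathbf{T},\mathbf{M}_j\rangle}$ through the imaginary part of $\mathbf{d}$ and the conjugation relation for $\mathbf{r}_j$, while the sign of $q_2(a_{n+1},a_j)$ is the genuinely topological ingredient. This sign must be expressed via the intersection index $\alpha_j=(\tau\tilde\ell_j-\tilde\ell_j)\circ\ell_j$ of (\ref{alpha}), which is where an auxiliary result from Appendix B is needed---the analogue, for $a_{n+1}$ and $a_j$ on the same real fiber, of the sign computations invoked in Theorems 3.2 and 3.3. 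Combining the factor $e^{\mathrm{i}\pi\alpha_j}$ coming from $\mathrm{sign}(q_2)$, the overall $-1=e^{\mathrm{i}\pi}$, and $e^{\mathrm{i}\pi\langle\mathbf{T},\mathbf{M}_j\rangle}$ then reproduces (\ref{sj}). Carrying out this phase bookkeeping---above all, relating $\mathrm{sign}(q_2(a_{n+1},a_j))$ to $\alpha_j$---is the delicate step; everything else is a direct transcription of the Davey--Stewartson arguments, after which one checks by the same substitution that the remaining $n-1$ reality conditions hold simultaneously with the single common choice of $\mathbf{d}$.
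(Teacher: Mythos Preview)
Your plan is correct and follows the paper's proof essentially line for line: reality of $\mathbf{Z}$, the conjugation laws for $\mathbf{r}_j,E_j,F_j$, the choice of $\mathbf{d}$, and the appeal to Appendix~B are exactly the ingredients used there. Two small refinements to keep in mind when you write it out: $q_2(a_{n+1},a_j)$ is not real in general, so what you need from Proposition~B.3 is its full \emph{argument} (\ref{arg q2}) rather than just a sign, and after the cancellations you must still check that $\tfrac{1}{2}\langle \mathbb{H}\mathbf{M}_j+\mathrm{diag}(\mathbb{H}),\mathbf{M}_j\rangle$ is even (using (\ref{NHM n-NLS}) and the explicit forms of $\mathbb{H}$) to reduce the expression for $s_j$ to (\ref{sj}).
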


\begin{proof}
The proof follows the lines of Section 3.2, where similar statements were proven for the DS1$^{\rho}$ equation. First of all, invariance with respect to the anti-involution $\tau$ of the point $a_{n+1}$ implies the reality of the vector $\mathbf{Z}=\mathrm{i}\,\mathbf{V}_{a_{n+1}}\,x+\,\mathrm{i}\,\mathbf{W}_{a_{n+1}}\,t$.
Moreover, from (\ref{diff hol}) and (\ref{hom basis 3}) we get
\begin{equation}
\overline{\mathbf{r}_{j}}=-\mathbf{r}_{j}-2\mathrm{i}\pi\mathbf{N}_{j}-\mathbb{B}\mathbf{M}_{j}. \label{r n-NLS}
\end{equation}
where $\mathbf{N}_{j},\,\mathbf{M}_{j}\in\Z^{g}$ are defined in (\ref{hom basis 3}) and satisfy 
\begin{equation}
2\,\mathbf{N}_{j}+\mathbb{H}\mathbf{M}_{j}=0. \label{NHM n-NLS}
\end{equation}
For $j=1,\ldots,n$, the action of the complex conjugation on the scalars $K_{1}(a_{n+1},a_{j})$ and $K_{2}(a_{n+1},a_{j})$ is given by (\ref{K1 K2 conj}), and one can directy see from (\ref{cor Fay}) that $q_{1}(a_{n+1},a_{j})$ is real. Hence we get
\begin{equation}
\overline{E_{j}}=E_{j}-\left\langle \mathbf{V}_{a_{n+1}},\mathbf{M}_{j}\right\rangle,  \qquad
\overline{F_{j}}=F_{j}+\left\langle \mathbf{W}_{a_{n+1}},\mathbf{M}_{j}\right\rangle. \label{E N conj}
\end{equation}
Under the assumptions of the theorem and by (\ref{arg q2}), the argument of $q_{2}(a_{n+1},a_{j})$ is given by
\begin{equation}
\arg(q_{2}(a_{n+1},a_{j}))=\pi\left(1+\alpha_{j}+\tfrac{1}{2}\left\langle \mathbb{H}\mathbf{M}_{j},\mathbf{M}_{j}\right\rangle\right)-\frac{1}{2\mathrm{i}}\left(\left\langle \B\,\mathbf{M}_{j},\mathbf{M}_{j}\right\rangle+2\,\left\langle \mathbf{r}_{j} ,\mathbf{M}_{j}\right\rangle\right). \label{arg q2 f}
\end{equation}
Therefore, the reality conditions (\ref{real cond n-NLS}) together with (\ref{sol n-NLS comp}) lead to
\begin{multline}
|A_{j}|^{2}=s_{j}\,|q_{2}(a_{n+1},a_{j})|\,\frac{\Theta(\mathbf{Z}-\mathbf{d}-\mathbf{r}_{j})\,\Theta(\mathbf{Z}-\overline{\mathbf{d}}+\mathrm{i}\pi \,\text{diag}(\mathbb{H}))}{\Theta(\mathbf{Z}-\overline{\mathbf{d}}-\mathbf{r}_{j}+\mathrm{i}\pi \,\text{diag}(\mathbb{H}))\,\Theta(\mathbf{Z}-\mathbf{d})}\\
\times\exp\left\{\mathrm{i}\pi\left(1+\alpha_{j}+\tfrac{1}{2}\left\langle \mathbb{H}\mathbf{M}_{j},\mathbf{M}_{j}\right\rangle\right)+\left\langle\overline{\mathbf{d}}-\mathrm{i}\pi\text{diag}(\mathbb{H}),\mathbf{M}_{j}\right\rangle\right\},\label{4.25}
\end{multline}
if one takes into account (\ref{conj theta}) and (\ref{2.4}).
Let us choose a vector $\mathbf{d}\in \C^{g}$ such that
\[\overline{\mathbf{d}}\equiv\mathbf{d}- \mathrm{i}\pi \,\text{diag}(\mathbb{H}) \,\,\, \left(\text{mod} \,\, (2\mathrm{i}\pi \Z^{g}+ \mathbb{B}\,\Z^{g})\right). \]
Since $\overline{\mathbf{d}}-\mathbf{d}$ is purely imaginary we have
\begin{equation}
 \overline{\mathbf{d}}= \mathbf{d}- \mathrm{i}\pi \,\text{diag}(\mathbb{H}) + 2\mathrm{i}\pi\mathbf{T}, \label{4.26}
\end{equation}
for some $\mathbf{T}\in\Z^{g}$, where we have used (\ref{matrix B}) and the fact that $\B$ has a non-degenerate real part.
It follows that the vector $\mathbf{d}$ can be written as
\begin{equation}
 \mathbf{d}=\mathbf{d}_{R}+\frac{\mathrm{i}\pi}{2}(\text{diag}(\mathbb{H})-2\,\mathbf{T}), \label{vectD n-NLS}
\end{equation}
 for some $\mathbf{d}_{R}\in\R^{g}$ and $\mathbf{T}\in\Z^{g}$.
Therefore, (\ref{4.25}) becomes
\begin{equation}
|A_{j}|^{2}=s_{j}\,|q_{2}(a_{n+1},a_{j})|\exp\left\{\mathrm{i}\pi(1+\alpha_{j})+\tfrac{\mathrm{i}\pi}{2}\left\langle \mathbb{H}\mathbf{M}_{j},\mathbf{M}_{j}\right\rangle+\left\langle\mathbf{d},\mathbf{M}_{j}\right\rangle\right\}, \label{|A| n-NLS}
\end{equation}
which by (\ref{vectD n-NLS}) leads to (\ref{A n-NLS}). Moreover we deduce from (\ref{vectD n-NLS}) and (\ref{|A| n-NLS})  that
\[s_{j}=\exp\left\{\mathrm{i}\pi(1+\alpha_{j})+\tfrac{\mathrm{i}\pi}{2}\left\langle \mathbb{H}\mathbf{M}_{j}+\text{diag}(\mathbb{H}),\mathbf{M}_{j}\right\rangle-\mathrm{i}\pi\left\langle \mathbf{T},\mathbf{M}_{j}\right\rangle\right\}.\]
From (\ref{NHM n-NLS}) and the definition of the matrix $\mathbb{H}$ (see Appendix A.1), it can be deduced that the quantity $\frac{1}{2}\left\langle \mathbb{H}\mathbf{M}_{j}+\text{diag}(\mathbb{H}),\mathbf{M}_{j}\right\rangle$ is even in each case, which yields (\ref{sj}).
\end{proof}

\vspace{0.3cm}
Functions $\psi_{j}$ given in (\ref{sol n-NLS}) describe a family of algebro-geometric solutions of (\ref{n-NLS bis}) depending on: a real Riemann surface $(\Rs_{g},\tau)$, a real meromorphic function $f$ on $\Rs_{g}$ of degree $n+1$, a non critical value $z_{a}\in\R$ of $f$ such that the fiber over $z_{a}$ belongs to the set $\Rs_{g}(\R)$, and arbitrary constants $\mathbf{d}_{R}\in\R^{g}$, $\mathbf{T}\in\Z^{g}$, $\theta\in\R$. Note that the periodicity properties of the theta function imply without loss of generality that the vector $\mathbf{T}$ can be chosen in the set $\{0,1\}^{g}$. The case where the Riemann surface $\Rs_{g}$ is dividing and $\mathbf{T}=0$ is of special importance, because the related solutions are smooth, as explained in Proposition 3.1. In this case, the sign $s_{j}$ (\ref{sj}) is given by $s_{j}= \exp\{\mathrm{i}\pi(1+\alpha_{j})\}$.

\subsection{Solutions of n-NLS$^{+}$ and n-NLS$^{-}$}

Here, we consider the two most physically significant situations: the completely focusing multi-component system n-NLS$^{+}$ (which corresponds to $s=(1,\ldots,1)$), and the completely defocusing system n-NLS$^{-}$ (which corresponds to $s=(-1,\ldots,-1)$).

Starting from a pair $(\Rs_{g},f)$, where $\Rs_{g}$ is a Riemann surface of genus $g$, and where $f$ is a meromorphic function of degree $n+1$ on $\Rs_{g}$, which has $n+1$ simple poles, we construct an $n+1$-sheeted branched covering of $\C\mathbb{P}^{1}$, which we denote by $\Rs_{g,n+1}$. The ramification points of the covering correspond to critical points of $f$; we assume that all of them are simple. 

For any point $a\in\Rs_{g,n+1}$ which is not a critical point  or a pole of the meromorphic function $f$, we use the local parameter $k_{a}(p)=f(p)-f(a)$, for any point $p$  in a neighbourhood of $a$.

According to \cite{EEHS}, by an appropriate choice of the set of generators $\left\{\gamma_{j}\right\}_{j=1}^{2g+2n}$ of the fundamental group $\pi_{1}(\C\mathbb{P}^{1}\setminus \{z_{1},\ldots,z_{2g+2n}\},z_{0})$ of the 
base, which satisfy $\gamma_{1}\ldots\gamma_{2g+2n}=id$, the covering $\Rs_{g,n+1}$ can be represented as follows: consider the hyperelliptic covering of genus $g$ and attach to it $n-1$ spheres as shown in Figure 1. More precisely, the generators $\gamma_{j}$ can be chosen in such way that the loop $\gamma_{j}$ encircles only the point $z_{j}$; the corresponding elements $\sigma_{j}\in\mathbf{S}_{n+1}$ (where $\mathbf{S}_{n+1}$ denotes the symmetric group of order $n+1$) of the monodromy group of the covering are given by
\begin{align}
\sigma_{j}=&(n+1,n), & \,\,\,\,j=1,\ldots,2g+2,\nonumber\\
\sigma_{2g+2+2k+1}=\sigma_{2g+2+2k+2}=&(n-k,n-k-1), & k=0,\ldots,n-2. \nonumber
\end{align}

We denote by $x_{1},\ldots,x_{2g+2n}\in\Rs_{g,n+1}$ the critical points of the meromorphic function $f$, and by $z_{j}=f(x_{j})\in\C$ the critical values.
\begin{figure}[hbtp]  
    \label{Fig1}
\begin{center}
\includegraphics[height=45mm]{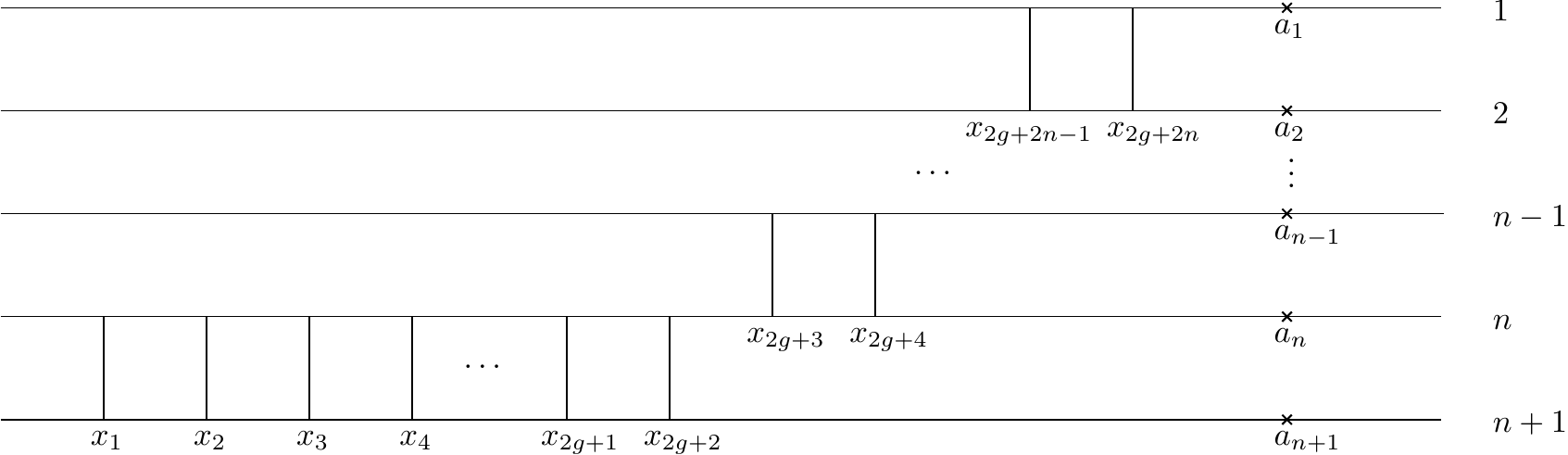}
\end{center}
\caption{\textit{Hurwitz diagram of the covering $\Rs_{g,n+1}$.}}
\end{figure}
Assume that the branch points $\left\{z_{j}\right\}_{j=1}^{2g+2n}$ are real or pairwise conjugate, and order them as follows:
\[\text{Re}(z_{1})\leq\ldots\leq \text{Re}(z_{2g+2n}).\]
Let us introduce an anti-holomorphic involution $\tau$ on $\Rs_{g,n+1}$, which acts as the complex conjugation on each sheet.

\vspace{0.5cm}
\subsubsection{Solutions of n-NLS$^{+}$.} Here we construct solutions of the n-NLS$^{+}$ system
\begin{equation}		
\mathrm{i}\,\frac{\partial \psi_{j}}{\partial t}+\frac{\partial^{2} \psi_{j}}{\partial x^{2}}+2\,\left(\sum_{k=1}^{n}|\psi_{k}|^{2}\right)\,\psi_{j} =0,  \quad \quad j=1,\ldots,n.   \label{n-NLS+}
\end{equation}
Let us first describe the covering and the homology basis used in the construction of the solutions.

Assume that all branch points of the covering $\Rs_{g,n+1}$ are pairwise conjugate. Denote this covering by $\Rs_{g,n+1}^{+}$, refering to the focusing system (\ref{n-NLS+}). The covering $\Rs_{g,n+1}^{+}$ admits two real ovals if the genus $g$ is odd, and only one if $g$ is even. Each of them consists of a closed contour on the covering having a real projection into the base.
It is straightforward to see that the covering $\Rs_{g,n+1}^{+}$ is dividing (see Appendix A.1): two points which have respectively a positive and a negative imaginary projection onto $\C$, cannot be connected by a contour which does not cross a real oval. Hence the set of fixed points of the anti-holomorphic involution $\tau$ separates the covering into two connected components.

Now let us choose the canonical homology basis such that all basic cycles belong to sheets $n+1$ and $n$, and such that the anti-holomorphic involution $\tau$ acts on them as in (\ref{hom basis}). By the previous topological description of $\Rs_{g,n+1}^{+}$, the matrix $\mathbb{H}$ involved in (\ref{hom basis}) looks as:
\[\mathbb{H}={\left(\begin{array}{cccccccc}
0&1&&&&\\
1&0&&&&\\
&&\ddots&&&\\
&&&0&1&\\
&&&1&0&\\
&&&&&0\\
\end{array}\right)} \quad \text{if $g$ is odd},\]\[ \mathbb{H}={\left(\begin{array}{cccccccc}
0&1&&&\\
1&0&&&\\
&&\ddots&&\\
&&&0&1\\
&&&1&0\\
\end{array}\right)} \quad \text{if $g$ is even}.\]
The canonical homology basis is described explicitely in Figure 2 for odd genus, and in Figure 3 for even genus.

\begin{figure}[hbtp]  
    \label{Fig2}
\begin{center}
\includegraphics[height=90mm]{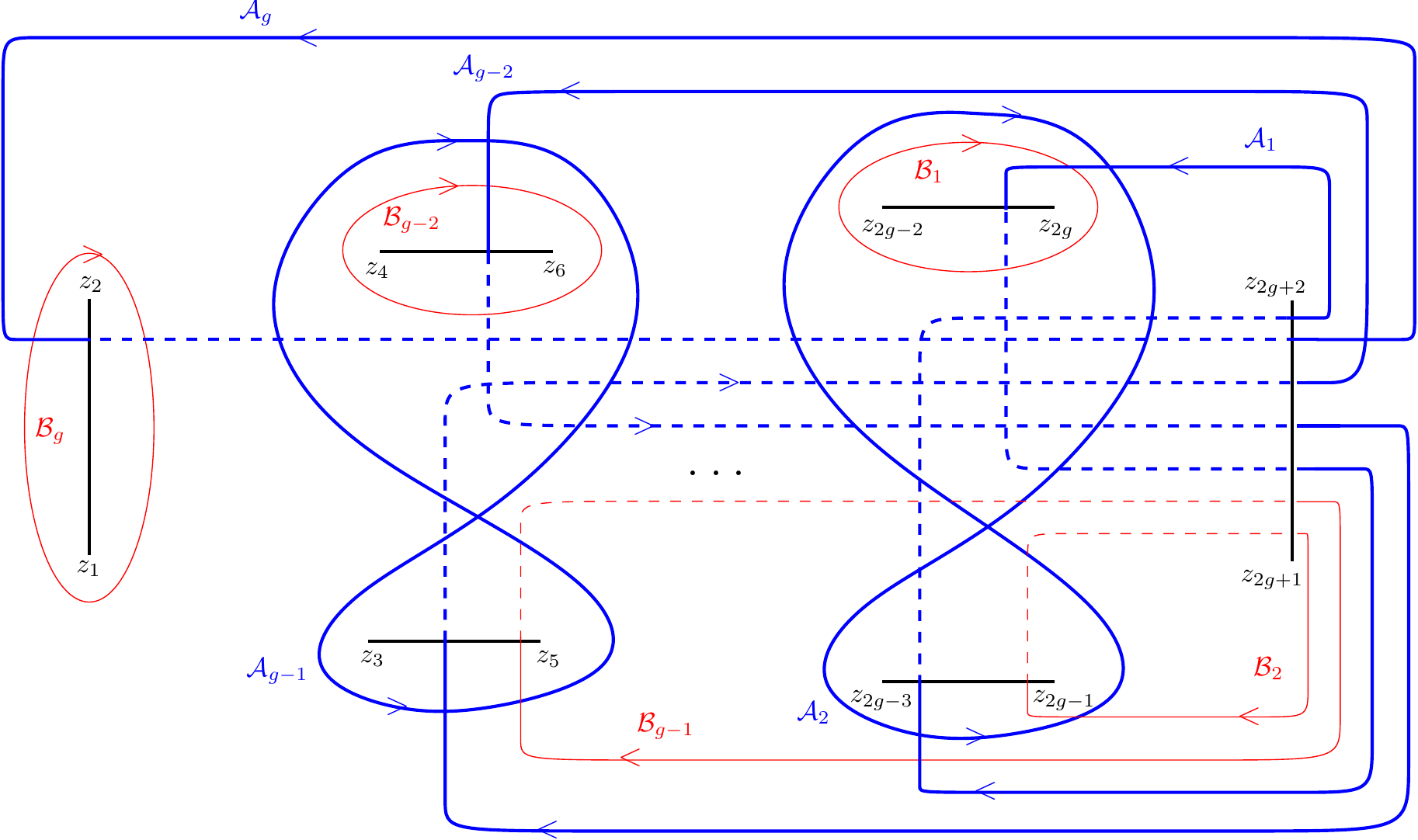}
\end{center}
\caption{\textit{Homology basis on the covering $\Rs_{g,n+1}^{+}$ when the genus $g$ is odd. The solid line
indicates the sheet $n+1$, and the dashed line sheet $n$.}}
\end{figure}

\begin{figure}[hbtp]  
    \label{Fig3}
\begin{center}
\includegraphics[height=90mm]{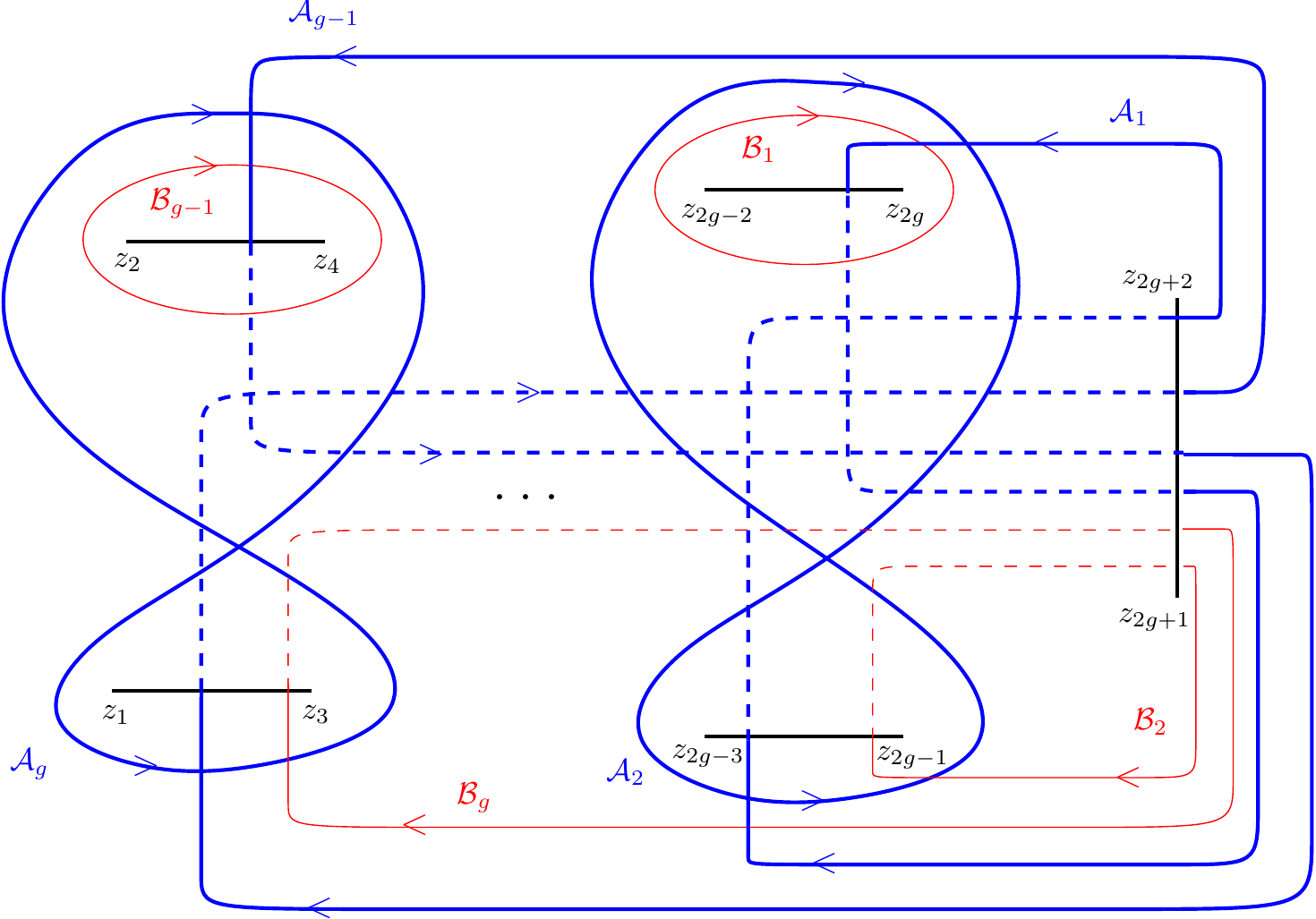}
\end{center}
\caption{\textit{Homology basis on the covering $\Rs_{g,n+1}^{+}$ when the genus $g$ is even. The solid line
indicates the sheet $n+1$, and the dashed line sheet $n$.}}
\end{figure}

As proved in the following theorem, among all coverings having a monodromy group described in Figure 1, only the covering $\Rs_{g,n+1}^{+}$ leads to algebro-geometric solutions of the focusing system (\ref{n-NLS+}).

\begin{theorem}
Consider the covering $\Rs_{g,n+1}^{+}$ and the canonical homology basis discussed above. Fix $z_{a}\in\R$ such that $z_{a}>\text{Re}(z_{j})$ for $j=1,\ldots,2g+2n$. Consider the fiber $f^{-1}(z_{a})=\left\{a_{1},\ldots,a_{n+1}\right\}$ over $z_{a}$, where $a_{j}\in \Rs_{g,n+1}^{+}(\R)$ belongs to sheet $j$ (each of the $a_{j}$ is invariant under the involution $\tau$). Let $\mathbf{d}\in\R^{g}$ and $\theta\in\R$. Then the following functions $\left\{\psi_{j}\right\}_{j=1}^{n}$ are smooth solutions of n-NLS$^{+}$:
\begin{equation}    
\psi_{j}(x,t)=|A_{j}|\,e^{\mathrm{i}\theta}\,\frac{\Theta(\mathbf{Z}-\mathbf{d}+\mathbf{r}_{j})}{\Theta(\mathbf{Z}-\mathbf{d})}\,\exp \left\{\mathrm{i}(-E_{j}\,x+\,F_{j}\,t)\right\},\label{sol n-NLS+}
\end{equation}
where $\mathbf{Z}=\mathrm{i}\,\mathbf{V}_{a_{n+1}}\,x+\,\mathrm{i}\,\mathbf{W}_{a_{n+1}}\,t.$
Here $\mathbf{r}_{j}=\int_{a_{n+1}}^{a_{j}}\omega$, the vectors $\mathbf{V}_{a_{n+1}},\,\mathbf{W}_{a_{n+1}}$ are defined in (\ref{exp hol diff}), and the vector $\mathbf{M}_{j}\in\Z^{g}$ is defined in (\ref{hom basis 3}), according to the action of $\tau$ on the relative homology group $H_{1}\left(\Rs_{g,n+1}^{+},\{a_{n+1},a_{j}\}\right)$. The scalars $|A_{j}|$ and $E_{j},\,F_{j}$ are given by (\ref{A n-NLS}) and (\ref{E,N n-NLS}) respectively.
\end{theorem}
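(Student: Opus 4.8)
The plan is to read the theorem off Proposition 4.1, reducing it to a parity statement for the intersection indices $\alpha_{j}$, and to obtain smoothness from the reasoning of Proposition 3.1. First I would check that the hypotheses put us exactly in the framework of Proposition 4.1: the covering $\Rs_{g,n+1}^{+}$ is a real Riemann surface (with $\tau$ acting as complex conjugation on each sheet), $f$ is a real meromorphic function of degree $n+1$, $z_{a}\in\R$ is a non-critical value whose fiber $\{a_{1},\ldots,a_{n+1}\}$ lies in $\Rs_{g,n+1}^{+}(\R)$, and the local parameters are the prescribed $k_{a_{j}}=f-z_{a}$. Since the matrices $\mathbb{H}$ displayed above the theorem have vanishing diagonal, the covering is dividing and the choice $\mathbf{d}\in\R^{g}$ corresponds to $\mathbf{T}=0$ in Proposition 4.1. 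Hence the functions (\ref{sol n-NLS+}) solve n-NLS$^{s}$ with $s_{j}=\exp\{\mathrm{i}\pi(1+\alpha_{j})\}$, and the whole theorem reduces to showing that each intersection index $\alpha_{j}$ defined in (\ref{alpha}) is odd, so that every $s_{j}=+1$.

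Smoothness I would settle as in Proposition 3.1. The vector $\mathbf{Z}=\mathrm{i}\,\mathbf{V}_{a_{n+1}}x+\mathrm{i}\,\mathbf{W}_{a_{n+1}}t$ is real by invariance of $a_{n+1}$ under $\tau$, and $\mathbf{d}\in\R^{g}$, so $\mathbf{Z}-\mathbf{d}\in\R^{g}$. For a dividing curve Vinnikov's result (used in Proposition 3.1) guarantees that $\Theta$ does not vanish on the real component $T_{0}=\R^{g}$ of the Jacobian; therefore the denominators $\Theta(\mathbf{Z}-\mathbf{d})$ never vanish for real $x,t$, and the $\psi_{j}$ are smooth.

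The core of the argument, and the step I expect to be the main obstacle, is the explicit evaluation of $\alpha_{j}$ from the Hurwitz picture. I would build concrete representatives of the relative cycles $\ell_{j}$ and $\tilde{\ell}_{j}$ joining $a_{n+1}$ on sheet $n+1$ to $a_{j}$ on sheet $j$ using the chain structure of the monodromy in Figure 1: the transpositions link the sheets in a chain $n+1 - n - \cdots - 1$, so $\ell_{j}$ is obtained by lifting a base path based at $z_{a}$ that successively crosses the cuts joining consecutive sheets, first the hyperelliptic cut realizing $(n+1,n)$ and then the single-pair cuts realizing $(n-m,n-m-1)$ down to sheet $j$. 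I would then use that $\tau$ acts by conjugation on the base, interchanging each branch point with its conjugate partner and reflecting the projected path across the real axis, to determine the decomposition (\ref{ellj}) of $\tau\tilde{\ell}_{j}$ and hence the coefficient of the small loop $\mathcal{S}_{a_{j}}$, equivalently the intersection $(\tau\tilde{\ell}_{j}-\tilde{\ell}_{j})\circ\ell_{j}$ in $H_{1}(\Rs_{g,n+1}^{+},\{a_{n+1},a_{j}\})$. The crucial point will be that, because all branch points are strictly off the real axis (pairwise conjugate, not real), the branch points crossed by $\ell_{j}$ are genuinely permuted by $\tau$, which forces $\tau\tilde{\ell}_{j}$ and $\tilde{\ell}_{j}$ to wind in opposite senses around $a_{j}$ and makes $\alpha_{j}$ odd; this is exactly the feature distinguishing the focusing covering $\Rs_{g,n+1}^{+}$ from a covering with real branch points, where the analogous index is even and produces the defocusing sign. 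I expect the delicate bookkeeping to lie in tracking the sheet transitions together with the action of $\tau$ on the explicit homology basis of Figures 2 and 3, and would organize it by treating the hyperelliptic transition $n+1\leftrightarrow n$ separately from the successive transitions $n-m\leftrightarrow n-m-1$, invoking the formula (\ref{arg q2}) for $\arg q_{2}(a_{n+1},a_{j})$ to cross-check the resulting parity.
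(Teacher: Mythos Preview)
Your proposal is correct and follows essentially the same route as the paper: reduce to Proposition 4.1 with $\mathbf{T}=0$ so that $s_{j}=\exp\{\mathrm{i}\pi(1+\alpha_{j})\}$, show each $\alpha_{j}$ is odd by a direct topological computation on the covering, and invoke Proposition 3.1 for smoothness. The paper's argument is terser---it computes $\alpha_{n}$ by observing (via a picture) that $\tau\tilde{\ell}_{n}-\tilde{\ell}_{n}$ is homologous to a loop encircling the vertical cut $[z_{2g+1},z_{2g+2}]$, hence $\alpha_{n}=1$, and then asserts that the remaining $\alpha_{j}$ are handled identically---whereas you spell out in more detail the chain-of-sheets construction of $\ell_{j}$ and the reason the pairwise-conjugate (non-real) branch points force an odd winding; but the underlying geometric mechanism is the same.
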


\begin{proof} 
Let us check that the conditions of the theorem imply that functions $\psi_{j}$ in (\ref{sol n-NLS}) are solutions of n-NLS$^{s}$ for $s=(1,\ldots,1)$. Since the matrix $\mathbb{H}$ associated to the covering $\Rs_{g,n+1}^{+}$ satisfies $\text{diag}(\mathbb{H})=0$, and $\mathbf{d}\in\R^{g}$ (i.e. $\mathbf{T}=0$), the quantities $\left\{s_{j}\right\}_{j=1}^{n}$ (\ref{sj}) become
\begin{equation}
s_{j}=\exp\left\{\mathrm{i}\pi(1+\alpha_{j})\right\}. \label{sj foc}
\end{equation} 
Let us first compute the intersection index $\alpha_{n}$. Let $\tilde{a}_{n+1},\,\tilde{a}_{n}\in\Rs_{g,n+1}^{+}(\R)$ lie in a neighbourhood of $a_{n+1}$ and $a_{n}$ respectively such that $f(\tilde{a}_{n+1})=f(\tilde{a}_{n})=z_{\tilde{a}}$. Denote by $\tilde{\ell}_{n}$ an oriented contour connecting $\tilde{a}_{n+1}$ and $\tilde{a}_{n}$. Then the intersection index $\alpha_{n}$ between the closed contour $\tau \tilde{\ell}_{n}-\tilde{\ell}_{n}$ and the contour $\ell_{n}$ satisfies (see Figure 4)
\begin{figure}[hbtp]  
    \label{Fig4}
\begin{center}
\includegraphics[height=75mm]{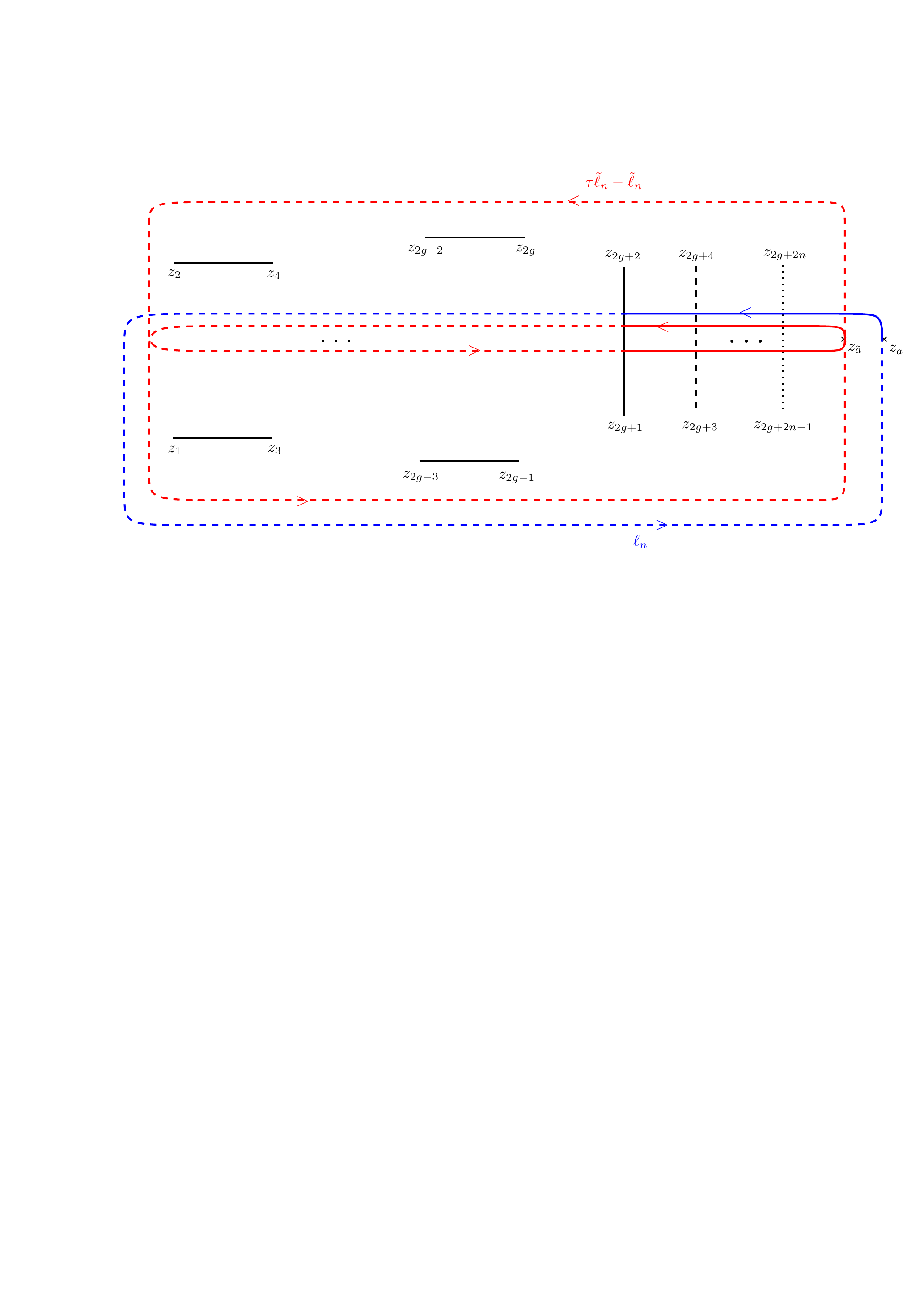}
\end{center}
\caption{\textit{The closed contour $\tau \tilde{\ell}_{n}-\tilde{\ell}_{n}\in H_{1}\left(\Rs_{g,n+1}^{+}\setminus\{a_{n+1},a_{n}\}\right)$ is homologous to a closed contour which encircles the vertical cut $[z_{2g+1},z_{2g+2}]$, then $\alpha_{n}=(\tau \tilde{\ell}_{n}-\tilde{\ell}_{n})\circ \ell_{n}=1$.}}
\end{figure}
\\
\begin{equation}
\alpha_{n}=(\tau \tilde{\ell}_{n}-\tilde{\ell}_{n})\circ \ell_{n}\equiv 1 \quad (\text{mod}\,\, 2), \label{index foc}
\end{equation}
which leads to $s_{n}=1$. Intersection indices $\alpha_{j}$ for $j=1,\ldots,n-1$ can be computed in the same way. 
Therefore
\[\alpha_{1}\equiv\alpha_{2}\equiv\ldots\equiv\alpha_{n}\equiv 1 \quad (\text{mod}\,\, 2),\]
which implies $s_{j}=1$. By Proposition 3.1, smoothness of the solutions is ensured by the reality of the vector $\mathbf{Z}-\mathbf{d}$ and the fact that the curve is dividing.
\end{proof}

\vspace{0.3cm}
Functions $\psi_{j}$ given in (\ref{sol n-NLS+}) describe a family of smooth algebro-geometric solutions of the focusing multi-component NLS equation depending  on $g+n$ complex parameters: $z_{2k-1}\in\C\setminus\R$ for $k=1,\ldots,g+n$; and $g+2$ real parameters: $z_{a},\theta\in\R$, and  $\mathbf{d}\in\R^{g}$.

\subsubsection{Solutions of n-NLS$^{-}$.} Now let us construct solutions of the system n-NLS$^{-}$
\begin{equation}		
\mathrm{i}\,\frac{\partial \psi_{j}}{\partial t}+\frac{\partial^{2} \psi_{j}}{\partial x^{2}}-2\,\left(\sum_{k=1}^{n}|\psi_{k}|^{2}\right)\,\psi_{j} =0,  \quad \quad j=1,\ldots,n.   \label{n-NLS-}
\end{equation}
As for the focusing case, let us first describe the covering and the homology basis used in our construction of the solutions of (\ref{n-NLS-}).

Assume that the branch points $z_{k}$ of the covering $\Rs_{g,n+1}$ are real for $k=1,\ldots,g+2$, and that the branch points $z_{k},\,z_{k+1}$ are pairwise conjugate for $k=2g+3,\ldots,2g+2n$.
Denote by $\Rs_{g,n+1}^{-}$ this covering, refering to the defocusing system (\ref{n-NLS-}). It is straightforward to see that such a covering is an M-curve (see Appendix A.1), that is it admits a maximal number of real ovals $g+1$ with respect to the anti-holomorphic involution $\tau$. On the other hand, it can be directly seen that $\Rs_{g,n+1}^{-}$ is dividing: two points which lie on the sheet $n+1$ and have respectively a positive and a negative imaginary projection onto $\C$ cannot be connected by a contour which does not cross a real oval.

Now let us choose the canonical homology basis such that all basic cycles belong to sheets $n+1$ and $n$, and which satisfies (\ref{hom basis}). Since the covering $\Rs_{g,n+1}^{-}$ is an M-curve, the matrix $\mathbb{H}$ involved in (\ref{hom basis}) satisfies $\mathbb{H}=0$. Such a canonical homology basis is shown in Figure 5.

\begin{figure}[hbtp]  
    \label{Fig5}
\begin{center}
\includegraphics[height=30mm]{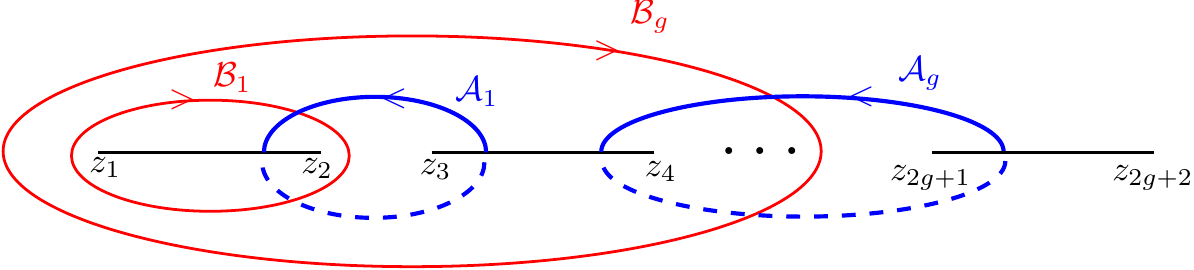}
\end{center}
\caption{\textit{Homology basis on the covering $\Rs_{g,n+1}^{-}$. The solid line
indicates the sheet $n+1$, and the dashed line sheet $n$.}}
\end{figure}
\vspace{0.3cm}

In the following theorem, we construct algebro-geometric solutions of the defocusing system (\ref{n-NLS-}) associated to the covering $\Rs_{g,n+1}^{-}$.

\begin{theorem}
Consider the covering $\Rs_{g,n+1}^{-}$ and the canonical homology basis discussed above.
Fix $z_{a}\in\R\setminus\left\{z_{1},\ldots,z_{2g+2}\right\}$ such that $z_{a}>\text{Re}(z_{j})$ for $j=1,\ldots,2g+2n$. Consider the fiber $f^{-1}(z_{a})=\left\{a_{1},\ldots,a_{n+1}\right\}$ over $z_{a}$, where $a_{j}\in \Rs_{g,n+1}^{-}(\R)$ belongs to sheet $j$ (each of the $a_{j}$ is invariant under the involution $\tau$). Let $\mathbf{d}\in\R^{g}$ and $\theta\in\R$. Then the functions $\left\{\psi_{j}\right\}_{j=1}^{n}$ in (\ref{sol n-NLS+}) are smooth solutions of n-NLS$^{-}$.
\end{theorem}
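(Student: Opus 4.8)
The plan is to follow the proof of Theorem 4.3 almost verbatim, reducing the entire statement to a parity computation for the intersection indices $\alpha_{j}$ of (\ref{alpha}) and then borrowing the smoothness criterion already in hand. Since $\Rs_{g,n+1}^{-}$ is an M-curve, the matrix $\mathbb{H}$ of (\ref{hom basis}) vanishes, so $\text{diag}(\mathbb{H})=0$; combined with the hypothesis $\mathbf{d}\in\R^{g}$, which forces $\mathbf{T}=0$ in (\ref{vectD n-NLS}), the general sign formula (\ref{sj}) of Proposition 4.1 collapses to
\[s_{j}=\exp\{\mathrm{i}\pi(1+\alpha_{j})\}.\]
Thus it suffices to prove that every $\alpha_{j}$ is even: then $s_{j}=-1$ for all $j$, and the functions $\psi_{j}$ of (\ref{sol n-NLS+}) are solutions of n-NLS$^{-}$ by Proposition 4.1.

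The principal step --- and the one I expect to be the main obstacle --- is the evaluation of $\alpha_{j}=(\tau\tilde{\ell}_{j}-\tilde{\ell}_{j})\circ\ell_{j}$. I would proceed exactly as in the computation of $\alpha_{n}$ in Theorem 4.3 (cf.\ Figure 4): choose real points $\tilde{a}_{n+1},\tilde{a}_{j}$ near $a_{n+1},a_{j}$ with $f(\tilde{a}_{n+1})=f(\tilde{a}_{j})$, join them by an oriented contour $\tilde{\ell}_{j}$, and identify the homology class of the closed loop $\tau\tilde{\ell}_{j}-\tilde{\ell}_{j}$ in $H_{1}(\Rs_{g,n+1}^{-}\setminus\{a_{n+1},a_{j}\})$. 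The decisive difference from the focusing case is that the branch points governing the cuts between sheets $n+1$ and $n$ are now real rather than pairwise conjugate, so these cuts are horizontal. Since all $a_{j}$ lie on the real ovals (recall $z_{a}\in\R$ exceeds every $\text{Re}(z_{k})$) and $\Rs_{g,n+1}^{-}$ is dividing, I expect that, on tracking the crossings picked up while descending from sheet $n+1$ to sheet $j$, the loop $\tau\tilde{\ell}_{j}-\tilde{\ell}_{j}$ meets $\ell_{j}$ an even number of times --- in contrast to the single transversal crossing of the vertical cut that produced $\alpha_{n}\equiv 1\pmod 2$ in Theorem 4.3. Pinning this down case by case, through an analogue of Figure 4 adapted to the present covering, should give $\alpha_{1}\equiv\cdots\equiv\alpha_{n}\equiv 0\pmod 2$, and hence $s_{1}=\cdots=s_{n}=-1$.

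Finally the smoothness assertion requires no new work. As $a_{n+1}$ is fixed by $\tau$, the analogue of (\ref{vectV DS1}) gives $\overline{\mathbf{V}_{a_{n+1}}}=-\mathbf{V}_{a_{n+1}}$ and $\overline{\mathbf{W}_{a_{n+1}}}=-\mathbf{W}_{a_{n+1}}$, so the vector $\mathbf{Z}=\mathrm{i}\,\mathbf{V}_{a_{n+1}}\,x+\mathrm{i}\,\mathbf{W}_{a_{n+1}}\,t$ lies in $\R^{g}$ for real $x,t$; together with $\mathbf{d}\in\R^{g}$ this keeps the argument $\mathbf{Z}-\mathbf{d}$ of the denominator in $\R^{g}$. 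Since $\Rs_{g,n+1}^{-}$ is dividing, Proposition 3.1 then guarantees that $\Theta(\mathbf{Z}-\mathbf{d})$ never vanishes, so the solutions are smooth. The genuine difficulty is therefore concentrated in the parity of the intersection indices; once the real-cut geometry is made explicit, the theorem follows.
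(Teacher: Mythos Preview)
Your proposal is correct and follows essentially the same route as the paper: reduce (\ref{sj}) to $s_{j}=\exp\{\mathrm{i}\pi(1+\alpha_{j})\}$ using $\mathbb{H}=0$ and $\mathbf{T}=0$, then argue geometrically that each $\alpha_{j}$ is even, and conclude smoothness from the reality of $\mathbf{Z}-\mathbf{d}$ together with the fact that the covering is dividing. The only refinement the paper adds is that, with the real branch points, the closed contour $\tau\tilde{\ell}_{n}-\tilde{\ell}_{n}$ is actually homologous to zero in $H_{1}(\Rs_{g,n+1}^{-}\setminus\{a_{n+1},a_{n}\})$ (see Figure~6), which immediately gives $\alpha_{n}=0$ rather than merely even.
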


\begin{proof}
Analogously to the focusing case, one has to check that all $s_{j}=-1$. Since all branch points $z_{k}$ are real for $k=1,\ldots,2g+2$, the intersection index $\alpha_{n}$ between the closed contour $\tau \tilde{\ell}_{n}-\tilde{\ell}_{n}$ and the contour $\ell_{n}$ satisfies (see Figure 6)
\begin{equation}
\alpha_{n}=(\tau \tilde{\ell}_{n}-\tilde{\ell}_{n})\circ \ell_{n}\equiv 0 \quad (\text{mod}\,\, 2), \label{index defoc}
\end{equation}
which leads to $s_{n}=-1$. Intersection indices $\alpha_{j}$ for $j=1,\ldots,n-1$ can be computed in the same way, and we get
\[\alpha_{1}\equiv\alpha_{2}\equiv\ldots\equiv\alpha_{n}\equiv 0 \quad (\text{mod}\,\, 2),\]
which implies $s_{j}=-1$. Smoothness of the solutions is ensured by the reality of the vector $\mathbf{Z}-\mathbf{d}$ and the fact that the curve is dividing.
\end{proof}
\begin{figure}[hbtp]  
    \label{Fig6}
\begin{center}
\includegraphics[height=60mm]{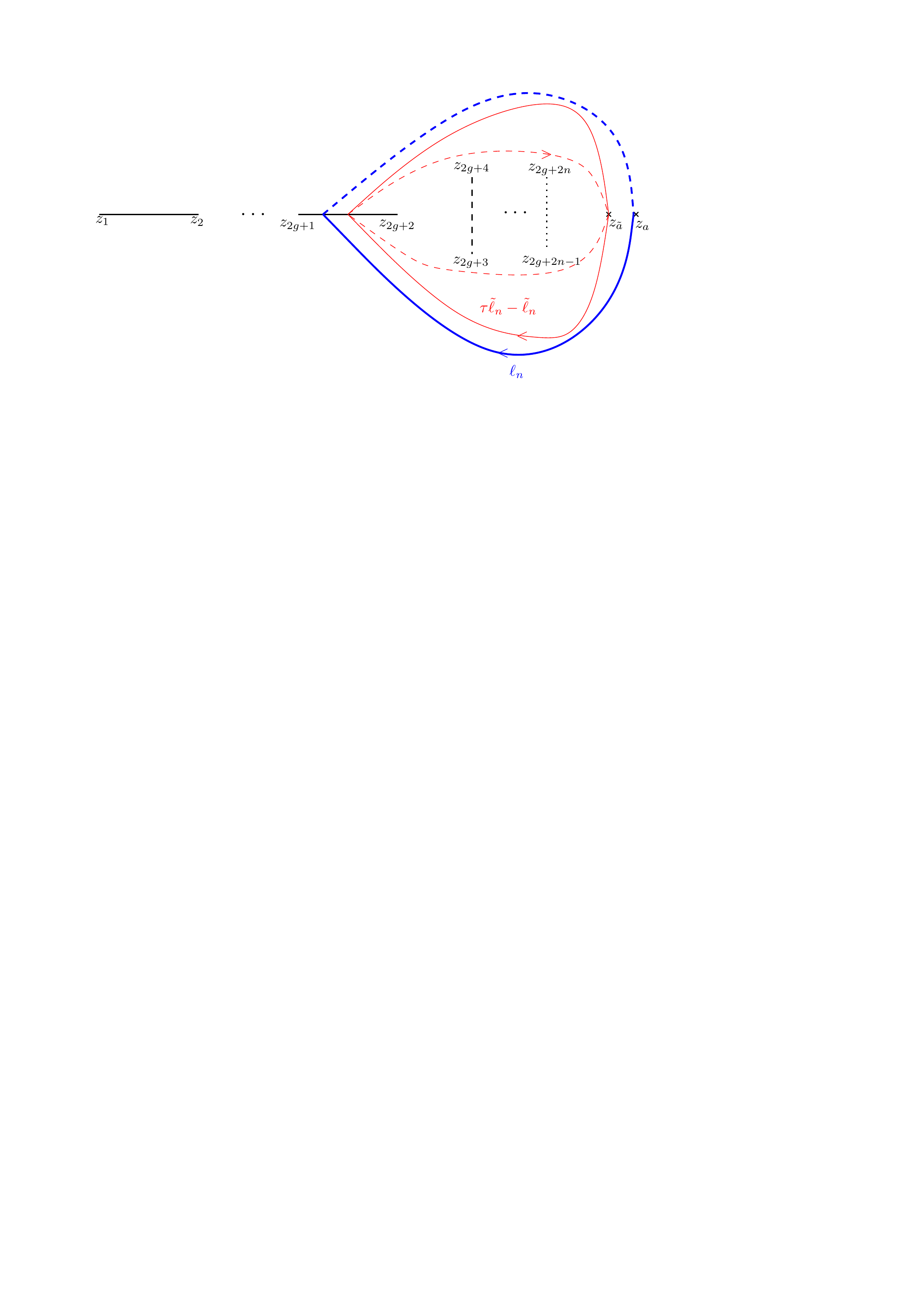}
\end{center}
\caption{\textit{The closed contour $\tau \tilde{\ell}_{n}-\tilde{\ell}_{n}\in H_{1}\left(\Rs_{g,n+1}^{-}\setminus\{a_{n+1},a_{n}\}\right)$ is homologous to zero, then $\alpha_{n}=(\tau \tilde{\ell}_{n}-\tilde{\ell}_{n})\circ \ell_{n}=0$.}}
\end{figure}
\vspace{0.3cm}

Solutions $\psi_{j}$ construced here describe a family of smooth algebro-geometric solutions of the defocusing multi-component NLS equation depending on $n-1$ complex parameters: $z_{2g+2+2k-1}\in\R$ for $k=1,\ldots,n-1$; and $3g+4$ real parameters: $z_{k}\in\R$ for $k=1,\ldots,2g+2$, $z_{a},\,\theta\in\R$, and  $\mathbf{d}\in\R^{g}$.

\begin{remark}
\rm{Smooth solutions of n-NLS$^{s}$ for a vector $s$ with mixed signs can be constructed in the same way.}
\end{remark}

\subsection{Stationary solutions of n-NLS}

It is well-known that the algebro-geometric solutions (\ref{sol n-NLS}) on an elliptic surface describe \textit{travelling waves}, i.e., the modulus of the corresponding solutions depends only on
$x-ct$, where $c$ is a constant. Due to the \textit{Galilei invariance} of the   multi-component NLS equation (see (\ref{trans sol n-NLS})), the invariance under 
transformations of the form 
\[\psi_{j}(x,t)\longrightarrow \psi_{j}(x+2\lambda\,t,t)\,\exp\left\{ -\mathrm{i}\left(\lambda\,x+\lambda^{2}\,t\right)\right\},\]
where $\lambda=-\frac{1}{2}\,W_{a_{n+1}}\,(V_{a_{n+1}})^{-1}$, leads to stationary solutions ($t$-independent) in the transformed coordinates.

For arbitrary genus of the spectral curve, stationary solutions of the multi-component NLS equation are obtained from solutions (\ref{sol n-NLS}) under the vanishing condition 
\begin{equation}
\mathbf{W}_{a_{n+1}}=0. \label{stat sol}
\end{equation}
This condition is equivalent to the existence of a meromorphic function $h$ of order two on $\Rs_{g}$, such that the point $a_{n+1}$ is a critical point of $h$ (this can be proved analogously to Lemma 4.1). 

Therefore, stationary solutions of the multi-component NLS can be constructed from the algebro-geometric data 
$(\Rs_{g},f,h,z_{a})$, where:\\
$\bullet$ $\Rs_{g}$ is a real Riemann surface of genus $g$, and $f$ is a real meromorphic function of order $n+1$ on $\Rs_{g}$,\\
$\bullet$ $z_{a}\in\C\mathbb{P}^{1}$ is a non critical value of $f$ such that $f^{-1}(z_{a})=\left\{a_{1},\ldots,a_{n+1}\right\}$,\\
$\bullet$ $h$ is a real meromorphic function of order two on $\Rs_{g}$, and $a_{n+1}$ is a critical point of $h$,\\
$\bullet$ for $j=1,\ldots,n$, local parameters $k_{a_{j}}$ near $a_{j}$ are chosen to be $k_{a_{j}}(p)=h(p)-h(a_{j})$ for any point $p$ lying in a neighbourhood of $a_{j}$, and $k_{a_{n+1}}(p)=(h(p)-h(a_{n+1}))^{1/2}$ for any point $p$ lying in a neighbourhood of $a_{n+1}$.

With this choice of local parameters, we get $f(p)-z_{a}=\beta_{j}\,k_{a_{j}}(p)+\mu_{j}\,\,k_{a_{j}}(p)^2+O\left(k_{a_{j}}(p)^3\right)$, 
for any point $p\in\Rs_{g}$ which lies in a neighbourhood of $a_{j}$, where $\beta_{j},\mu_{j}\in\R$. Hence solutions (\ref{sol n-NLS}) can be rewritten using this choice of local parameters and then are expressed by the use of the scalars $\beta_{j}$ and $\mu_{j}$. 

Moreover, choosing $a_{n+1}$ as a critical point of $h$, we get (\ref{stat sol}). In this case, the modulus of solutions (\ref{sol n-NLS}) do not depend on the variable $t$.

\subsection{Reduction of n-NLS to (n-1)-NLS}

It is natural to ask if starting from solutions of n-NLS we can obtain solutions of (n-1)-NLS for $n>2$. Such a reduction is possible if  one of the functions $\psi_{j}$ solutions of n-NLS vanishes identically.

Let $\Rs_{g,n+1}^{+}$ be the $(n+1)$-sheeted covering introduced in Section 4.3.1; to obtain solutions of (n-1)-NLS$^{+}$ from solutions of n-NLS$^{+}$, we consider the following degeneration of the covering $\Rs_{g,n+1}^{+}$: let the branch points $z_{2g+2n}$ and $z_{2g+2n-1}$ coalesce, in such way that the first sheet gets disconnected from the other sheets (see Figure 1); denote by $\Rs_{g,n}^{+}$ the covering obtained in this limit.

Then the normalized holomorphic differentials on $\Rs_{g,n+1}^{+}$ tend to normalized holomorphic differentials on $\Rs_{g,n}^{+}$; on the first sheet, all holomorphic differentials tend to zero. Therefore, in this limit, each component of the vector $\mathbf{V}_{a_{1}}$ tends to $0$.

Hence by (\ref{q2}) and (\ref{A n-NLS}), the function $\psi_{1}$ tends to zero as $z_{2g+2n}$ and $z_{2g+2n-1}$ coalesce. Functions $\left\{\psi_{j}\right\}_{j=2}^{n}$ obtained in this limit are solutions of (n-1)-NLS$^{+}$ associated to the covering $\Rs_{g,n}^{+}$.

A similar degeneration produces a solution of (n-1)-NLS$^{-}$ from a solution of n-NLS$^{-}$.

\begin{remark} \rm{Repeating this degeneration $n-3$ times, we rediscover (see \cite{Its}) algebro-geometric solutions of the focusing (resp. defocusing) non-linear Schrödinger equation (\ref{NLS}) associated to an hyperelliptic curve with pairwise conjugate branch points (resp. real branch points).}
\end{remark}

\subsection{Relationship between solutions of KP1 and solutions of n-NLS}
Historically, the Korteweg-de Vries equation (KdV) and its generalization to two spatial variables, the Kadomtsev-Petviashvili equations (KP), were the most important examples of applications of methods of algebraic geometry in the 1970's (see e.g. \cite{BBEIM}). Moreover, the KP equation is the first example of a system with two space variables for which it has been possible to completely solve the problem of reality of algebro-geometric solutions.

Here we show that starting from our solutions of the multi-component NLS equation and its complexification, we can construct a subclass of complex and real solutions of the Kadomtsev-Petviashvili equation (KP1) 
\begin{equation}
\frac{3}{4}\,u_{yy}=\left(u_{t}-\frac{1}{4}\,(6\,u\,u_{x}-u_{xxx})\right)_{x}. \label{c}
\end{equation}

Let $\Rs_{g}$ be an arbitrary Riemann surface with marked point $a$, and let $k_{a}$ be an arbitrary local parameter near $a$. Define vectors $\mathbf{V}_{a},\,\mathbf{W}_{a},\,\mathbf{U}_{a}$ as in (\ref{exp hol diff}) and let $\mathbf{d}\in\C^{g}$. Then, according to Krichever's theorem \cite{Kr}, the function 
\begin{equation}
u(x,y,t)=2\,D_{a}^{2}\log \Theta(\mathrm{i}\,\mathbf{V}_{a}\,x+\mathrm{i}\,\mathbf{W}_{a}\,y+\mathrm{i}\,\mathbf{U}_{a}\,t+\mathbf{d})+2\,c  \label{u}
\end{equation}
is a solution of KP1; here the constant $c$ is defined by the expansion near $a$ of the normalized meromorphic differential $\Omega_{a}^{(2)}(p)$ having a pole of order two at $a$ only: $\Omega_{a}^{(2)}(p)=(k_{a}(p))^{-2}+c\,k_{a}(p)+...$, where $p$ lies in a neighbourhood of $a$. 

Let us check that if the local parameter $k_{a}$ is defined by the meromorphic function $f$ as $k_{a}(p)=f(p)-f(a)$, then formula (\ref{u}) naturally arises from our construction of solutions of the n-NLS$^{s}$ system. Namely, identify $a$ with $a_{n+1}$. Then, due to the fact that $\sum_{j=1}^{n+1} \mathbf{V}_{a_{j}}=0$ (see Lemma 4.1), the solution (\ref{u}) of KP1 can be rewritten as
\[u(x,y,t)=-2\,\sum_{j=1}^{n}D_{a_{n+1}}D_{a_{j}}\log \Theta(\z)+2\,c,\]
where $\z=\mathrm{i}\,\mathbf{V}_{a_{n+1}}\,x+\mathrm{i}\,\mathbf{W}_{a_{n+1}}\,y+\mathrm{i}\,\mathbf{U}_{a_{n+1}}\,t+\mathbf{d}$. Using corollary (\ref{cor Fay}) of Fay's identity, we get
\begin{equation}
u(x,y,t)=-2\,\sum_{j=1}^{n}\left(q_{1}(a_{n+1},a_{j})+q_{2}(a_{n+1},a_{j})\,\frac{\Theta(\z+\mathbf{r}_{j})\,\Theta(\z-\mathbf{r}_{j})}{\Theta(\z)^{2}}\right)+2\,c. \label{4.37}
\end{equation}

Now let us consider solutions $\psi_{j},\,\psi_{j}^{*}$ (\ref{sol n-NLS comp}) of the complexified multi-component NLS equation,
and make the change of variables $(x,t)\rightarrow (x,y)$ and $\mathbf{d}\rightarrow -\mathrm{i}\,\mathbf{U}_{a_{n+1}}\,t+\mathbf{d}$. Then by (\ref{4.37}), the complex-valued solutions $u$ (\ref{u}) of KP1 and solutions $\psi_{j},\,\psi_{j}^{*}$ (\ref{sol n-NLS comp}) of the complexified n-NLS system are related by
\begin{equation}
u(x,y,t)=\gamma-2\,\sum_{j=1}^{n}\psi_{j}(x,y,t)\,\psi_{j}^{*}(x,y,t), \label{u psi}
\end{equation}
where
\[\gamma=-2\,\sum_{j=1}^{n}q_{1}(a_{n+1},a_{j})+2\,c.\]

If we impose the reality conditions (\ref{real cond n-NLS}), we obtain real solutions (\ref{u}) of KP1 from our solutions (\ref{sol n-NLS+}) of n-NLS$^{s}$ equation
\begin{equation}
u(x,y,t)=\gamma-2\,\sum_{j=1}^{n}s_{j}\,|\psi_{j}(x,y,t)|^{2}. \label{u psi bis}
\end{equation}

Due to the fact that in our construction of solutions of the multi-component NLS equation, the local parameters are defined by the meromorphic function $f$, complex solutions (\ref{u psi}) and real solutions (\ref{u psi bis}) of KP1 obtained in this way form only a subclass of Krichever's solutions.

\vspace{2cm}

\thanks{I thank C.~Klein, who interested me in the subject, and D.~Korotkin for carefully reading the manuscript and providing valuable hints. I am grateful to B.~Dubrovin and V.~Shramchenko for useful discussions.
This work has been supported in part by the project FroM-PDE funded by the European
Research Council through the Advanced Investigator Grant Scheme, the Conseil R\'egional de Bourgogne
via a FABER grant and the ANR via the program ANR-09-BLAN-0117-01. }

\appendix

\section{Real Riemann surfaces}

In this section, we recall some facts from the theory of real compact Riemann 
surfaces. Following \cite{Vin}, we introduce a symplectic basis of cycles on $\mathcal{R}_{g}$ and study reality properties of various objects   
 on the Riemann surface  $\mathcal{R}_{g}$ associated to this basis.

\subsection{Action of $\tau$ on the homology group $H_{1}(\Rs_{g})$}

A Riemann surface $\mathcal{R}_{g}$ is called real  if it admits an 
anti-holomorphic involution 
$\tau:\mathcal{R}_{g}\rightarrow\mathcal{R}_{g}, \,\tau^{2}=1$. The 
connected components of the set of
fixed points of the anti-involution $\tau$ are called real ovals of $\tau$. We denote 
by $\mathcal{R}_{g}(\R)$ the set of fixed points. Assume that
$\mathcal{R}_{g}(\R)$ consists of $k$ real ovals, 
with $0\leq k\leq g+1$. The curves with the maximal number of real ovals, $k=g+1$, are called M-curves.

The complement $\mathcal{R}_{g}\setminus 
\mathcal{R}_{g}(\R)$ has either one or two connected components. The curve $\mathcal{R}_{g}$ is called a \textit{dividing} curve (or that 
$\mathcal{R}_{g}$ divides) if $\mathcal{R}_{g}\setminus\mathcal{R}_{g}(\R)$ has two components, and $\mathcal{R}_{g}$ is called \textit{non-dividing} if $\mathcal{R}_{g}\setminus \mathcal{R}_{g}(\R)$ is connected (notice that an M-curve is always a dividing curve). 

\begin{example}
Consider the hyperelliptic Riemann surface of genus $g$ defined by the equation
\begin{equation}
\mu^{2}= \prod_{k=1}^{2g+1} (\lambda-\lambda_{k}), \label{3.2}
\end{equation}
where the branch points $\lambda_{k}\in\R$ are ordered such that $\lambda_{1}<\ldots<\lambda_{2g+1}$. On such a Riemann surface, we can define two anti-holomorphic involutions $\tau_{1}$ and $\tau_{2}$, given respectively by $\tau_{1}(\lambda,\mu)=(\overline{\lambda},\overline{\mu})$ and $\tau_{2}(\lambda,\mu)=(\overline{\lambda},-\overline{\mu})$. Projections of real ovals of $\tau_{1}$  on the $\lambda$-plane coincide with  the intervals $[\lambda_{1},\lambda_{2}],\ldots,[\lambda_{2g+1},+\infty]$, and projections of real ovals of $\tau_{2}$  on the $\lambda$-plane coincide with  the intervals 
$[-\infty,\lambda_{1}],\ldots,[\lambda_{2g},\lambda_{2g+1}]$. Hence the curve (\ref{3.2}) is an M-curve with respect to both anti-involutions $\tau_{1}$ and $\tau_{2}$. 
\end{example}

\vspace{0.3cm}
Denote by $\left\{\mathbf{\mathcal{A}},\mathbf{\mathcal{B}}\right\}$ the set of generators of the homology group $H_{1}(\Rs_{g})$, where $\mathbf{\mathcal{A}}=(\mathcal{A}_{1},\ldots, \mathcal{A}_{g})^{T}$ and $\mathbf{\mathcal{B}}=(\mathcal{B}_{1},\ldots, \mathcal{B}_{g})^{T}$. 
According to Proposition 2.2 in \cite{Vin}, there exists a canonical homology basis such that
\begin{equation}
\left(\begin{array}{cc}
\tau \mathbf{\mathcal{A}}\\
\tau \mathbf{\mathcal{B}}
\end{array}\right)
=
\left(\begin{array}{cr}
\mathbb{I}_{g}&0\\
\mathbb{H}&-\mathbb{I}_{g}
\end{array}\right)
\left(\begin{array}{cc}
\mathbf{\mathcal{A}}\\
\mathbf{\mathcal{B}}
\end{array}\right), \label{hom basis}
\end{equation}
where $\mathbb{I}_{g}$ is the $g\times g$ unit matrix, and $\mathbb{H}$ is a $g\times g$ matrix defined as follows
\\\\
1) if $\mathcal{R}_{g}(\R)\neq \emptyset$, 
\[\mathbb{H}={\left(\begin{array}{cccccccc}
0&1&&&&&&\\
1&0&&&&&&\\
&&\ddots&&&&&\\
&&&0&1&&&\\
&&&1&0&&&\\
&&&&&0&&\\
&&&&&&\ddots&\\
&&&&&&&0
\end{array}\right)} \quad \text{if $\mathcal{R}_{g}(\R)$ is dividing},\]
\[\mathbb{H}={\left(\begin{array}{cccccccc}
1&&&&&\\
&\ddots&&&&\\
&&1&&&\\
&&&0&&\\
&&&&\ddots&\\
&&&&&0
\end{array}\right)} \quad \text{if $\mathcal{R}_{g}(\R)$ is non-dividing},\]
(rank$(\mathbb{H})=g+1-k$ in both cases).
\\\\
2) if $\mathcal{R}_{g}(\R)= \emptyset$, (i.e. the curve does not have real ovals), then
\[\mathbb{H}={\left(\begin{array}{cccccc}
0&1&&&\\
1&0&&&\\
&&\ddots&&\\
&&&0&1\\
&&&1&0
\end{array}\right)}
\quad \text{or} \quad 
\mathbb{H}={\left(\begin{array}{ccccccc}
0&1&&&&\\
1&0&&&&\\
&&\ddots&&&\\
&&&0&1&\\
&&&1&0&\\
&&&&&0
\end{array}\right)},\]
(rank$(\mathbb{H})=g$ if $g$ is even, rank$(\mathbb{H})=g-1$ if $g$ is odd).
\\

Let us choose the homology basis satisfying (\ref{hom basis}), and study the action of $\tau$ on the normalized holomorphic differentials, and the action of the complex conjugation on the theta function with zero characteristics. 

By (\ref{hom basis}) the $\mathcal{A}$-cycles of the homology basis are invariant under $\tau$.  Due to normalization condition (2.1) this leads to the following action of $\tau$ on the normalized holomorphic differentials 
\begin{equation}
\overline{\tau^{*}\omega_{j}}=-\omega_{j}. \label{diff hol}
\end{equation}
Using (\ref{hom basis}) and (\ref{diff hol}) we get the following reality property for the matrix $\mathbb{B}$ of $\mathcal{B}$-periods
\begin{equation}
\overline{\mathbb{B}}= \mathbb{B}-2\mathrm{i}\pi\, \mathbb{H}. \label{matrix B}
\end{equation}
By Proposition 2.3 in \cite{Vin}, for any $\mathbf{z}\in\C^{g}$, relation (\ref{matrix B}) implies
\begin{equation} 
\overline{\Theta(\mathbf{z})}=\kappa\,\Theta(\overline{\mathbf{z}}-\mathrm{i}\pi \,\text{diag}(\mathbb{H})), \label{conj theta}
\end{equation}
where $\text{diag}(\mathbb{H})$ denotes the vector of the diagonal elements of the matrix $\mathbb{H}$, and $\kappa$ is a root of unity which depends on matrix $\mathbb{H}$ (knowledge of the exact value of $\kappa$ is not needed for our purpose).

\subsection{Action of $\tau$ on $H_{1}(\Rs_{g}\setminus\{a,b\})$ and $H_{1}(\Rs_{g},\{a,b\})$}

Here, we study the action of $\tau$ on the homology group  $H_{1}(\Rs_{g}\setminus\{a,b\})$ of the punctured Riemann surface $\Rs_{g}\setminus\{a,b\}$, and the action of $\tau$ on its dual relative homology group $H_{1}(\Rs_{g},\{a,b\})$. We consider the case where $\tau a=b$, and the case where $\tau a=a$, $\tau b=b$.

Denote by $\left\{\mathbf{\mathcal{A}},\mathbf{\mathcal{B}},\ell\right\}$ the generators of the relative homology group $H_{1}(\Rs_{g},\{a,b\})$, where $\ell$ is a contour between $a$ and $b$ which does not intersect the canonical homology basis $\left\{\mathbf{\mathcal{A}},\mathbf{\mathcal{B}}\right\}$, and denote by $\left\{\mathbf{\mathcal{A}},\mathbf{\mathcal{B}},\mathcal{S}_{b}\right\}$ the generators of the homology group $H_{1}(\Rs_{g}\setminus\{a,b\})$, where $\mathcal{S}_{b}$ is a positively oriented small contour around $b$ such that $\mathcal{S}_{b}\circ\ell=1$.

\subsubsection{Case $\tau a=b$}

\begin{proposition} Let us choose the canonical homology basis in $H_{1}(\Rs_{g})$ satisfying (\ref{hom basis}), and assume that $\tau a=b$. Then
\begin{enumerate}
	\item the action of $\tau$ on the generators $\left\{\mathbf{\mathcal{A}},\mathbf{\mathcal{B}},\ell\right\}$ of the relative homology group $H_{1}(\Rs_{g},\{a,b\})$ is given by
\begin{equation}
\left(\begin{array}{ccc}
\tau \mathbf{\mathcal{A}}\\
\tau \mathbf{\mathcal{B}}\\
\tau \ell
\end{array}\right)
=
\left(\begin{array}{ccc}
\mathbb{I}_{g}&0&0\\
\mathbb{H}&-\mathbb{I}_{g}&0\\
\mathbf{N}&0&-1
\end{array}\right)
\left(\begin{array}{ccc}
\mathbf{\mathcal{A}}\\
\mathbf{\mathcal{B}} \\
\ell
\end{array}\right), \label{hom basis 1}
\end{equation}
for some $\mathbf{N}\in\Z^{g}$,
\item the action of $\tau$ on the generators $\left\{\mathbf{\mathcal{A}},\mathbf{\mathcal{B}},\mathcal{S}_{b}\right\}$ of the homology group $H_{1}(\Rs_{g}\setminus\{a,b\})$ is given by
\begin{equation}
\left(\begin{array}{ccc}
\tau \mathbf{\mathcal{A}}\\
\tau \mathbf{\mathcal{B}}\\
\tau \mathcal{S}_{b}
\end{array}\right)
=
\left(\begin{array}{ccc}
\mathbb{I}_{g}&0&0\\
\mathbb{H}&-\mathbb{I}_{g}&\mathbf{N}\\
0&0&1
\end{array}\right)
\left(\begin{array}{ccc}
\mathbf{\mathcal{A}}\\
\mathbf{\mathcal{B}} \\
\mathcal{S}_{b}
\end{array}\right), \label{hom basis 2}
\end{equation}
\end{enumerate}
where vector $\mathbf{N}\in\Z^{g}$ is the same as in (\ref{hom basis 1}).
\end{proposition}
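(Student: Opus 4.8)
The plan is to read off the two matrices from the long exact homology sequences of the pair $(\Rs_g,\{a,b\})$ and of the punctured surface, and then to pin down the remaining integer coefficients by means of the intersection pairing between $H_1(\Rs_g\setminus\{a,b\})$ and $H_1(\Rs_g,\{a,b\})$. Since $H_1(\{a,b\})=0$, the sequence of the pair gives a splitting $H_1(\Rs_g,\{a,b\})\cong H_1(\Rs_g)\oplus\Z\,\ell$, in which $\ell$ is the unique generator with non-zero boundary $\partial\ell=b-a$; dually, the loop $\mathcal{S}_b$ generates the kernel of $H_1(\Rs_g\setminus\{a,b\})\to H_1(\Rs_g)$, so $H_1(\Rs_g\setminus\{a,b\})\cong H_1(\Rs_g)\oplus\Z\,\mathcal{S}_b$. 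Both projections onto $H_1(\Rs_g)$ are $\tau$-equivariant, so the $\mathcal{A}$- and $\mathcal{B}$-blocks of the two matrices are forced to coincide with the block of (\ref{hom basis}). In part (1), the $\tau$-image of a closed cycle is again closed and hence has zero $\ell$-coefficient, which accounts for the vanishing upper-right column; in part (2), the relation $\mathcal{S}_a+\mathcal{S}_b=0$ in $H_1(\Rs_g\setminus\{a,b\})$ together with the orientation-reversal of $\tau$ gives $\tau\mathcal{S}_b=-\mathcal{S}_a=\mathcal{S}_b$.

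It remains to compute the action on $\ell$ and the $\mathcal{S}_b$-correction of $\tau\mathcal{B}$. Because $\tau a=b$ and $\tau^2=1$, the contour $\tau\ell$ runs from $b$ to $a$, so $\tau\ell+\ell$ is a closed cycle $c\in H_1(\Rs_g)$; write $c=\langle\mathbf{N},\mathcal{A}\rangle+\langle\mathbf{M}',\mathcal{B}\rangle$. To show that $\mathbf{M}'=0$ I would use the anti-invariance $\tau u\circ\tau v=-(u\circ v)$ of the intersection form under the orientation-reversing involution $\tau$: since $\tau\mathcal{A}_j=\mathcal{A}_j$ and $\ell\circ\mathcal{A}_j=0$ by the choice of $\ell$, one gets $\tau\ell\circ\mathcal{A}_j=\tau\ell\circ\tau\mathcal{A}_j=-(\ell\circ\mathcal{A}_j)=0$, whence $c\circ\mathcal{A}_j=0$ for every $j$ and therefore $\mathbf{M}'=0$. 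This yields $\tau\ell=-\ell+\langle\mathbf{N},\mathcal{A}\rangle$ and defines the vector $\mathbf{N}$ of (\ref{hom basis 1}).

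Finally I would identify the $\mathcal{S}_b$-coefficients in part (2) with this same $\mathbf{N}$ through the duality normalised by $\mathcal{S}_b\circ\ell=1$, $\mathcal{S}_b\circ\mathcal{A}_j=\mathcal{S}_b\circ\mathcal{B}_j=0$. Writing $\tau\mathcal{B}_i=(\mathbb{H}\mathcal{A})_i-\mathcal{B}_i+c_i\,\mathcal{S}_b$ and $\tau\mathcal{A}_i=\mathcal{A}_i+d_i\,\mathcal{S}_b$, I would evaluate $\tau\mathcal{B}_i\circ\tau\ell$ and $\tau\mathcal{A}_i\circ\tau\ell$ in two ways: the anti-invariance gives $\tau\mathcal{B}_i\circ\tau\ell=-(\mathcal{B}_i\circ\ell)=0$ and $\tau\mathcal{A}_i\circ\tau\ell=-(\mathcal{A}_i\circ\ell)=0$, while substituting $\tau\ell=-\ell+\langle\mathbf{N},\mathcal{A}\rangle$ and using the canonical numbers $\mathcal{B}_i\circ\mathcal{A}_m=-\delta_{im}$ and $\mathcal{S}_b\circ\ell=1$ produces $N_i-c_i$ and $-d_i$ respectively. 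Hence $c_i=N_i$ and $d_i=0$, which is exactly (\ref{hom basis 2}).

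The main obstacle is bookkeeping rather than conceptual: one must fix a single orientation and sign convention for the intersection pairing linking the absolute, punctured and relative groups, and verify that $\tau u\circ\tau v=-(u\circ v)$ holds with that convention, so that the \emph{same} vector $\mathbf{N}$ is produced in (\ref{hom basis 1}) and (\ref{hom basis 2}). Once the pairing is normalised by $\mathcal{S}_b\circ\ell=1$ and $\ell$ is chosen disjoint from the canonical basis, each unknown coefficient is determined by a single intersection computation, and the two matrices follow.
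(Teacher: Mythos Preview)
Your approach is the same as the paper's: both use the intersection pairing between $H_1(\Rs_g\setminus\{a,b\})$ and $H_1(\Rs_g,\{a,b\})$ (normalised by $\mathcal{S}_b\circ\ell=1$) together with the anti-invariance $\tau u\circ\tau v=-(u\circ v)$ to pin down the undetermined entries. Your computations of $d_i$ and $c_i$ in part~(2), and of $\tau\mathcal{S}_b=\mathcal{S}_b$ via $\mathcal{S}_a+\mathcal{S}_b=0$, are exactly the paper's.

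There is, however, a circularity in your argument for $\mathbf{M}'=0$. When you write $\tau\ell\circ\mathcal{A}_j=\tau\ell\circ\tau\mathcal{A}_j$, the $\mathcal{A}_j$ on the right must live in $H_1(\Rs_g\setminus\{a,b\})$ for the Lefschetz pairing with the relative class $\tau\ell$ to make sense; replacing $\tau\mathcal{A}_j$ by $\mathcal{A}_j$ in that group uses $d_j=0$, which you only establish later by substituting $\tau\ell=-\ell+\langle\mathbf{N},\mathcal{A}\rangle$, i.e.\ already assuming $\mathbf{M}'=0$. (Equivariance of the projection to $H_1(\Rs_g)$ only gives $\tau\mathcal{A}_j=\mathcal{A}_j+d_j\mathcal{S}_b$ in the punctured group, not $d_j=0$.)

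The fix is the paper's: work with the \emph{closed} cycle $c=\tau\ell+\ell$ directly in $H_1(\Rs_g)$, where $\tau\mathcal{A}_j=\mathcal{A}_j$ is given by (\ref{hom basis}). Since $\tau c=c$, anti-invariance yields $c\circ\mathcal{A}_j=(\tau c)\circ(\tau\mathcal{A}_j)=-(c\circ\mathcal{A}_j)$, hence $c\circ\mathcal{A}_j=0$ and $\mathbf{M}'=0$. The paper phrases this as ``any contour in $H_1(\Rs_g)$ which is invariant under $\tau$ is a combination of $\mathcal{A}$-cycles only''. Once $\mathbf{M}'=0$ is secured this way, your intersection computations for $d_i$ and $c_i$ go through without change.
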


\begin{proof}
The action of $\tau$ on $\mathcal{A}$ and $\mathcal{B}$-cycles in (\ref{hom basis 1}) coincides with the one  (\ref{hom basis}) in $H_{1}(\Rs_{g})$.
From (\ref{hom basis}), one sees that any contour in $H_{1}(\Rs_{g})$ which is invariant under $\tau$ is a combination of $\mathcal{A}$-cycles only. In particular, the closed contour $\tau \ell+\ell\in H_{1}(\Rs_{g})$ can be written as
\begin{equation}
\tau \ell+\ell= \mathbf{N}\mathcal{A},  \label{C}
\end{equation}
for some $\mathbf{N}\in\Z^{g}$. This proves (\ref{hom basis 1}).
\\

Now let us prove (\ref{hom basis 2}). By (\ref{hom basis}), the cycles $\tau \mathcal{A}$ admit the following decomposition in $H_{1}(\Rs_{g}\setminus\{a,b\})$:
\begin{equation}
\tau \mathcal{A}= \mathcal{A}+\mathbf{n}\,\mathcal{S}_{b},  \label{A S}
\end{equation}
for some $\mathbf{n}\in\Z^{g}$.
Since $\tau$ changes the orientation of $\Rs_{g}$, all intersection indices change their sign under the action of $\tau$. We get from (\ref{A S}) 
\begin{align}
0&=\mathcal{A}\circ\ell \nonumber\\
&=-\tau \mathcal{A}\circ\tau \ell \nonumber\\
&=-(\mathcal{A}+\mathbf{n}\,\mathcal{S}_{b})\circ \tau \ell \nonumber\\
&=-(\mathcal{A}+\mathbf{n}\,\mathcal{S}_{b})\circ (-\ell+\mathbf{N}\mathcal{A}), \label{int ind A1}
\end{align}
where $\mathbf{N}\in\Z^{g}$ is defined by (\ref{hom basis 1}). The last intersection index in (\ref{int ind A1}) equals $\mathbf{n}$, which implies $\tau \mathcal{A}=\mathcal{A}$. According to (\ref{hom basis}), the action of $\tau$ on $\mathcal{B}$-cycles in $H_{1}(\Rs_{g}\setminus\{a,b\})$ is given by
\begin{equation}
\tau \mathcal{B}= -\mathcal{B}+\mathbb{H}\mathcal{A}+\mathbf{m}\,\mathcal{S}_{b},  \label{B S}
\end{equation}
for some $\mathbf{m}\in\Z^{g}$. Then
\begin{align}
0&=\mathcal{B}\circ\ell \nonumber\\
&=-\tau \mathcal{B}\circ\tau \ell \nonumber\\
&=-(-\mathcal{B}+\mathbb{H}\mathcal{A}+\mathbf{m}\,\mathcal{S}_{b})\circ \tau \ell \nonumber\\
&=-(-\mathcal{B}+\mathbb{H}\mathcal{A}+\mathbf{m}\,\mathcal{S}_{b})\circ (-\ell+\mathbf{N}\mathcal{A}), \label{int ind B1}
\end{align}
where $\mathbf{N}$ is defined by (\ref{hom basis 1}).
The last intersection index in (\ref{int ind B1}) equals $\mathbf{m}-\mathbf{N}$, which gives $\tau \mathcal{B}= -\mathcal{B}+\mathbb{H}\mathcal{A}+\mathbf{N}\,\mathcal{S}_{b}$. Finally, to prove that $\tau \mathcal{S}_{b}=\mathcal{S}_{b}$, we use the relation $\mathcal{S}_{a}+\mathcal{S}_{b}=0$, where $\mathcal{S}_{a}$ is a positively oriented small contour around $a$, and the relation $\tau \mathcal{S}_{b}=-\mathcal{S}_{a}$. 
\end{proof}

\subsubsection{Case $\tau a=a$ and $\tau b=b$}

\begin{proposition} Let us choose the canonical homology basis in $H_{1}(\Rs_{g})$ satisfying (\ref{hom basis}), and assume that $\tau a=a$ and $\tau b=b$. Then
\begin{enumerate}
	\item the action of $\tau$ on the generators $\left\{\mathbf{\mathcal{A}},\mathbf{\mathcal{B}},\ell\right\}$ of the relative homology group $H_{1}(\Rs_{g},\{a,b\})$ is given by
\begin{equation}
\left(\begin{array}{ccc}
\tau \mathbf{\mathcal{A}}\\
\tau \mathbf{\mathcal{B}}\\
\tau \ell
\end{array}\right)
=
\left(\begin{array}{ccc}
\mathbb{I}_{g}&0&0\\
\mathbb{H}&-\mathbb{I}_{g}&0\\
\mathbf{N}&\mathbf{M}&1
\end{array}\right)
\left(\begin{array}{ccc}
\mathbf{\mathcal{A}}\\
\mathbf{\mathcal{B}} \\
\ell
\end{array}\right), \label{hom basis 3}
\end{equation}
where $\mathbf{N},\,\mathbf{M}\in\Z^{g}$ are related by
\begin{equation}
2\,\mathbf{N}+\mathbb{H}\mathbf{M}=0, \label{NM stable}
\end{equation}
\item the action of $\tau$ on the generators $\left\{\mathbf{\mathcal{A}},\mathbf{\mathcal{B}},\mathcal{S}_{b}\right\}$ of the homology group $H_{1}(\Rs_{g}\setminus\{a,b\})$ is given by
\begin{equation}
\left(\begin{array}{ccc}
\tau \mathbf{\mathcal{A}}\\
\tau \mathbf{\mathcal{B}}\\
\tau \mathcal{S}_{b}
\end{array}\right)
=
\left(\begin{array}{ccc}
\mathbb{I}_{g}&0&-\mathbf{M}\\
\mathbb{H}&-\mathbb{I}_{g}&\mathbf{N}\\
0&0&-1
\end{array}\right)
\left(\begin{array}{ccc}
\mathbf{\mathcal{A}}\\
\mathbf{\mathcal{B}} \\
\mathcal{S}_{b}
\end{array}\right), \label{hom basis 4}
\end{equation}
\end{enumerate}
where vectors $\mathbf{N},\,\mathbf{M}\in\Z^{g}$ are the same as in (\ref{hom basis 3}).
\end{proposition}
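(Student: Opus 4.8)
The plan is to follow the same strategy as the proof of Proposition A.3, adapting the intersection-index computations to the situation in which both $a$ and $b$ are fixed by $\tau$. As there, the action of $\tau$ on the $\mathcal{A}$- and $\mathcal{B}$-cycles in (\ref{hom basis 3}) is inherited from (\ref{hom basis}) in $H_1(\Rs_g)$, so only the rows describing $\tau\ell$ and $\tau\mathcal{S}_b$ actually require work.

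For part 1, I would first observe that, since $\tau a=a$ and $\tau b=b$, the image $\tau\ell$ is again a contour from $a$ to $b$; hence $\tau\ell-\ell$ is a genuinely \emph{closed} contour in $H_1(\Rs_g)$, and I may write $\tau\ell-\ell=\mathbf{N}\mathcal{A}+\mathbf{M}\mathcal{B}$ for some $\mathbf{N},\mathbf{M}\in\Z^g$, which is the content of (\ref{hom basis 3}). The relation (\ref{NM stable}) then follows from the anti-invariance of this class: applying $\tau$ gives $\tau(\tau\ell-\ell)=\ell-\tau\ell=-(\tau\ell-\ell)$. Computing the left-hand side from $\tau\mathcal{A}=\mathcal{A}$ and $\tau\mathcal{B}=\mathbb{H}\mathcal{A}-\mathcal{B}$, and using the symmetry of $\mathbb{H}$, produces $(\mathbf{N}+\mathbb{H}\mathbf{M})\mathcal{A}-\mathbf{M}\mathcal{B}$; matching coefficients against $-\mathbf{N}\mathcal{A}-\mathbf{M}\mathcal{B}$ yields exactly $2\mathbf{N}+\mathbb{H}\mathbf{M}=0$, the $\mathcal{B}$-coefficients matching automatically. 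This contrasts with Proposition A.3, where $\tau\ell+\ell$ was $\tau$-\emph{invariant} and therefore expressible through $\mathcal{A}$-cycles alone; here the anti-invariance is precisely what permits a nonzero $\mathcal{B}$-component while constraining $\mathbf{N}$ and $\mathbf{M}$.

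For part 2, I would determine the three rows of (\ref{hom basis 4}) by the same intersection-index bookkeeping as in Proposition A.3, exploiting that $\tau$ reverses all intersection numbers. Writing $\tau\mathcal{A}=\mathcal{A}+\mathbf{n}\,\mathcal{S}_b$ in $H_1(\Rs_g\setminus\{a,b\})$, I would evaluate $0=\mathcal{A}\circ\ell=-\tau\mathcal{A}\circ\tau\ell$ using $\mathcal{A}_i\circ\mathcal{B}_k=\delta_{ik}$ and $\mathcal{S}_b\circ\ell=1$; the surviving terms give $\mathbf{M}+\mathbf{n}=0$, hence $\tau\mathcal{A}=\mathcal{A}-\mathbf{M}\,\mathcal{S}_b$. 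Writing $\tau\mathcal{B}=\mathbb{H}\mathcal{A}-\mathcal{B}+\mathbf{m}\,\mathcal{S}_b$ and evaluating $0=\mathcal{B}\circ\ell=-\tau\mathcal{B}\circ\tau\ell$ in the same way gives $\mathbf{m}=-\mathbf{N}-\mathbb{H}\mathbf{M}$, which collapses to $\mathbf{m}=\mathbf{N}$ after inserting (\ref{NM stable}). Finally $\tau\mathcal{S}_b=-\mathcal{S}_b$, because $\tau$ fixes $b$ but reverses orientation, so it sends the positively oriented loop $\mathcal{S}_b$ to a negatively oriented loop about the same point.

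The bulk of this is mechanical; the one step demanding genuine care is the sign bookkeeping in the two intersection computations of part 2, in particular keeping track of $\mathcal{B}_i\circ\mathcal{A}_l=-\delta_{il}$ and then recognizing that the relation (\ref{NM stable}) derived in part 1 is exactly what simplifies $\mathbf{m}=-\mathbf{N}-\mathbb{H}\mathbf{M}$ to $\mathbf{N}$. This interlocking of the two parts, where the constraint from part 1 feeds the simplification in part 2, is the only nonroutine feature of the argument.
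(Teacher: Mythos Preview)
Your proposal is correct and follows essentially the same approach as the paper's own proof: the paper likewise inherits the $\mathcal{A}$- and $\mathcal{B}$-rows from (\ref{hom basis}), derives (\ref{NM stable}) from the anti-invariance $\tau(\tau\ell-\ell)=-(\tau\ell-\ell)$, and determines the $\mathcal{S}_b$-coefficients in (\ref{hom basis 4}) by exactly the intersection computations $0=\mathcal{A}\circ\ell=-\tau\mathcal{A}\circ\tau\ell$ and $0=\mathcal{B}\circ\ell=-\tau\mathcal{B}\circ\tau\ell$, then invokes (\ref{NM stable}) to reduce $-\mathbf{N}-\mathbb{H}\mathbf{M}$ to $\mathbf{N}$. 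The only cosmetic difference is that the paper phrases part~1 abstractly (``each contour $\mathcal{C}$ with $\tau\mathcal{C}=-\mathcal{C}$ satisfies $2\tilde{\mathbf{N}}+\mathbb{H}\tilde{\mathbf{M}}=0$''), whereas you spell out the coefficient-matching explicitly.
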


\begin{proof}
The action of $\tau$ on $\mathcal{A}$ and $\mathcal{B}$-cycles in (\ref{hom basis 3}) coincides with the one (\ref{hom basis}) in $H_{1}(\Rs_{g})$.
From (\ref{hom basis}), one sees that each contour $\mathcal{C}\in H_{1}(\Rs_{g})$ which satisfies $\tau \mathcal{C}=-\mathcal{C}$, can be represented by
\begin{equation}
\mathcal{C}= \mathbf{\tilde{N}}\mathcal{A}+\mathbf{\tilde{M}}\mathcal{B},  \label{C stable}
\end{equation}
where $\mathbf{\tilde{N}},\,\mathbf{\tilde{M}}\in\Z^{g}$ are related by $2\,\mathbf{\tilde{N}}+\mathbb{H}\mathbf{\tilde{M}}=0$. In particular, the closed contour $\tau \ell-\ell\in H_{1}(\Rs_{g},\{a,b\})$ can be written as
\begin{equation}
\mathcal{\tau \ell-\ell}= \mathbf{N}\mathcal{A}+\mathbf{M}\mathcal{B},  \label{tau ell stable}
\end{equation}
where $\mathbf{N},\,\mathbf{M}\in\Z^{g}$ are related by $2\,\mathbf{N}+\mathbb{H}\mathbf{M}=0$.
This proves (\ref{hom basis 3}).
\\

Now let us prove (\ref{hom basis 4}). By (\ref{hom basis}), the cycles $\tau \mathcal{A}$ admit the following decomposition in $H_{1}(\Rs_{g}\setminus\{a,b\})$
\begin{equation}
\tau \mathcal{A}= \mathcal{A}+\mathbf{n}\,\mathcal{S}_{b},  \label{A S stable}
\end{equation}
for some $\mathbf{n}\in\Z^{g}$.
Therefore, we get from (\ref{A S stable}) 
\begin{align}
0&=\mathcal{A}\circ\ell \nonumber\\
&=-\tau \mathcal{A}\circ\tau \ell \nonumber\\
&=-(\mathcal{A}+\mathbf{n}\,\mathcal{S}_{b})\circ \tau \ell \nonumber\\
&=-(\mathcal{A}+\mathbf{n}\,\mathcal{S}_{b})\circ (\ell+\mathbf{N}\mathcal{A}+\mathbf{M}\mathcal{B}), \label{int ind A2}
\end{align}
where $\mathbf{N},\,\mathbf{M}\in\Z^{g}$ is defined by (\ref{hom basis 3}).
The last intersection index in (\ref{int ind A2}) equals $-(\mathbf{n}+\mathbf{M})$, which gives $\tau \mathcal{A}= \mathcal{A}-\mathbf{M}\,\mathcal{S}_{b}$.
According to (\ref{hom basis}), the action of $\tau$ on $\mathcal{B}$-cycles in $H_{1}(\Rs_{g}\setminus\{a,b\})$ is given by
\begin{equation}
\tau \mathcal{B}= -\mathcal{B}+\mathbb{H}\mathcal{A}+\mathbf{m}\,\mathcal{S}_{b}, \label{B S stable}
\end{equation}
for some $\mathbf{m}\in\Z^{g}$.
Then
\begin{align}
0&=\mathcal{B}\circ\ell \nonumber\\
&=-\tau \mathcal{B}\circ\tau \ell \nonumber\\
&=-(-\mathcal{B}+\mathbb{H}\mathcal{A}+\mathbf{m}\,\mathcal{S}_{b})\circ \tau \ell \nonumber\\
&=-(-\mathcal{B}+\mathbb{H}\mathcal{A}+\mathbf{m}\,\mathcal{S}_{b})\circ (\ell+\mathbf{N}\mathcal{A}+\mathbf{M}\mathcal{B}), \label{int ind B2}
\end{align}
where $\mathbf{N},\,\mathbf{M}\in\Z^{g}$ are defined by (\ref{hom basis 3}).
The last intersection index in (\ref{int ind B2}) equals $-(\mathbf{m}+\mathbf{N}+\mathbb{H}\mathbf{M})$, which by (\ref{NM stable}) implies $\tau \mathcal{B}= -\mathcal{B}+\mathbb{H}\mathcal{A}+\mathbf{N}\,\mathcal{S}_{b}$.
Finally, since the anti-holomorphic involution $\tau$ inverses orientation we have $\tau \mathcal{S}_{b}=-\mathcal{S}_{b}$. This completes the proof of Proposition A.1.
\end{proof}

\subsection{Action of $\tau$ on the Jacobian and theta divisor of real Riemann surfaces}

In this part, we review known results \cite{Vin}, \cite{DN} about the theta divisor of real Riemann surfaces. Let us choose the canonical homology basis satisfying (\ref{hom basis}) and consider the Jacobian $J=J(\Rs_{g})$ of the real Riemann surface $\Rs_{g}$.
The Abel map (\ref{abel}) $\mu:\Rs_{g}\longmapsto J$ can be extended linearly to all divisors on $\Rs_{g}$, which defines a map on linear equivalence classes of divisors.

The anti-holomorphic involution $\tau$ on $\Rs_{g}$ gives rise 
to an anti-holomorphic involution on the Jacobian
$J$: 
if $\mathcal{D}$ is a positive divisor of degree $n$ on $\Rs_{g}$, then $\tau\,\mathcal{D}$ is the class of the point $(\int_{n\,\tau 
p_{0}}^{\tau 
\mathcal{D}}\omega)=(\int_{n\,p_{0}}^{\mathcal{D}}\tau^{*}\omega)$ in the Jacobian. Therefore  by (\ref{diff hol}), $\tau$ lifts to the anti-holomorphic involution on $J$, denoted also by $\tau$, given by 
\begin{equation}
\tau \zeta=-\overline{\zeta}+n_{\zeta}\,\mu(\tau p_{0}), \quad \forall \zeta\in J,   \label{3.7}
\end{equation}
where $n_{\zeta}\in\Z$, $n_{\zeta}\leq g$, is the degree of the divisor $\mathcal{D}$ such that $\mu(\mathcal{D})=\zeta$.

Now consider the following two subsets of the Jacobian
\begin{align}
S_{1}=\{\zeta\in J;\, \zeta+\tau\,\zeta=\mathrm{i}\pi \,\text{diag}(\mathbb{H})\},  \label{S1}\\
S_{2}=\{\zeta\in J;\, \zeta-\tau\,\zeta=\mathrm{i}\pi \,\text{diag}(\mathbb{H})\}. \label{S2}
\end{align}
In this section we study their intersections $S_{1}\cap (\Theta)$ and 
$S_{2}\cap (\Theta)$ with the theta divisor $(\Theta)$, the set of zeros of the theta function. 

Let us introduce the following notations: $(e_{i})_{k}=\delta_{ik}$, $\mathbb{B}_{i}=\mathbb{B}\,e_{i}$. The following proposition was proved in \cite{Vin}.

\begin{proposition} The set $S_{1}$ is a disjoint union of the tori 
$T_{v}$ defined by
\begin{align}
T_{v}=\{\zeta\in J;\, \zeta=2\mathrm{i}\pi\,(\tfrac{1}{4}\,\text{diag}(\mathbb{H})+\tfrac{v_{1}}{2}\,e_{r+1}+\ldots+\tfrac{v_{g-r}}{2}\,e_{g}) 
+\beta_{1}\,\text{Re}(\B_{1})+\ldots+\beta_{g}\,\text{Re}(\B_{g})\,, \nonumber
\\
\beta_{1},\ldots,\beta_{r}\in\R/2\Z\,,\,\beta_{r+1},\ldots,\beta_{g}\in\R/\Z\}, \label{3.10}
\end{align}
where $v=(v_{1},\ldots,v_{g-r})\in(\Z/2\Z)^{g-r}$ and $r$ is the rank of the matrix $\mathbb{H}$. Moreover, if $\mathcal{R}_{g}(\R)\neq\emptyset$, then $T_{v}\cap (\Theta)=\emptyset$ if and only if the curve is dividing and $v=0$.
\end{proposition}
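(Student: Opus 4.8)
The plan is to treat the two assertions separately: describing $S_{1}$ as a disjoint union of the tori $T_{v}$ is a computation in the real period lattice, while the statement about $(\Theta)$ is the substantial part and rests on Riemann's vanishing theorem together with the topology of $\Rs_{g}(\R)$. For the first part I would fix the base point $p_{0}$ of the Abel map on a real oval, so that $\mu(\tau p_{0})=0$ and, by (\ref{3.7}) and (\ref{diff hol}), the lift of $\tau$ to $\C^{g}$ is simply $\zeta\mapsto-\overline{\zeta}$. Every $\zeta\in J$ can be written in real coordinates $\zeta=2\mathrm{i}\pi\,\mathbf{a}+\B\,\mathbf{b}$ with $\mathbf{a},\mathbf{b}\in\R^{g}$, since $\text{Im}(\B)=\pi\mathbb{H}$ by (\ref{matrix B}) makes $\{2\mathrm{i}\pi\,e_{j},\B_{j}\}$ a real basis of $\C^{g}$. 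Using $\overline{\B}=\B-2\mathrm{i}\pi\mathbb{H}$ one finds $\zeta+\tau\zeta=\zeta-\overline{\zeta}=2\mathrm{i}\pi(2\mathbf{a}+\mathbb{H}\mathbf{b})$, so the defining relation of $S_{1}$, read modulo $\Lambda=2\mathrm{i}\pi\Z^{g}+\B\Z^{g}$, forces the $\B$-component to vanish and reduces to the single real congruence
\[
2\mathbf{a}+\mathbb{H}\mathbf{b}\equiv\tfrac12\,\text{diag}(\mathbb{H})\pmod{\Z^{g}}.
\]

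Solving this congruence yields the first assertion. Here $\mathbf{b}=(\beta_{1},\ldots,\beta_{g})$ is free in $(\R/\Z)^{g}$, and for fixed $\mathbf{b}$ the relation pins down $\mathbf{a}$ only modulo $\tfrac12\Z^{g}$. As $\mathbf{b}$ traverses a loop the term $-\tfrac12\mathbb{H}\mathbf{b}$ moves $\mathbf{a}$ through the subgroup $\tfrac12\mathbb{H}\Z^{g}$, so the genuinely distinct components are indexed by the cokernel of $\mathbb{H}$ reduced modulo $2$; since each normal form of $\mathbb{H}$ in (\ref{hom basis}) satisfies $\text{rank}_{\Z/2}(\mathbb{H})=\text{rank}_{\R}(\mathbb{H})=r$, this cokernel is $(\Z/2\Z)^{g-r}$, which is the index $v$. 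Taking $\mathbf{a}=\tfrac14\,\text{diag}(\mathbb{H})+\sum_{i}\tfrac{v_{i}}{2}\,e_{r+i}$, $\mathbf{b}=0$ as a particular solution and restoring the continuous part through $-\mathrm{i}\pi\mathbb{H}\mathbf{b}+\B\mathbf{b}=\sum_{j}\beta_{j}\,\text{Re}(\B_{j})$ gives exactly (\ref{3.10}); the ranges $\R/2\Z$ for $j\le r$ and $\R/\Z$ for $j>r$ record that $\text{Re}(\B_{j})$ is a genuine lattice vector precisely when the $j$-th column of $\mathbb{H}$ vanishes, and is only a half-lattice vector otherwise.

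For the second assertion I would use Riemann's vanishing theorem to rewrite $\Theta(\zeta)=0$ as $\zeta\equiv\mathbf{K}+\mu(D)$ for an effective divisor $D$ of degree $g-1$, with $\mathbf{K}$ the vector of Riemann constants (the sign of $\mu(D)$ being irrelevant since $\Theta$ is even). Applying $\tau$ to this identity and inserting the constraint $\zeta+\tau\zeta=\mathrm{i}\pi\,\text{diag}(\mathbb{H})$ defining $S_{1}$, together with the reality $\mathbf{K}+\tau\mathbf{K}=\mathrm{i}\pi\,\text{diag}(\mathbb{H})$ of the Riemann constants on a real curve, forces $D+\tau D$ into a fixed $\tau$-invariant class; hence $D$ may be taken to be a real effective divisor. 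In this way $T_{v}\cap(\Theta)\neq\emptyset$ becomes equivalent to the existence of a real effective divisor of degree $g-1$ whose Abel image lies in the component $T_{v}$.

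The heart of the matter, and the step I expect to be the main obstacle, is to decide this existence question from the topology of $\Rs_{g}(\R)$; this is precisely the content of \cite{Vin}. The mechanism is that along each $T_{v}$ the function $\Theta$ is real up to the unimodular factor $\kappa$ of (\ref{conj theta}), and its sign is controlled by how an effective real divisor of degree $g-1$ must distribute over the real ovals. I would argue that on a dividing curve the component $T_{0}$ is the image of real divisor classes for which Riemann--Roch applied to $\tau$-invariant divisors, combined with the count of real sheets of the Abel map, leaves no admissible effective real divisor of degree $g-1$, so that $\Theta$ keeps a constant nonzero sign on $T_{0}$; whereas for a non-dividing curve, or for $v\neq 0$, the oval count does permit such a divisor and thereby produces a zero of $\Theta$. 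Carrying out this sign analysis case by case for the normal forms of $\mathbb{H}$ in (\ref{hom basis}), by reducing to the one-dimensional real theta series on each $\text{Re}(\B_{j})$-circle and tracking its sign changes across the real ovals, is the delicate part of the argument.
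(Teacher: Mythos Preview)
The paper does not prove this proposition at all: the sentence immediately preceding it reads ``The following proposition was proved in \cite{Vin}'', and no argument is supplied. So there is nothing to compare your proposal against on the paper's side; you are in effect sketching a proof that the paper simply imports from Vinnikov.

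On the merits of your sketch: the first assertion is handled correctly. Writing $\zeta=2\mathrm{i}\pi\mathbf{a}+\B\mathbf{b}$, using $\overline{\B}=\B-2\mathrm{i}\pi\mathbb{H}$ to get $\zeta-\overline{\zeta}=2\mathrm{i}\pi(2\mathbf{a}+\mathbb{H}\mathbf{b})$, and then reading off the congruence $2\mathbf{a}+\mathbb{H}\mathbf{b}\equiv\tfrac12\,\text{diag}(\mathbb{H})\pmod{\Z^{g}}$ is exactly the right computation, and your identification of the component set with $(\Z/2\Z)^{g}/\mathbb{H}(\Z/2\Z)^{g}\cong(\Z/2\Z)^{g-r}$ is sound for each normal form of $\mathbb{H}$. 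The recovery of the parametrisation (\ref{3.10}) via $\text{Re}(\B)=\B-\mathrm{i}\pi\mathbb{H}$ is also correct; the different period ranges for $\beta_{j}$ fall out of whether the $j$-th column of $\mathbb{H}$ vanishes, as you say.

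For the second assertion you give the right reduction---Riemann's vanishing theorem turns $T_{v}\cap(\Theta)\neq\emptyset$ into the existence of a real effective divisor of degree $g-1$ in the appropriate class---but you then explicitly defer the decisive step (the sign/parity analysis over the real ovals that distinguishes dividing from non-dividing curves and $v=0$ from $v\neq0$) to \cite{Vin}. That is honest, but it means your proposal is not a self-contained proof of the second half; it is an outline that ends where the real difficulty begins. Since the paper itself only cites \cite{Vin} for this result, your outline already exceeds what the paper provides, but if you intend a complete proof you will still need to carry out Vinnikov's oval-count argument rather than invoke it.
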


The last statement means that among all curves which admit real ovals, the only torus $T_v$ which does not intersect the theta-divisor is the  torus $T_{0}$ corresponding to dividing curves. 
This torus is given by 
\begin{equation}
T_{0}=\{\zeta\in J; \,\zeta=\beta_{1}\,\text{Re}(\B_{1})+\ldots+\beta_{g}\,\text{Re}(\B_{g}),
\,\beta_{1},\ldots,\beta_{r}\in\R/2\Z\,,\,\beta_{r+1},\ldots,\beta_{g}\in\R/\Z \}.  \label{3.11}
\end{equation}
\\
The following proposition was proved in \cite{DN}.

\begin{proposition} The set $S_{2}$ is a disjoint union of the tori 
$\tilde{T}_{v}$ defined by
\begin{equation}
\tilde{T}_{v}=\{\zeta\in J\,;\, \zeta=2\mathrm{i}\pi\left(\alpha_{1}\,e_{1}+\ldots+\alpha_{g}\,e_{g}\right)+\tfrac{v_{1}}{2}\,\B_{r+1}+\ldots+\tfrac{v_{g-r}}{2}\,\B_{g},
\alpha_{1},\ldots,\alpha_{g}\in\R/\Z\}, \label{3.12}
\end{equation}
where $v=(v_{1},\ldots,v_{g-r})\in(\Z/2\Z)^{g-r}$ and $r$ is the rank of the matrix $\mathbb{H}$. Moreover, if $\mathcal{R}_{g}(\R)\neq\emptyset$, then $\tilde{T}_{v}\cap (\Theta)=\emptyset$ if and only if the curve is an M-curve and $v=0$.
\end{proposition}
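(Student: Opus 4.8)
The plan is to follow the strategy Vinnikov used for the set $S_1$ in Proposition A.3, transposed to the anti-diagonal condition defining $S_2$, and then to invoke the real-divisor count of Dubrovin and Natanzon \cite{DN} for the intersection with the theta divisor. The starting observation is that on $S_2$ the theta function is real up to a constant phase. Using the lift (\ref{3.7}) of $\tau$ to $J$, which for a real base point reads $\tau\zeta=-\overline{\zeta}$, the defining relation $\zeta-\tau\zeta=\mathrm{i}\pi\,\text{diag}(\mathbb{H})$ becomes $\zeta+\overline{\zeta}\equiv\mathrm{i}\pi\,\text{diag}(\mathbb{H})\pmod{\Lambda}$, whence $\overline{\zeta}-\mathrm{i}\pi\,\text{diag}(\mathbb{H})\equiv-\zeta$ modulo $\Lambda$. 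Substituting this into the conjugation law (\ref{conj theta}) and using that $\Theta$ is even gives $\overline{\Theta(\zeta)}=\kappa\,\Theta(\zeta)$ for $\zeta\in S_2$; hence the zeros of $\Theta$ on $S_2$ are exactly the zeros of a real-analytic function on each connected component, which is what makes the intersection question tractable.

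For the first assertion, that $S_2$ is a disjoint union of the tori $\tilde{T}_v$, I would solve the congruence $\zeta+\overline{\zeta}\equiv\mathrm{i}\pi\,\text{diag}(\mathbb{H})\pmod{\Lambda}$ explicitly. Writing $\mathbb{B}=\text{Re}(\mathbb{B})+\mathrm{i}\pi\mathbb{H}$, which is (\ref{matrix B}), and using that $\text{Re}(\mathbb{B})$ is negative definite, hence invertible, the real part of $\zeta$ is pinned, modulo the real part of the period lattice, to half-integer combinations of the columns $\text{Re}(\mathbb{B}_j)$, while the imaginary part stays free along the $2\mathrm{i}\pi$-directions, producing the continuous parameters $\alpha_1,\ldots,\alpha_g\in\R/\Z$. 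The rank $r$ of $\mathbb{H}$ then forces that only $g-r$ of the half-period shifts survive as genuinely distinct components, indexed by $v\in(\Z/2\Z)^{g-r}$ through the terms $\tfrac{v_k}{2}\mathbb{B}_{r+k}$; this reproduces the tori $\tilde{T}_v$ of (\ref{3.12}). This step is entirely parallel to the computation of the $T_v$ in Proposition A.3, with the roles of the real and imaginary directions interchanged because $S_2$ is built from $\zeta-\tau\zeta$ rather than $\zeta+\tau\zeta$.

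The second assertion, $\tilde{T}_v\cap(\Theta)=\emptyset$ if and only if $\Rs_g$ is an M-curve and $v=0$, is the substantial part. The plan is to identify, via Riemann's vanishing theorem, the points of $\tilde{T}_v\cap(\Theta)$ with classes of real effective divisors of degree $g-1$ lying in the corresponding shifted component, and then to count such classes using the topology of $\Rs_g(\R)$. Concretely, a point of the real torus lies on the theta divisor precisely when the associated degree-$(g-1)$ class admits an effective representative compatible with $\tau$; the number of components $\tilde{T}_v$ free of such representatives is controlled by a Harnack-type bound on the number of real ovals. The maximal case, $k=g+1$ ovals, is exactly the one in which a single component, the $v=0$ torus carrying the canonical real structure, admits no real effective divisor meeting $(\Theta)$, whereas for every non-maximal curve each $\tilde{T}_v$ intersects $(\Theta)$. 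I would assemble this from the Dubrovin--Natanzon analysis, which supplies precisely this dichotomy for the analogous KP1 reality problem.

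The main obstacle I anticipate is this last counting step: proving that the M-curve condition together with $v=0$ is both necessary and sufficient for $\tilde{T}_v$ to avoid the theta divisor. The ``if'' direction requires showing that the distinguished real component $\tilde{T}_0$ contains the Abel image of no real effective divisor of degree $g-1$, which rests on the fact that for an M-curve the real part of the Jacobian splits into the maximal number of tori with one of them lying entirely off $(\Theta)$; the ``only if'' direction requires exhibiting, on every non-dividing curve and on every non-maximal dividing curve, a real effective divisor whose image sits on each $\tilde{T}_v$. Keeping track of the parity and orientation data entering this divisor correspondence, rather than the torus bookkeeping of the first part, is where the genuine care is needed.
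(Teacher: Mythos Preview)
The paper does not supply its own proof of this proposition: it simply attributes the result to Dubrovin and Natanzon \cite{DN} and states it without argument, exactly as it does for Proposition~A.3 via \cite{Vin}. So there is no in-paper proof to compare against; what you have written is an outline of the very argument the paper is citing.

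On the substance of your sketch: the torus decomposition you describe is correct and routine. Writing $\zeta=2\mathrm{i}\pi\boldsymbol{\alpha}+\B\boldsymbol{\beta}$ and using $\overline{\B}=\B-2\mathrm{i}\pi\mathbb{H}$ turns the $S_2$ condition into an integrality constraint on $2\boldsymbol{\beta}$ and on $\mathbb{H}\boldsymbol{\beta}$, which is exactly what forces $\boldsymbol{\beta}$ to vanish in the first $r$ coordinates and to be a half-integer in the last $g-r$, while $\boldsymbol{\alpha}$ remains free. Your phrasing ``the real part of $\zeta$ is pinned to half-integer combinations of $\text{Re}(\B_j)$'' is slightly imprecise for $S_2$ (it is the $\B$-coordinate $\boldsymbol{\beta}$ that is pinned, not the real part of $\zeta$ per se), but the mechanism you describe is the right one.

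You have also correctly located the real content in the second assertion and correctly identified its source. Your outline of the argument---Riemann's theorem translating $(\Theta)\cap\tilde{T}_v$ into a question about $\tau$-invariant effective divisors of degree $g-1$, followed by a Harnack-type count of real components---is precisely the Dubrovin--Natanzon strategy in \cite{DN}. Since the paper itself invokes that reference rather than reproving it, your plan to do the same is entirely appropriate; the honest flag you raise about the divisor-counting step is exactly the part one would defer to \cite{DN}.
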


\section{Computation of the argument of the fundamental scalar $q_{2}(a,b)$}

This section is devoted to the computation of $\arg\{q_{2}(a,b)\}$, where $q_{2}(a,b)$ is defined by (\ref{q2}).
As before, $\Rs_{g}$ denotes a real compact Riemann surface of genus $g$ with an anti-holomorphic involution $\tau$. The argument of $q_{2}(a,b)$ is computed both in the case $\tau a=b$, as well as in the case $\tau a=a$, $\tau b=b$.

\subsection{Integral representation for $q_{2}(a,b)$}

Assume that $a,b\in\Rs_{g}$ can be connected by a contour which does not intersect basic cycles. Hence we can define the normalized meromorphic differential of the third kind $\Omega_{b-a}$ which has residue $1$ at $b$ 
and residue $-1$ at $a$.

\vspace{0.3cm}
\begin{proposition} Let $a,b$ be distinct points on a compact Riemann surface $\Rs_{g}$ of genus $g$. Denote by $k_{a}$ and $k_{b}$ local parameters in a neighbourhood of $a$ and $b$ respectively. Then the quantity $q_{2}(a,b)$ defined in 
(\ref{q2}) admits the following integral representation
\begin{equation}
q_{2}(a,b)=-\lim_{\begin{array}{cc}
\tilde{b}\rightarrow b \\
\tilde{a}\rightarrow a
\end{array}}
\left[\left(k_{a}(\tilde{a})\,k_{b}(\tilde{b})\right)^{-1}\exp\left\{\int_{\tilde{a}}^{\tilde{b}}\Omega_{b-a}(p)\right\}\right], \label{q2 int}
\end{equation}
where the integration contour between $\tilde{a}$ and $\tilde{b}$, which in the sequel is denoted by $\tilde{\ell}$, does not cross any cycle from the canonical homology basis.
\end{proposition}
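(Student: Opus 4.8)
The plan is to reduce the statement to the classical relation between the normalized differential of the third kind and the prime form, and then perform a local expansion. First I would recall Fay's formula $\Omega_{b-a}(p)=\mathrm{d}_{p}\ln\frac{E(p,b)}{E(p,a)}$. This holds because the right-hand side is a differential of the third kind with simple poles of residue $+1$ at $b$ and $-1$ at $a$ and with vanishing $\mathcal{A}$-periods, hence coincides with the normalized $\Omega_{b-a}$. Integrating along the contour $\tilde{\ell}$, which crosses no basic cycle so that the logarithm stays single-valued, yields
\[
\exp\left\{\int_{\tilde{a}}^{\tilde{b}}\Omega_{b-a}\right\}=\frac{E(\tilde{b},b)\,E(\tilde{a},a)}{E(\tilde{b},a)\,E(\tilde{a},b)}.
\]

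Next I would substitute the definition (\ref{prime}) of the prime form. All four spinor factors $h_{\delta}$ cancel pairwise in this ratio, leaving the purely theta-functional expression
\[
\frac{\Theta[\delta](\int_{b}^{\tilde{b}})\,\Theta[\delta](\int_{a}^{\tilde{a}})}{\Theta[\delta](\int_{a}^{\tilde{b}})\,\Theta[\delta](\int_{b}^{\tilde{a}})}.
\]
I would then expand each factor in the limit $\tilde{a}\to a$, $\tilde{b}\to b$, normalising $k_{a}(a)=k_{b}(b)=0$. Since $\delta$ is odd, $\Theta[\delta](0)=0$, and by (\ref{exp hol diff}) the Abel map satisfies $\int_{a}^{\tilde{a}}\omega_{j}=V_{a,j}\,k_{a}(\tilde{a})+O(k_{a}(\tilde{a})^{2})$, so the two diagonal factors vanish linearly,
\[
\Theta[\delta]\left(\int_{a}^{\tilde{a}}\right)=D_{a}\Theta[\delta](0)\,k_{a}(\tilde{a})+O(k_{a}(\tilde{a})^{2}),\qquad \Theta[\delta]\left(\int_{b}^{\tilde{b}}\right)=D_{b}\Theta[\delta](0)\,k_{b}(\tilde{b})+O(k_{b}(\tilde{b})^{2}),
\]
while the two off-diagonal factors tend to $\Theta[\delta](\int_{a}^{b})$ and $\Theta[\delta](\int_{b}^{a})=-\Theta[\delta](\int_{a}^{b})$, the sign again coming from the oddness of $\Theta[\delta]$. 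Multiplying by $(k_{a}(\tilde{a})\,k_{b}(\tilde{b}))^{-1}$ cancels the linear factors, so the limit equals $-\frac{D_{a}\Theta[\delta](0)\,D_{b}\Theta[\delta](0)}{\Theta[\delta](\int_{a}^{b})^{2}}=-q_{2}(a,b)$ by (\ref{q2}); the overall minus sign in (\ref{q2 int}) then produces the claim.

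The only genuinely non-elementary ingredient is the first step, the representation of $\Omega_{b-a}$ as a logarithmic derivative of a ratio of prime forms; once this classical fact is secured the remainder is a bookkeeping of leading-order terms. The main points to watch are the consistency of the local trivialisations and the normalizations: one must verify that the spinor weights of $E$ cancel in the ratio so that the result is a true scalar, and that the vanishing-$\mathcal{A}$-period normalization of $\Omega_{b-a}$ matches that of the prime form, so that no stray period contributions enter along $\tilde{\ell}$.
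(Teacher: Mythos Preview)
Your proof is correct and follows essentially the same route as the paper's. The paper simply quotes the integrated identity
\[
\int_{\tilde{a}}^{\tilde{b}}\Omega_{b-a}=\ln\frac{\Theta[\delta](\int_{b}^{\tilde{b}})}{\Theta[\delta](\int_{a}^{\tilde{b}})}+\ln\frac{\Theta[\delta](\int_{a}^{\tilde{a}})}{\Theta[\delta](\int_{b}^{\tilde{a}})}
\]
directly in terms of theta functions and then performs the same leading-order expansion you do; your detour through the prime form, followed by the cancellation of the spinor factors $h_{\delta}$, lands on precisely this ratio before the limit is taken.
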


\begin{proof} Notice that the scalar $q_{2}(a,b)$ does not depend on the choice of the contour $\tilde{\ell}$, assuming that $\tilde{\ell}$ lies in the fundamental polygon of the Riemann surface.

Denote by $k_{x}$ a local parameter in a neighbourhood of a point $x\in\Rs_{g}$. To prove (\ref{q2 int}), recall that
\begin{equation}
\int_{\tilde{a}}^{\tilde{b}}\Omega_{b-a}(p)=\ln\frac{\Theta[\delta](\int_{b}^{\tilde{b}})}{\Theta[\delta](\int_{a}^{\tilde{b}})}+\ln\frac{\Theta[\delta](\int_{a}^{\tilde{a}})}{\Theta[\delta](\int_{b}^{\tilde{a}})}.  \label{def diff3}
\end{equation}
Since $\delta$ is an odd non singular characteristic, the expression $\frac{\Theta[\delta](\int_{b}^{p})}{\Theta[\delta](\int_{a}^{p})}$ has a simple zero at $b$ and a simple pole at $a$. Therefore, if we consider $\tilde{a}$ lying in a neighbourhood of $a$, and $\tilde{b}$ lying in a neighbourhood of $b$, we get (with $\alpha_{1},\beta_{1}\neq 0$)
\begin{eqnarray}
\frac{\Theta[\delta](\int_{b}^{\tilde{b}})}{\Theta[\delta](\int_{a}^{\tilde{b}})}=\alpha_{1}\,k_{b}(\tilde{b})+o(k_{b}(\tilde{b})), \label{alpha1}\\
\frac{\Theta[\delta](\int_{a}^{\tilde{a}})}{\Theta[\delta](\int_{b}^{\tilde{a}})}=\beta_{1}\,k_{a}(\tilde{a})+o(k_{a}(\tilde{a})).\label{alpha2}
\end{eqnarray}
Combining (\ref{def diff3}) together with (\ref{alpha1}) and (\ref{alpha2}), we obtain the following relation
\begin{equation}
\lim_{\begin{array}{cc}
\tilde{b}\rightarrow b \\
\tilde{a}\rightarrow a
\end{array}}
\left[\left(k_{a}(\tilde{a})\,k_{b}(\tilde{b})\right)^{-1}\exp\left\{\int_{\tilde{a}}^{\tilde{b}}\Omega_{b-a}(p)\right\}\right]=\alpha_{1}\beta_{1}. \label{q2 lim}
\end{equation}
Moreover, using the definition (\ref{q2}) of $q_{2}(a,b)$, it follows from (\ref{alpha1}) and (\ref{alpha2}) that $\alpha_{1}\beta_{1}=-q_{2}(a,b)$,
which by (\ref{q2 lim}) completes the proof. 
\end{proof}

\subsection{Argument of $q_{2}(a,b)$ when $\tau a=b$} 
Here we compute the argument of the fundamental scalar $q_{2}(a,b)$ defined in (\ref{q2}) in the case where $\tau a=b$.
Let us choose the homology basis satisfying (\ref{hom basis}). 

\vspace{0.3cm}
\begin{proposition} Let $a,b\in\Rs_{g}$ be distinct points such that $\tau a=b$, with local parameters satisfying the relation $\overline{k_{b}(\tau p)}=k_{a}(p)$ for any point $p$ lying in a neighbourhood of $a$. Consider a contour $\ell$ connecting points $a$ and $b$; assume that $\ell$ is lying in the fundamental polygon of the Riemann surface $\Rs_{g}$. Then the scalar $q_{2}(a,b)$ is real, and its sign is given by:
\begin{enumerate}
\item if $\ell$ intersects the set of real ovals of $\Rs_{g}$ only once, and if this intersection is transversal, then $q_{2}(a,b)<0$,
\item if $\ell$ does not cross any real oval, then $q_{2}(a,b)>0$.
\end{enumerate}
\end{proposition}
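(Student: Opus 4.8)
The plan is to read the sign of $q_{2}(a,b)$ off the integral representation (\ref{q2 int}), the entire argument resting on one transformation law for the third kind differential: I claim that
\[\overline{\tau^{*}\Omega_{b-a}}=-\Omega_{b-a}.\]
To establish it, note first that $\tau$ interchanges $a$ and $b$, so $\overline{\tau^{*}\Omega_{b-a}}$ is again a meromorphic differential with poles only at $a$ and $b$; a local computation using $\overline{k_{b}(\tau p)}=k_{a}(p)$ shows that conjugating the pulled-back simple poles sends the residue $+1$ at $b$ to $+1$ at $a$ and the residue $-1$ at $a$ to $-1$ at $b$, which are precisely the residues of $-\Omega_{b-a}$. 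Second, since the chosen homology basis satisfies $\tau\mathcal{A}=\mathcal{A}$ by (\ref{hom basis}) and $\Omega_{b-a}$ has vanishing $\mathcal{A}$-periods, so does $\overline{\tau^{*}\Omega_{b-a}}$; uniqueness of the normalized third kind differential then forces the displayed identity.

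To exploit this I would evaluate (\ref{q2 int}) along a $\tau$-symmetric family of endpoints, taking $\tilde{b}=\tau\tilde{a}$ so that $\tilde{a}\to a$ automatically drives $\tilde{b}\to b$. With this choice the coordinate relation gives $k_{b}(\tilde{b})=\overline{k_{a}(\tilde{a})}$, so the prefactor in (\ref{q2 int}) equals $(k_{a}(\tilde{a})\,k_{b}(\tilde{b}))^{-1}=|k_{a}(\tilde{a})|^{-2}>0$. Hence the sign of $q_{2}(a,b)$ is opposite to that of the limiting value of $\exp\{\int_{\tilde{a}}^{\tilde{b}}\Omega_{b-a}\}$, and the problem reduces to controlling the imaginary part of $I:=\int_{\tilde{\ell}}\Omega_{b-a}$. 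Using the transformation law together with the substitution $p\mapsto\tau p$ in the integral, one finds $\overline{I}=\int_{-\tau\tilde{\ell}}\Omega_{b-a}$, where $-\tau\tilde{\ell}$ runs between the same endpoints $\tilde{a}$ and $\tau\tilde{a}$ as $\tilde{\ell}$. Therefore
\[2\mathrm{i}\,\mathrm{Im}(I)=I-\overline{I}=\oint_{\tilde{\ell}+\tau\tilde{\ell}}\Omega_{b-a},\]
the period of $\Omega_{b-a}$ over the closed $\tau$-invariant loop $\tilde{\ell}+\tau\tilde{\ell}$.

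Evaluating this period is where the topology enters. In $H_{1}(\Rs_{g})$ the loop $\ell+\tau\ell$ equals $\mathbf{N}\mathcal{A}$ by (\ref{C}), and the $\mathcal{A}$-periods of $\Omega_{b-a}$ vanish; the only remaining contributions come from the way $\tilde{\ell}$ and $\tau\tilde{\ell}$ wind around the punctures $a$ and $b$, each full encirclement contributing $\mp2\pi\mathrm{i}$ through the residues $\mp1$. Consequently $\mathrm{Im}(I)\in\pi\Z$, which already proves that $q_{2}(a,b)$ is real, and shows that its sign is $(-1)^{m+1}$, where $m$ is the net winding number of $\ell+\tau\ell$ about the punctures.

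The remaining, genuinely delicate, step is to pin down $m\bmod 2$ from the geometry of $\ell$, and this is the main obstacle. I would do it by letting $\tilde{\ell}$ degenerate onto $\ell$ and tracking how the symmetric loop $\tilde{\ell}+\tau\tilde{\ell}$ closes up near $a$ and $b$: since $\overline{\tau^{*}\Omega_{b-a}}=-\Omega_{b-a}$ forces $\Omega_{b-a}$ to be purely imaginary along the fixed-point set $\Rs_{g}(\R)$, the accumulated phase is governed entirely by the transversal intersections of $\ell$ with the real ovals, each crossing flipping the parity of $m$. Organizing this count with a picture of the opening symmetric loop around a puncture (with the correct orientation of the crossings) should yield $m$ even when $\ell$ meets the real ovals exactly once transversally, hence $q_{2}(a,b)<0$, and $m$ odd when $\ell$ avoids the real ovals, hence $q_{2}(a,b)>0$. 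This intersection bookkeeping — essentially counting half-turns of the symmetric loop as it detaches from the puncture — is the crux; the rest of the argument is formal.
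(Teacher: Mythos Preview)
Your reduction is exactly the paper's: the same transformation law $\overline{\tau^{*}\Omega_{b-a}}=-\Omega_{b-a}$, the same choice $\tilde{b}=\tau\tilde{a}$ making the coordinate prefactor positive, and the same conclusion that $2\mathrm{i}\,\mathrm{Im}(I)$ equals the period of $\Omega_{b-a}$ over the $\tau$-invariant loop $\tilde{\ell}+\tau\tilde{\ell}$, which in $H_{1}(\Rs_{g}\setminus\{a,b\})$ decomposes as $\mathbf{N}\mathcal{A}+\alpha\,\mathcal{S}_{b}$, giving $\arg q_{2}(a,b)=\pi(1+\alpha)$ with $\alpha=(\tau\tilde{\ell}+\tilde{\ell})\circ\ell$ --- your $m$.

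Where you and the paper part ways is only in how the parity of $\alpha$ is pinned down, and the paper's argument is more direct than your phase-tracking picture. In case~(1) the paper takes both $\ell$ and $\tilde{\ell}$ through the \emph{same} point $p_{0}$ of the real oval; the closed loop $\tau\tilde{\ell}+\tilde{\ell}$ then splits at $p_{0}$ into two closed pieces $c\tilde{\ell}_{1}$ and $c\tilde{\ell}_{2}$ exchanged by $\tau$, and since $\tau$ reverses intersection numbers while $\tau\ell=-\ell+\mathbf{N}\mathcal{A}$, one gets $c\tilde{\ell}_{1}\circ\ell=c\tilde{\ell}_{2}\circ\ell$, hence $\alpha$ even. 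In case~(2) the paper takes a thin $\tau$-invariant annulus $\mathcal{V}$ around $\tau\tilde{\ell}+\tilde{\ell}$ containing no fixed points of $\tau$; then $\tau$ swaps the two sides of the loop inside $\mathcal{V}$, so $a$ and $b=\tau a$ lie on opposite sides, forcing $\ell$ to cross the loop an odd number of times. These two purely topological moves replace the analytic ``purely imaginary on the real ovals'' bookkeeping you propose, and they dispose of the step you flagged as the crux without any limiting or orientation-tracking argument near the punctures.
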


\begin{proof}
Let $\tilde{a},\tilde{b}\in\Rs_{g}$ lie in a neighbourhood of $a$ and $b$ respectively, and $\tau \tilde{a}=\tilde{b}$. Denote by $\tilde{\ell}$ an oriented contour connecting $\tilde{a}$ and $\tilde{b}$. First, let us check that
\begin{equation}
\arg\{q_{2}(a,b)\}= \pi(1+\alpha),  \label{argum1 q2}
\end{equation}
where $\alpha=(\tau \tilde{\ell}+\tilde{\ell})\circ \ell$.
The integral representation (\ref{q2 int}) of $q_{2}(a,b)$ leads to
\begin{equation}
\arg\{q_{2}(a,b)\}=  \pi + \text{Im}\left(\int_{\tilde{\ell}}\Omega_{b-a}(p)\right). \label{argum2 q2}
\end{equation}
Using the action (\ref{hom basis 1}) of $\tau$ on the $\mathcal{A}$-cycles in the homology group $H_{1}(\Rs_{g}\setminus\{a,b\})$, we get the following action of $\tau$ on the normalized meromorphic differentials of third kind $\Omega_{b-a}$:
\begin{equation}
\overline{\tau^{*}\Omega_{b-a}}=-\Omega_{ b -  a}, \label{diff 3kind}
\end{equation}
(notice that $\tau a=b$). Hence, the last term in the right hand side of (\ref{argum2 q2}) is equal to $\frac{1}{2\mathrm{i}}\int_{\tau \tilde{\ell}+\tilde{\ell}}\Omega_{b-a}(p)$. 
The closed contour $\tau \tilde{\ell}+\tilde{\ell}$ admits the following decomposition in $H_{1}(\Rs_{g}\setminus\{a,b\})$,
\begin{equation}
\tau \tilde{\ell}+\tilde{\ell}=\mathbf{N}\mathcal{A}+\alpha\,\mathcal{S}_{b}, \label{alpha B}
\end{equation}
where $\alpha=(\tau \tilde{\ell}+\tilde{\ell})\circ \ell$ and $\mathbf{N}\in\Z^{g}$ is defined in (\ref{hom basis 3}). Since the differential $\Omega_{b-a}$ has vanishing $\mathcal{A}$-periods, by (\ref{alpha B}) we obtain
\begin{equation}
\int_{\tau \tilde{\ell}+\tilde{\ell}}\Omega_{b-a}(p)= 2\mathrm{i}\pi\alpha, \label{pol per1}
\end{equation}
which leads to (\ref{argum1 q2}). Therefore, the sign of $q_{2}(a,b)$ depends on the parity of the intersection index $\alpha=(\tau \tilde{\ell}+\tilde{\ell})\circ \ell$.
\\\\
Let us now consider cases (1) and (2) separatly.
\\\\
\textit{Case (1).} Assume that each of the contours $\ell$ and $\tilde{\ell}$ intersects the set of real ovals of $\Rs_{g}$ transversally only once, and, moreover, this intersection point is the same for $\ell$ and $\tilde{\ell}$; we denote it by $p_{0}\in\Rs_{g}(\R)$.
Then the closed contour $\tau \tilde{\ell}+\tilde{\ell}$ can be decomposed into a sum of two closed contours $c\tilde{\ell_{1}}$ and $c\tilde{\ell_{2}}$, having the common point $p_{0}$, and such that $\tau$ sends the set of points $\left\{c\tilde{\ell_{1}}\right\}$ into the set of points
$\left\{c\tilde{\ell_{2}}\right\}$. Therefore, if the orientation of $c\tilde{\ell_{1}}$ and $c\tilde{\ell_{2}}$ is inherited from the orientation of $\tau \tilde{\ell}+\tilde{\ell}$, we have $\tau c\tilde{\ell_{1}}= c\tilde{\ell_{2}}$ as elements of $H_{1}(\Rs_{g}\setminus\{a,b\})$. Then,
\begin{equation}
c\tilde{\ell_{1}}\circ\ell=-\tau c\tilde{\ell_{1}}\circ\tau \ell=-c\tilde{\ell_{2}}\circ(-\ell+\mathcal{A}\mathbf{N})
=c\tilde{\ell_{2}}\circ\ell, \nonumber
\end{equation}
where we used the action (\ref{hom basis 1}) of $\tau$ on the contour $\ell$, and the fact that the intersection index between $c\tilde{\ell_{2}}$ and $\mathcal{A}$-cycles is zero by (\ref{alpha B}). Hence the intersection index $\alpha$ satisfies
\[\alpha=(\tau \tilde{\ell}+\tilde{\ell})\circ\ell=(c\tilde{\ell_{1}}+c\tilde{\ell_{2}})\circ\ell=2,\]
which by (\ref{argum1 q2}) leads to $q_{2}(a,b)<0$.
\\\\
\textit{Case (2).} Let $\mathcal{V}$ be a ring neighbourhood of the path $\tau \tilde{\ell}+\tilde{\ell}$, bounded by two closed paths denoted by $\partial\mathcal{V}_{1}$ and $\partial\mathcal{V}_{2}$, in such way that the path $\ell$ lies in $\mathcal{V}$ and $\tau\left\{\partial\mathcal{V}_{1}\right\}=\left\{\partial\mathcal{V}_{2}\right\}$. We assume that $\mathcal{V}$ is chosen such that no point of $\mathcal{V}$ is invariant under $\tau$. Then $\mathcal{V}$ can be decomposed into two connected components denoted by $\mathcal{V}_{1}$ and $\mathcal{V}_{2}$ as follows: $\mathcal{V}_{1}$ is bounded by $\partial\mathcal{V}_{1}$ and $\tau \tilde{\ell}+\tilde{\ell}$, and $\mathcal{V}_{2}$ is bounded by $\partial\mathcal{V}_{2}$ and $\tau \tilde{\ell}+\tilde{\ell}$. Then $\tau\mathcal{V}_{1}=\mathcal{V}_{2}$ since the set of points $\left\{\tau \tilde{\ell}+\tilde{\ell}\right\}$ is invariant under  $\tau$. In particular if $a\in\mathcal{V}_{1}$, then $b\in\mathcal{V}_{2}$. Thus the intersection index
$\alpha=(\tau \tilde{\ell}+\tilde{\ell})\circ\ell$ is odd, which leads to $q_{2}(a,b)>0$.
\end{proof}

\subsection{Argument of $q_{2}(a,b)$ when $\tau a=a$ and $\tau b=b$} 
Now let us consider the case where $a$ and $b$ are invariant with respect to $\tau$. 
\\\\
\textbf{Proposition B.3.}\textit{ Let $a,b\in\Rs_{g}(\R)$ with local parameters satisfying $\overline{k_{a}(\tau p)}=k_{a}(p)$ for any point $p$ lying in a neighbourhood of $a$ and $\overline{k_{b}(\tau p)}=k_{b}(p)$ for any point $p$ lying in a neighbourhood of $b$. Denote by $\left\{\mathbf{\mathcal{A}},\mathbf{\mathcal{B}},\ell\right\}$ the generators of the relative homology group $H_{1}(\Rs_{g},\{a,b\})$ (see Section A.2). Let $\tilde{a},\tilde{b}\in\Rs_{g}(\R)$ lie in a neighbourhood of $a$ and $b$ respectively, and denote by $\tilde{\ell}$ an oriented contour connecting $\tilde{a}$ and $\tilde{b}$. Then the argument of the scalar $q_{2}(a,b)$ is given by 
\begin{equation}
\arg\{q_{2}(a,b)\}= \arg\{k_{a}(\tilde{a})\,k_{b}(\tilde{b})\}+\pi\,\left(1+\alpha+\frac{1}{2}\left\langle \mathbb{H}\mathbf{M},\mathbf{M}\right\rangle\right)-\frac{1}{2\mathrm{i}}\left(\left\langle \B\,\mathbf{M},\mathbf{M}\right\rangle+2\,\left\langle \mathbf{r} ,\mathbf{M}\right\rangle\right), \label{arg q2}
\end{equation}
where $\alpha$ equals the intersection index $(\tau\tilde{\ell}-\tau\tilde{\ell})\circ \ell$. Here $\mathbf{r}=\int_{\ell}\omega$, and $\mathbf{M}\in\Z^{g}$ is defined in (\ref{hom basis 3}).}

\begin{proof}
From the integral representation (\ref{q2 int}) of $q_{2}(a,b)$ we get
\begin{equation}
\arg\{q_{2}(a,b)\}= \pi +\arg\{k_{a}(\tilde{a})\,k_{b}(\tilde{b})\}+  \text{Im}\left(\int_{\tilde{\ell}}\Omega_{b-a}(p)\right). \label{arg1 q2}
\end{equation}
Considering the action (\ref{hom basis 4}) of $\tau$ on the $\mathcal{A}$-cycles, due to the uniqueness of the normalized differential of the  third kind $\Omega_{b-a}$, we obtain
\begin{equation}
\overline{\tau^{*}\Omega_{b-a}}=\Omega_{b-a}+\sum_{k}M_{k} \,\omega_{k}, \label{diff 3kind real}
\end{equation}
where $\omega_{k}$ are the normalized holomorphic differentials. Therefore
\[\text{Im}\left(\int_{\tilde{\ell}}\Omega_{b-a}(p)\right)\equiv\frac{1}{2\mathrm{i}}\left(\int_{\tilde{\ell}}\Omega_{b-a}-\int_{\tau\tilde{\ell}}\Omega_{b-a}-\sum_{k}M_{k}\int_{\tau \tilde{\ell}}\omega_{k}\right).\]
The closed contour $\tau\tilde{\ell}-\tilde{\ell}\in H_{1}(\Rs_{g})$ satisfies $\tau (\tau\tilde{\ell}-\tilde{\ell})=-(\tau\tilde{\ell}-\tilde{\ell})$; thus by (\ref{C stable}) it has the following decomposition in $H_{1}(\Rs_{g}\setminus\{a,b\})$
\begin{equation}
\tau\tilde{\ell}-\tilde{\ell}= \mathbf{N}\mathcal{A}+\mathbf{M}\mathcal{B}+\alpha\,\mathcal{S}_{b},  \label{tau tilde stable}
\end{equation}
for some $\alpha\in\Z$, where $\mathbf{N},\,\mathbf{M}\in\Z^{g}$ are defined in (\ref{hom basis 3}).
Hence we get
\begin{equation}
\text{Im}\left(\int_{\tilde{\ell}}\Omega_{b-a}(p)\right)\equiv\frac{1}{2\mathrm{i}}\left(-\int_{\mathcal{B}\mathbf{M}}\Omega_{b-a}+2\mathrm{i}\pi\alpha-\sum_{k}M_{k}\left\{\int_{\tilde{\ell}}\omega_{k}+\sum_{j}(\mathbb{B}_{jk}-\mathrm{i}\pi \,\mathbb{H}_{jk})M_{j}\right\}\right), \label{prop 3.5 Im1}
\end{equation}
where we used the fact that the normalized differential $\Omega_{b-a}$ has vanishing $\mathcal{A}$-periods, and that the integral over the small contour $\mathcal{S}_{b}$ of the holomorphic differentials is zero. Since by definition the contour $\ell$ does not cross any cycles of the absolute homology basis, 
\begin{equation}
\int_{\mathcal{B}\mathbf{M}}\Omega_{b-a}=\left\langle \mathbf{M},\mathbf{r}\right\rangle. \label{prop 3.5 Im2}
\end{equation}
Hence we get
\begin{equation}
\text{Im}\left(\int_{\tilde{\ell}}\Omega_{b-a}(p)\right)\equiv\pi \alpha+\frac{\pi}{2}\left\langle \mathbb{H}\mathbf{M},\mathbf{M}\right\rangle-\frac{1}{2\mathrm{i}}\left(\left\langle \mathbf{M},\tilde{\mathbf{r}}+\mathbf{r}\right\rangle+\left\langle \B\mathbf{M},\mathbf{M}\right\rangle\right), \label{prop 3.5 Im3}
\end{equation}
where $\tilde{\mathbf{r}}=\int_{\tilde{\ell}}\omega$. 
Considering the limit when $\tilde{a}$ tends to $a$ and $\tilde{b}$ tends to $b$, we obtain (\ref{arg q2}).
\end{proof}

\end{document}